\numberwithin{equation}{section}
\newcommand{\cris}[1]{{{\color{black}{}  #1}}}
\newcommand{\implicit}[2]
{\varphi_s(#1, #2)}
\def\be{\begin{equation}}
\def\ee{\end{equation}}
\def\bea{\begin{eqnarray}}
\def\eea{\end{eqnarray}}
\def\<{\langle}
\def\>{\rangle}
\def\~{\tilde}
\def\a{\alpha}
\newcommand{\R}{\mathbb{R}}
\newcommand{\C}{\mathbb{C}}
\newcommand{\N}{\mathbb{N}}
\newtheorem{theorem}{Theorem}[section]
\newtheorem{definition}[theorem]{Definition}
\newtheorem{lemma}[theorem]{Lemma}
\newtheorem{corollary}[theorem]{Corollary}
\newtheorem{proposition}[theorem]{Proposition}
\newtheorem*{remark}{Remark}
\newcommand{\ID}{I}
\DeclareMathOperator{\tr}{tr}
\newcommand{\oa}{{a}}
\newcommand{\oad}{{\bar{a}}}
    \newcommand{\parti}{
\begin{tikzpicture}
\foreach \a in {1,2,3,4,5} {
    \begin{scope}[shift={(0.5*\a,-3)}]
      \draw (0,0) rectangle (0.5,0.5);
    \end{scope}
  }
  \foreach \a in {1,2,3,4,5} {
    \begin{scope}[shift={(0.5*\a,-2.5)}]
      \draw (0,0) rectangle (0.5,0.5);
    \end{scope}
  }
  \foreach \a in {4,5} {
    \begin{scope}[shift={(0.5*\a,-2)}]
      \draw (0,0) rectangle (0.5,0.5);
    \end{scope}
  }
  \foreach \a in {4,5} {
    \begin{scope}[shift={(0.5*\a,-1.5)}]
      \draw (0,0) rectangle (0.5,0.5);
    \end{scope}
  }
  \foreach \a in {5} {
    \begin{scope}[shift={(0.5*\a,-1)}]
      \draw (0,0) rectangle (0.5,0.5);
    \end{scope}
  }
  \foreach \a in {5} {
    \begin{scope}[shift={(0.5*\a,-0.5)}]
      \draw (0,0) rectangle (0.5,0.5);
    \end{scope}
  }
  \foreach \a in {5} {
    \begin{scope}[shift={(0.5*\a,0)}]
      \draw (0,0) rectangle (0.5,0.5);
    \end{scope}
  }
  \node [below right] at (1,1) {$m_7=1$};
  \node [below right] at (0.5,-0.5) {$m_4=1$};
  \node [below right] at (-1,-1.5) {$m_2=3$};
  

  \node [below left] at (1.1,-3.5) {$x_{1}$};
  \node [below left] at (2.2,-3.55) {$\ldots$};
  \node [below left] at (3.15,-3.5) {$x_5$};
  
  
\end{tikzpicture}}
    \newcommand{\subparti}{
\begin{tikzpicture}
 \draw[fill=gray!30] (0.5,-2) rectangle (1,-3);
 \draw[fill=gray!30] (1.5,-2) rectangle (2,-3);
 \draw[fill=gray!30] (2.5,0.5) rectangle (3,-3);
\foreach \a in {1,2,3,4,5} {
    \begin{scope}[shift={(0.5*\a,-3)}]
      \draw(0,0) rectangle (0.5,0.5);
    \end{scope}
  }
  \foreach \a in {1,2,3,4,5} {
    \begin{scope}[shift={(0.5*\a,-2.5)}]
      \draw (0,0) rectangle (0.5,0.5);
    \end{scope}
  }
  \foreach \a in {4,5} {
    \begin{scope}[shift={(0.5*\a,-2)}]
      \draw (0,0) rectangle (0.5,0.5);
    \end{scope}
  }
  \foreach \a in {4,5} {
    \begin{scope}[shift={(0.5*\a,-1.5)}]
      \draw (0,0) rectangle (0.5,0.5);
    \end{scope}
  }
  \foreach \a in {5} {
    \begin{scope}[shift={(0.5*\a,-1)}]
      \draw (0,0) rectangle (0.5,0.5);
    \end{scope}
  }
  \foreach \a in {5} {
    \begin{scope}[shift={(0.5*\a,-0.5)}]
      \draw (0,0) rectangle (0.5,0.5);
    \end{scope}
  }
  \foreach \a in {5} {
    \begin{scope}[shift={(0.5*\a,0)}]
      \draw (0,0) rectangle (0.5,0.5);
    \end{scope}
  }
  \node [below right] at (1,1) {$\eta_7=1$};
  \node [below right] at (-1,-1.5) {$\eta_2=2$};
  

  \node [below left] at (1.1,-3.5) {$x_{1}$};
  \node [below left] at (2.2,-3.55) {$x_3$};
  \node [below left] at (3.15,-3.5) {$x_5$};
  
  
\end{tikzpicture}}
\begin{document}

\begingroup\parindent0pt
\centering
\begingroup\LARGE
\bf
Exact solution of an integrable \\
non-equilibrium particle system
\par\endgroup
\vspace{3.5em}
\begingroup\large
{\bf Rouven Frassek}$\,^{a,b}$, 
{\bf Cristian Giardin\`{a}}$\,^b$
\par\endgroup
\vspace{2em}
\begingroup\sffamily\footnotesize

$^a\,$Laboratoire de Physique de l’École Normale Supérieure, \\CNRS, Université PSL, Sorbonne Universités, \\24 rue Lhomond, 75005 Paris, France\\
\vspace{1em}
$^b\,$University of Modena and Reggio Emilia, FIM,\\
Via G. Campi 213/b, 41125 Modena, Italy\\
\vspace{1em} 
\par\endgroup
\vspace{5em}


\vspace{1.cm}
\begin{abstract}
\noindent
We consider the integrable family of
symmetric boundary-driven interacting particle systems 
that arise from the non-compact XXX 
Heisenberg model in one dimension 
with open boundaries.
In contrast to the well-known symmetric exclusion process,  the number
of particles at each site is unbounded.
 We show that a finite chain of $N$ sites connected at its ends to two  reservoirs can be solved exactly, 
i.e. the factorial moments of the non-equilibrium steady-state can be written in closed form  for each $N$. 
The solution relies on probabilistic arguments and techniques inspired by integrable systems. 
It is obtained in two steps:  i) the introduction of a dual absorbing process reducing the problem to a finite
number of particles; ii) the solution of the dual dynamics exploiting a symmetry obtained
from the Quantum Inverse Scattering Method. 
Long-range correlations are computed in the finite-volume system.
The exact solution allows  to prove by a direct computation that, in  the thermodynamic limit, the system approaches local equilibrium.
A by-product of the solution is the algebraic construction of a direct mapping between the 
non-equilibrium steady state
and
the equilibrium reversible measure.
%

\end{abstract}

\endgroup

\thispagestyle{empty}
\setcounter{tocdepth}{2}

\newpage

\section{Introduction}

Systems of interacting particles with stochastic dynamics
have been studied for a long time in non-equilibrium
statistical mechanics \cite{liggett2012interacting,demasi2006mathematical}. Of particular interest is the case in which the system is subject to a driving that 
produces a non-zero current in the steady state.
Restricting for simplicity to one-dimension chains with 
nearest neighbours interaction, we shall consider here the
set-up of {\em boundary-driven systems}. These are open
chains on a segment of finite length, where particles 
move in the bulk jumping with identical rates to the right or 
to the left, however the system is in contact with two 
reservoirs at its ends, injecting/removing particles at 
different rates, therefore creating a current.

\medskip
Despite their simplicity, relatively few exactly soluble cases of boundary-driven interacting particle systems
are known (e.g. the simple symmetric exclusion process \cite{Derrida_1993,Derrida2007}, the zero-range process \cite{demasi-ferrari, levine-mukamel-schutz}, the Ginzburg-Landau model \cite{demasi-olla-presutti}), especially when the invariant 
measure is not a product measure.
One of the main aim of exact solutions is to understand differences
and similarities between non-equilibrium steady states
and their equilibrium counterparts.
In one-dimensional boundary-driven systems, this amounts to comparing 
the process having reservoirs with different densities at the two ends to the process with identical reservoirs.

\medskip
The purpose of this paper is to present the solution
of a {\em family of models that can be exactly solved}: 
the boundary-driven particle systems introduced in \cite{Frassek:2019vjt}
related to the open non-compact Heisenberg XXX model in one dimension.
Our solution of the boundary-driven model relies on the combination of probabilistic arguments together with
techniques inspired by integrable systems.
Previous studies of the non-compact XXX Heisenberg spin chain include \cite{Lipatov:1993yb,Faddeev:1994zg,Derkachov:1999pz,Beisert:2003jj}. 
We further remark that the bulk-driven asymmetric version of our models has been previously studied by Bethe ansatz in \cite{Sasamoto,Povolotsky, barraquand}, in the context of KPZ universality class.

\medskip
As a by-product of our exact solution we will have the 
additional result of algebraically constructing a 
{\em  mapping between the non-equilibrium and the equilibrium processes}. 
More precisely we shall provide a (non-local) transformation between the generator of the 
non-equilibrium process and the generator of the equilibrium one with equal 
densities at the boundaries. Macroscopically, this mapping was previously observed  
by Tailleur, Kurchan and Lecomte \cite{Tailleur} in the context of the Macroscopic Fluctuation Theory
\cite{jona}, where a non-local form of the density large deviation functional emerges.

\medskip
We describe here the simplest instance belonging to the 
family of exactly solvable processes considered in this paper.
For reasons that will be clear below, we shall call 
this example the {\em open symmetric harmonic process}.
Consider a one-dimensional system of $N$ sites. 
Each site $i$, $1 \le i \le N$, is occupied by an integer 
number $m_i$ of particles. The system evolves 
(in continuous time, using independent clocks with 
exponential laws) according to the following rules:
\begin{itemize}
\item 
in the bulk, a pile of $k$ particles ($1\le k \le m_i)$ 
is moved  from site $i$ ($2\le i \le N-1$)
either to site $i+1$ or to site $i-1$, with rate $1/k$;
\item 
at the left boundary, a pile of $k$ particles is moved from site $1$ and is 
either placed at site $2$ or is removed from the system, with  rate $1/k$ ($1\le k \le m_1)$. Furthermore, a stack of $k$ particles 
($k \ge 1$) is created at site $1$ at rate $\beta_L^k/k$,
with $0<\beta_L<1$. This models the contact with a reservoir 
with an average density of particles $\rho_L = \beta_L/(1-\beta_L)$;
\item 
similarly, at the right boundary, 
a pile of $k$ particles is moved 
from site $N$ and is 
either placed at site $N-1$ or is removed from the system with rate $1/k$ ($1\le k \le m_N)$. Moreover, $k$ particles 
($k \ge 1$) are created at site $N$ at rate $\beta_R^k/k$,
with $0<\beta_R<1$. This models the interaction with a reservoir 
with an average density of particles $\rho_R = \beta_R/(1-\beta_R)$  
\end{itemize}

The name ``symmetric harmonic process'' is explained by
considering the amount of time spent by the system
in the configuration $(m_1,m_2,\ldots,m_N)$ before 
jumping to a new configuration, which
results in an  exponentially 
distributed random variable with parameter 
 $ \sum_{i=1}^{N} 2 h(m_i)  - \log(1-\beta_L)  - \log(1-\beta_R)$,
where
\begin{equation}
\label{eq:harmonic}
h(n) = \sum_{k=1}^n \frac1k
\end{equation}
is the $n$-th harmonic number. 
In the more general version of the exactly solvable
process, the rate $1/k$ (with $1\le k \le n$) at which a pile of $k$ particles is moved
from a site containing $n$ particles, 
is replaced by a suitable function $\implicit{k}{n}$
(see Section~\ref{sec:model}) labelled by the
parameter $s>0$. {The process is symmetric since the rates of particles moving to the left coincide with the rates of particles moving to the right. }

\medskip
These models are interesting for a number of reasons.
Unlike the equilibrium case ($\beta_L=\beta_R$), 
the density profile in the non-equilibrium 
steady state ($\beta_L \neq \beta_R$) is not uniform,
and is related to a non-zero current by the
Fourier's law. Being the interaction among particles
of zero-range type (the transition rates are just  
functions of the number of particles at the departure site),
one might think that their non-equilibrium 
stationary state is factorized, as
it happens in the ordinary zero-range
process \cite{demasi-ferrari, levine-mukamel-schutz}. Instead, as we shall see,
they provide examples of non-equilibrium systems with long-range spatial and temporal correlations.
The difference with respect to the ordinary 
zero range-process arises from the movement 
of piles of particles, that prevents factorization
of the steady state. Rather,
 truncated correlation between 
particle numbers in the stationary state will be proved to be non-zero and actually positive.

\medskip
In this paper we rigorously study the invariant measure
of the open symmetric harmonic process, and its
generalized version as well. 
\cris{More precisely,
the factorial moments of the stationary state of a finite chain of length $N$ are obtained. The stationary state can then be reconstructed 
for all values of the boundaries parameters 
$\rho_L,\rho_R$ by using a standard inversion formula.}
By taking the thermodynamic
limit $N\to\infty$, approach to the Gibbs distribution is proven
and a linear density profile $\rho(u)= \rho_L + u(\rho_R-\rho_L)$ 
is obtained. We prove that around each macroscopic point $0\le u \le 1$,
which corresponds to the micro-point $\lfloor u N\rfloor$, there
is a local equilibrium distribution
with density $\rho(u)$. 
Microscopically, we show that correlations are weak but long-ranged
and the stationary measure is far from a product measure.
Hence we expect non-trivial properties of fluctations,
with a non-local structure of the density large deviation
functional, as predicted by the Macroscopic Fluctuation Theory \cite{jona}.

\medskip
Our proof starts from a duality relation with an absorbing dual process \cite{kipnis1982heat,schutz1997duality}.
After this we compute the absorption probabilities of dual
particles by using a non-trivial symmetry that is identified
by means of the Quantum Inverse Scattering Method \cite{Faddeev:1996iy,Korepin:1993kvr}.
As a result, we find closed expressions for the factorial moments
of the non-equilibrium steady state.
The mere existence of an absorbing dual implies
a (non-local) mapping between equilibrium and non-equilibrium stationary state.
The open symmetric harmonic process studied here 
is special in that the mapping can be fully elucidated at microscopic level from its algebraic representation. To our knowledge,
the only known other model where this has been achieved is the
open symmetric exclusion process \cite{Frassek:2020omo,Frassek:2019imp}, which is solvable by the Matrix Product Ansatz \cite{Derrida_1993}.

\paragraph{Organization of the article.}	
The paper is organized as follows.
In Section~\ref{model} we introduce
the model and state our main results.
In Section~\ref{sec:algebraic} we provide the algebraic 
description of the Markov  generator in terms of the
non-compact $\mathfrak{sl}(2)$ Lie algebra generators.
In Section~\ref{sec:first step} we complement this 
with the algebraic description of the Markov generator of the dual
process and establish the duality relation.
In Section~\ref{sec:transform} we identify
the sequence of similarity transformations
that make the boundaries of the spin chain diagonal.
This is obtained by using a non-trivial symmetry
of the open chain.
Finally, in Section~\ref{sec:proof-main-th} we 
prove our main results.

\section{Model and results}
\label{model}

We start by recalling the definition of the open interacting particle systems introduced in \cite{Frassek:2019vjt}.
For any  integer $N \ge 1$, we shall denote by ${\mathscr C}_N$  the configuration space made of the $N$-dimensional vectors $m = (m_1,m_2,\ldots,m_N)$
with non-negative integer components. For $i \in\{1,\ldots,N\}$ we shall write $\delta_i$ for the vector with all components zero  except in the $i^{th}$
place, i.e.
\be
({\delta}_i)_j=
 \left\{\begin{array}{rl}
 1 & \text{if } j=i,\\
 0 & \text{otherwise}.
 \end{array} \right.
\ee
In the following we denote the set of non-negative integers by $\N_0 = \N \cup \{0\}$.

\subsection{Process definition}
\label{sec:model}

\begin{definition}[The process]
\label{def:process}
For real numbers  $\beta_L, \beta_R \in (0,1)$ and $s>0$, 
we consider the continuous-time  Markov chain 
having configuration space ${\mathscr C}_N$ and being defined by the infinitesimal generator ${\mathscr L}$
whose action on local functions $f : {\mathscr C}_N\to\R$ is given by
\be
\label{eq:gen}
{\mathscr L}f := {\mathscr L}_1 f + \sum_{i=1}^{N-1}{\mathscr L}_{i,i+1}f + {\mathscr L}_N f
\ee
where
\begin{eqnarray}
\label{eq:genbulk}
({\mathscr L}
_{i,i+1} f)(m) 
& := & 
\sum_{k=1}^{m_i}\varphi_s(k,m_i) \Big[f(m-k \delta_i + k \delta_{i+1}) - f(m)\Big] \\
& + & 
\sum_{k=1}^{m_{i+1}} \varphi_s(k,m_{i+1}) \Big[f(m+k\delta_i - k \delta_{i+1}) - f(m)\Big] \nonumber
\end{eqnarray}
and, for $i\in\{1,N\}$,
\begin{eqnarray}
\label{eq:genbdry}
({\mathscr L}
_{i} f)(m) 
& := &  
\sum_{k=1}^{m_i} \varphi_s(k,m_i) \Big[f(m-k\delta_i) - f(m)\Big]
 +   
\sum_{k=1}^{\infty} \frac{\beta_i^k}{k} \Big[f(m+k\delta_i) - f(m)\Big]
\end{eqnarray}
with $\beta_1= \beta_L$ and $\beta_N = \beta_R$.
The function $\varphi_s : \N \times \N \to \R$ is given by
\be
\label{eq:varphi}
\varphi_s(k,n) :=  \frac{1}{k}\frac{\Gamma (n+1) \Gamma (n-k+2 s)}{ \Gamma (n-k+1) \Gamma (n+2 s)} \mathds{1}_{\{1\le k\le n\}}
\ee
where, by abuse of notation, $\mathds{1}_{\{1\le k\le n\}}$ is used in place of the indicator function of the set $\{1,2,\ldots,n\}$.
We shall denote the process by $\{m(t) = (m_1(t), \ldots, m_N(t)) : t\ge 0\}$; the component $m_i(t)$
gives the number of particles at time $t\ge 0$ at site $i\in\{1,\ldots,N\}$.  {For $N=1$ the bulk term vanishes and only the boundaries remain in the generator \eqref{eq:gen}.}
\end{definition}

\begin{remark}[Existence of the process]
{\rm
For all integers $N \ge 1$ and initial configurations $m\in {\mathscr C}_N$ the stochastic process just defined is non-explosive,
i.e. in any bounded interval of time, almost surely, the system undergoes finitely many 
transitions.  This follows from irreducibility and (positive) recurrence of the embedded 
discrete time chain.  
The total number of particles in the system $|m(t)|=\sum_{i=1}^N m_i(t)$
remains finite at all times with probability one. Indeed, in the course of time, this number is changed only by the boundary generator in \eqref{eq:genbdry},
as the dynamics in the bulk is conservative.
For any $\beta_i \in (0,1)$,  the boundary generator in \eqref{eq:genbdry}
describe a continuous-time inhomogeneous 
random-walk on $\N_0$ that is reflected in $0$ and, from the position $m_i$,  jumps 
to $m_i+k$ with $k\in \N$ at rate ${\beta_i^k}/{k}$
and jumps to $m_i-k$ with $1\le k \le m_i$ at rate $\varphi_s(k,m_i)$.
Such walker does not escape to infinity since it has a negative drift.
}
\end{remark}

\begin{remark}[Shifted harmonic numbers]
{\em 
One can  check that 
\be
\label{eq:identity}
h_s(n) := \sum_{k=1}^n \implicit{k}{n} = \psi(2s+n)-\psi(2s)
\ee
where $\psi$ is the digamma function,
i.e. the logarithmic derivative of the Gamma function, $\psi(x) = \frac{d}{dx}\ln \Gamma(x)$. 
From the
recurrence property of the digamma function,
$
\psi(x+1) = \psi(x) + {1}/{x}
$,
one immediately finds that the holding time
of the process in a configuration $m=(m_1,m_2,\ldots,m_N)$ 
is an exponentially distributed random variable with parameter 
 $ \sum_{i=1}^{N} 2 h_s(m_i)  - \log(1-\beta_L)  - \log(1-\beta_R)$,
where
\begin{equation}\label{eq:harmonicgen}
 h_s(n) =\sum_{k=1}^n\frac{1}{k+2s-1}
\end{equation} 
are the ``shifted'' harmonic numbers.
For $s=1/2$ one recovers the standard harmonic numbers in \eqref{eq:harmonic}.
}
\end{remark}
Since the Markov chain in Definition \ref{def:process} is irreducible and recurrent
then it has a unique invariant probability measure.
\begin{definition}[Stationary state] 
\label{def:stat-dist}For two configurations $m,m'\in\mathscr{C}_N$, let $L(m,m')$ denote the transition rate from $m$ to $m'$
of the process $\{m(t) \,:\; t\ge 0\}$ with generator ${\mathscr L}$ in \eqref{eq:gen}. 
The stationary state  (or invariant distribution)  $\mu$ of the process is
the $|{\mathscr C}_N|$-dimensional vector
satisfying
\be
\mu(m) = \sum_{m'\in {\mathscr C}_N}  \mu(m')L(m',m)   \qquad\qquad\qquad \forall \,m \in {\mathscr C}_N
\ee
and $
\sum_{m\in {\mathscr C}_N}  \mu(m) = 1
$. We shall denote by $\mathbb{E}[\cdot]$ the expectation with respect to the stationary state.
\end{definition}
Note that the steady state distribution $\mu$ is a function
of the system size $N$ and of the reservoir parameters
$\beta_L,\beta_R$. To alleviate the notation,
we shall not write these dependencies explicitly.

\medskip

In the following, we shall call ${\mathscr L}^{\mathrm{eq}}$ the 
generator ${\mathscr L}$ in \eqref{eq:gen} with $\beta_L=\beta_R=\beta$, 
describing the system at equilibrium. 
If $\beta_L=\beta_R=\beta$ then the particle process $\{m(t)\, , \, t\ge 0\}$ 
has an invariant measure given by a product of Negative Binomial distributions 
with parameters $0<\beta<1$ and $2s >0 $. Namely, for a chain of length $N$, the law
with mass function
\be
\label{rev-mes}
\mu^{\mathrm{eq}}({m}) :=  \prod_{i=1}^N \left[\frac{\beta^{m_i}}{m_i!} \frac{\Gamma(m_i + 2s)}{\Gamma(2s)} (1-\beta)^{2s}\right]
\ee
is reversible and thus stationary.  

{Our aim in this paper is to study the stationary measure in the non-equilibrium setting $\beta_L\neq \beta_R$}.
We remark that in the non-equilibrium case reversibility is lost
and furthermore one can check that a product ansatz for the stationary measure does not work. 
Indeed the non-equilibrium steady state will be shown to have long-range correlations.

\subsection{Basic properties}
\label{basic}

To obtain the non-equilibrium steady state we  introduce an auxiliary process.
To this aim we enlarge the configuration space by adding two absorbing extra sites $\{0,N+1\}$.
\begin{definition}[The dual process]
\label{def:processdual}
For a real number  $s>0$, 
we consider the continuous-time Markov chain $\{\xi(t) = (\xi_0(t), \xi_1(t),\ldots, \xi_{N+1}(t)) : t\ge 0\}$
having configuration space ${\mathscr C}_{N+2}$ and being defined by the infinitesimal generator ${\mathscr L}^{\mathrm{dual}}$
whose action on local functions $f : {\mathscr C}_{N+2}\to\R$ is given by
\be
\label{eq:gendual}
{\mathscr L}^{\mathrm{dual}}f :=  \sum_{i=0}^{N}{\mathscr L}^{\mathrm{dual}}_{i,i+1} f
\ee
where
\be
{\mathscr L}^{\mathrm{dual}}_{i,i+1} f(\xi) := {\mathscr L}_{i,i+1}f(\xi) \qquad\qquad \forall \, i \in \{1,2,\ldots N-1\} 
\ee
with ${\mathscr L}_{i,i+1}$ being the bond generator defined in \eqref{eq:genbulk}, and
\begin{eqnarray}
\label{eq:gendualbdry1}
{\mathscr L}^{\mathrm{dual}}
_{0,1} f(\xi) 
& := &  
\sum_{k=1}^{\xi_1} \varphi_s(k,\xi_1) \Big[f(\xi-k\delta_1 + k \delta_0) - f(\xi)\Big]\,,
\end{eqnarray}
\begin{eqnarray}
\label{eq:gendualbdry2}
{\mathscr L}^{\mathrm{dual}}
_{N,N+1} f(\xi) 
& := &  
\sum_{k=1}^{\xi_N} \varphi_s(k,\xi_N) \Big[f(\xi-k\delta_N + k \delta_{N+1}) - f(\xi)\Big]\,.
\end{eqnarray}
\end{definition}
Thus, in the dual process, particles move as in the original system while they are in the bulk; when
they reach the boundaries they can be absorbed in either one of the extra sites $\{0,N+1\}$, where they remain forever. 

We state the relation between the process and the dual process in the next proposition.
We  introduce the notation  $\mathbb{E}_{m}$ for the expectation 
with respect to the Markov process $\{m(t) : t \ge 0\}$ 
generated by ${\mathscr L}$ and starting at $m(0) = m$, i.e.
for a measurable function $f:{\mathscr C}_{N} \to \R$
\be
\mathbb{E}_{m}[f(m(t)] := \sum_{m'\in {\mathscr C}_{N}} f(m') p_t(m,m')
\ee
with $p_t(\cdot,\cdot)$ the transition probability function of the process. Similarly 
we denote  $\mathbb{E}^{\mathrm{dual}}_{\xi}$ the  expectation with respect to the 
dual Markov process $\{\xi(t): t \ge 0\}$ generated by ${\mathscr L}^{\mathrm{dual}}$ and 
starting at $\xi(0) = \xi$, i.e. for a measurable function $f:{\mathscr C}_{N+2} \to \R$
\be
\mathbb{E}^{\mathrm{dual}}_{\xi}[f(\xi(t)] := \sum_{\xi'\in {\mathscr C}_{N+2}} f(\xi') p^{\mathrm{dual}}_t(\xi,\xi')
\ee
with $p^{\mathrm{dual}}_t(\cdot,\cdot)$ the transition function of the dual process.

\begin{proposition}[Duality]
\label{prop:duality}
Denote the reservoir densities by
\be\label{eq:rho}
\rho_L = \frac{\beta_L}{1-\beta_L} 
\qquad\qquad
\rho_R = \frac{\beta_R}{1-\beta_R}.
\ee
Define the duality function $D: {\mathscr C}_{N}\times {\mathscr C}_{N+2} \to \R$
\be
\label{eq:duality-fct}
D(m,\xi) := \rho_L^{\xi_0}  \left(\prod_{i=1}^N \frac{m_i!}{(m_i-\xi_i)!} \frac{\Gamma(2s)}{\Gamma(2s+\xi_i)} \right) \rho_R^{\xi_{N+1}}.
\ee
Then, for any time $t>0$ and for every configurations $m\in{\mathscr C}_{N}$ and $\xi\in{\mathscr C}_{N+2}$, we have the  equality 
\be
\label{eq:duality-relation}
\mathbb{E}_{m} [D(m(t),\xi)] = \mathbb{E}^{\mathrm{dual}}_{\xi} [D(m,\xi(t))]. 
\ee
\end{proposition}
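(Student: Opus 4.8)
The plan is to deduce the integrated identity \eqref{eq:duality-relation} from its infinitesimal counterpart at the level of generators, namely
\begin{equation}
\label{eq:gen-duality}
\big(\mathscr{L}\, D(\cdot,\xi)\big)(m) \;=\; \big(\mathscr{L}^{\mathrm{dual}} D(m,\cdot)\big)(\xi)
\qquad \forall\, m\in\mathscr{C}_N,\ \xi\in\mathscr{C}_{N+2},
\end{equation}
where on the left $\mathscr{L}$ acts on the first slot (with $\xi$ frozen) and on the right $\mathscr{L}^{\mathrm{dual}}$ acts on the second slot (with $m$ frozen). Granting \eqref{eq:gen-duality}, the standard semigroup argument applies: for fixed $t>0$ set $\phi(u):=\mathbb{E}_{m}\,\mathbb{E}^{\mathrm{dual}}_{\xi}\big[D(m(u),\xi(t-u))\big]$ for $u\in[0,t]$ and differentiate, so that $\phi'(u)$ equals the difference of the two sides of \eqref{eq:gen-duality} averaged over the two processes and hence vanishes; then $\phi(0)=\phi(t)$ is exactly \eqref{eq:duality-relation}. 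Two points must be checked for rigour: that $D(\cdot,\xi)$ lies in a suitable domain, and that the relevant expectations are finite. On the dual side this is immediate, since the number of dual particles $|\xi(t)|=|\xi|$ is conserved and $D(m,\cdot)$ is bounded for fixed $m$; on the original side $D(\cdot,\xi)$ is a polynomial of degree $|\xi|$ in the occupation numbers, and finiteness of its expectation follows from non-explosiveness together with the negative boundary drift recorded in the Remark after Definition~\ref{def:process}.

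The key structural observation is that the duality function factorizes over sites,
\begin{equation}
\label{eq:D-factorized}
D(m,\xi)=\rho_L^{\xi_0}\,\rho_R^{\xi_{N+1}}\prod_{i=1}^{N} d(m_i,\xi_i),
\qquad
d(m,\xi):=\frac{m!}{(m-\xi)!}\,\frac{\Gamma(2s)}{\Gamma(2s+\xi)},
\end{equation}
while both generators decompose into a sum of bond terms. Since each bond operator alters only two adjacent coordinates and $D$ is a product, all common factors cancel and \eqref{eq:gen-duality} reduces to a finite list of local identities, one per bond, which I would verify separately. For a bulk bond $1\le i\le N-1$ one has $\mathscr{L}^{\mathrm{dual}}_{i,i+1}=\mathscr{L}_{i,i+1}$, so the task is the two-site self-duality statement that $\mathscr{L}_{i,i+1}$ applied to $d(m_i,\xi_i)\,d(m_{i+1},\xi_{i+1})$ in the $m$-variables equals the same operator applied in the $\xi$-variables; matching the moves pile-size by pile-size, this reduces to a hypergeometric identity in the falling factorials $d(\cdot,\cdot)$ weighted by the rates $\varphi_s$.

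For the boundary bonds the asymmetry between the original dynamics (annihilation at rate $\varphi_s$ together with creation of a stack of $k$ particles at rate $\beta_L^k/k$) and the dual dynamics (absorption into the cemetery site $0$) must be reconciled. After cancelling common factors, the left boundary identity takes the form
\begin{equation}
\label{eq:bdry-identity}
\begin{split}
&\sum_{k=1}^{m_1}\varphi_s(k,m_1)\big[d(m_1-k,\xi_1)-d(m_1,\xi_1)\big]
+\sum_{k=1}^{\infty}\frac{\beta_L^{\,k}}{k}\big[d(m_1+k,\xi_1)-d(m_1,\xi_1)\big]\\
&\qquad\qquad=\sum_{k=1}^{\xi_1}\varphi_s(k,\xi_1)\big[\rho_L^{\,k}\,d(m_1,\xi_1-k)-d(m_1,\xi_1)\big],
\end{split}
\end{equation}
and here the relation $\rho_L=\beta_L/(1-\beta_L)$ is precisely what converts the creation series on the left into the factor $\rho_L^{k}$ produced on the right when a dual particle is absorbed into the reservoir site; the analogous identity holds at the right boundary with $\rho_R,\beta_R$.

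I expect the main obstacle to be the algebraic verification of these bond identities, and in particular the boundary identity \eqref{eq:bdry-identity}, because it pits the infinite creation series against a finite absorption sum and requires resumming $\sum_k \beta_L^k/k\,[\cdots]$ through $\rho_L=\beta_L/(1-\beta_L)$. Rather than resumming by hand, the natural route---and the one suggested by the algebraic description of the generators in terms of non-compact $\mathfrak{sl}(2)$ generators announced for Section~\ref{sec:algebraic}---is to recognize $d(m,\xi)$ as the intertwiner realizing $\mathfrak{sl}(2)$ self-duality of each local transition operator. The bulk identities would then follow from the coalgebra (coproduct) structure of the bond Hamiltonian, and the boundary identities from the corresponding statement for the single-site creation/annihilation operator, with $\rho_L,\rho_R$ entering as the representation parameters fixed by the reservoirs. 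This reduces everything to checking the duality of the three $\mathfrak{sl}(2)$ generators under the pairing $d$, which is a finite and purely algebraic computation.
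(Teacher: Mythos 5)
Your overall strategy coincides with the paper's: both proofs reduce the semigroup identity to the generator-level intertwining $H^T D = D H^{\mathrm{dual}}$ and verify it term by term (one identity per bulk bond, one per boundary), and both ultimately rest on the $\mathfrak{sl}(2)$ structure. Your "coproduct" remark for the bulk is exactly the paper's mechanism: detailed balance, $\mathcal{H}_{i,i+1}^T\, d^{[i]}d^{[i+1]} = d^{[i]}d^{[i+1]}\,\mathcal{H}_{i,i+1}$, combined with the symmetry $[S_{+}^{[i]}+S_{+}^{[i+1]},\mathcal{H}_{i,i+1}]=0$, which lets the factor $e^{S_+^{[i]}}e^{S_+^{[i+1]}}$ of the duality function pass through the Hamiltonian density. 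The local identities you write down, in particular the left-boundary identity balancing the creation series against the absorption sum, are correct.

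The gap is in your final reduction. You claim everything comes down to "checking the duality of the three $\mathfrak{sl}(2)$ generators under the pairing $d$", with $\rho_L,\rho_R$ entering merely as representation parameters. For the bulk bonds this is adequate, but for the boundaries it cannot suffice as stated: the original boundary operator $\mathcal{B}_1$ acts on a single site, whereas the dual boundary operator $\mathcal{H}^{\leftarrow}_{0,1}$ is a genuine two-site operator involving the cemetery site $0$, and the duality function carries the factor $\rho_L^{\xi_0}$ whose exponent changes under the dual dynamics; a single-site statement about $S_0,S_\pm$ and $d$ cannot produce that. The paper's resolution is Lemma~\ref{lemma:inter}: the cemetery site is realized as the space of polynomials in $\rho_L$, on which $\mathfrak{sl}(2)$ acts by the differential operators $\mathscr{S}_0,\mathscr{S}_\pm$ of \eqref{eq:repr-cont}, intertwined with the matrix representation by $\mathcal{I}=\sum_{n} \rho_L^{n}\langle n|$; one then rewrites the scalar $\rho_L$ appearing in $\mathcal{B}_1^T$ as the operator $\mathscr{S}^{[0]}_+\bigl(\mathscr{S}^{[0]}_0+s\bigr)^{-1}$, after which the two sides of the boundary intertwining become literally the same expression. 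Without this device --- or else the by-hand resummation of $\sum_k \beta_L^k/k\,[\cdots]$ that you explicitly set out to avoid, which is how the $s=1/2$ case was proved in the earlier work cited by the paper --- your boundary identity remains unverified, so the argument is incomplete at precisely the step you yourself flagged as the main obstacle.
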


\medskip

Proposition \ref{prop:duality} was proved in \cite{Frassek:2019vjt}  for $s=1/2$ by an explicit computation
and stated without proof for generic $s>0$. We shall provide an algebraic independent proof in Section~\ref{sec:first step} that applies to all $s>0$.

We will characterize the non-equilibrium steady state of our system by computing the (scaled) 
factorial moments,
whose definition we hereafter provide.
\begin{definition}[Scaled factorial moments]
\label{def:fac}
For a multi-index $\xi=(\xi_1,\ldots,\xi_N) \in \N_0^N$, 
the  scaled factorial moments 
of order $|\xi| = \sum_{i=1}^N \xi_i$ of the invariant distribution
$\mu$ in Definition \ref{def:stat-dist}
are given by
\be
\label{eq:fac-moments}
G(\xi) := \sum_{{m}\in {\mathscr C}_N}  \mu({m}) 
\Big[\prod_{i=1}^N 
\frac{m_i}{2s}
\cdot
\frac{(m_i-1)}{(2s+1)} 
\cdots 
\frac{(m_i-\xi_i+1)}{(2s+\xi_i-1)}
\Big]\;.
\ee
\end{definition}
Compared to the standard textbook definition of factorial moments,
here the rising factorials $2s(2s+1)\ldots (2s+\xi_i-1)$ have been
added in the denominator. This explains the name ``scaled'' factorial moments.
The reason for this choice is that,
as a consequence of Proposition \ref{prop:duality}, 
the computation of the scaled factorial moments 
admits a direct probabilistic description in terms 
of the dual process.

\begin{proposition}[Scaled factorial moments via absorption probabilities]
\label{prop:facmom-abs}
For a multi-index $\xi = (\xi_1,\ldots,\xi_N) \in \N_0^N$, let $\check{\xi} \in \mathscr{C}_{N+2}$ be the con\-fi\-gu\-ration of the dual process defined by $\check{\xi}= (0,\xi_1,\ldots,\xi_N,0)$.
 Denote by $|\xi| = \sum_{i=1}^N \xi_i$ the total number of  particles in $\check{\xi}$.
Then the scaled factorial moments  are given by
\be
\label{eq:expansion}
G(\xi) = \sum_{k=0}^{|\xi|} \rho_L^k \, \rho_R^{|\xi|-k} \, p_{\check{\xi}}(k)\,,
\ee
where the reservoir densities $\rho_L,\rho_R$ are defined in \eqref{eq:rho} and $p_{\check{\xi}}(k)$ is the probability that, if the dual process is started 
from the configuration $\check{\xi}$, then $k$ of the $|\xi|$ particles are eventually  absorbed at $0$ and the remaining $|\xi|-k$ 
are absorbed at $N+1$, i.e.
\be
\label{eq:abs-prob}
p_{\check{\xi}}(k) = \mathbb{P}\big[\xi(\infty) = k \delta_0 + (|\xi|-k) \delta_{N+1} \,|\, \xi(0) = \check\xi\big]\,, \qquad\qquad 0\le k \le |\xi|\,.
\ee
\end{proposition}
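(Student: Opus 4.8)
The plan is to read the scaled factorial moment as a stationary expectation of the duality function and then transport the computation onto the dual chain through Proposition~\ref{prop:duality}, turning it into an absorption problem. The first observation is that $G(\xi)$ is exactly $\mathbb{E}[D(m,\check\xi)]$. Indeed, for the dual configuration $\check\xi=(0,\xi_1,\dots,\xi_N,0)$ the two end-site exponents vanish, so the prefactors $\rho_L^{\xi_0}$ and $\rho_R^{\xi_{N+1}}$ in \eqref{eq:duality-fct} equal $1$, while $\frac{m_i!}{(m_i-\xi_i)!}\frac{\Gamma(2s)}{\Gamma(2s+\xi_i)}$ reproduces precisely the scaled falling factorial $\frac{m_i}{2s}\cdots\frac{m_i-\xi_i+1}{2s+\xi_i-1}$ in \eqref{eq:fac-moments}. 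Hence $G(\xi)=\sum_{m}\mu(m)D(m,\check\xi)$.

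Next I would combine invariance with duality. Since $\mu$ is stationary, $G(\xi)=\sum_m\mu(m)\,\mathbb{E}_m[D(m(t),\check\xi)]$ for every $t\ge 0$. Applying the duality relation \eqref{eq:duality-relation} to each summand and interchanging the (nonnegative) $\mu$-sum with the dual expectation by Tonelli gives $G(\xi)=\mathbb{E}^{\mathrm{dual}}_{\check\xi}[g(\xi(t))]$, where $g(\eta):=\sum_m\mu(m)D(m,\eta)$. Because $D$ factorizes over sites, $g(\eta)=\rho_L^{\eta_0}\rho_R^{\eta_{N+1}}\,G((\eta_1,\dots,\eta_N))$; in particular, $g$ evaluated at a fully absorbed configuration $k\delta_0+(|\xi|-k)\delta_{N+1}$ equals $\rho_L^k\rho_R^{|\xi|-k}$, since the bulk is then empty and $G(\mathbf{0})=\sum_m\mu(m)=1$.

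Finally I would send $t\to\infty$. The dual dynamics conserves the particle number, so the process started at $\check\xi$ evolves on the finite set of configurations of total mass $|\xi|$; this is a finite absorbing continuous-time Markov chain whose absorbing states are exactly $k\delta_0+(|\xi|-k)\delta_{N+1}$ for $0\le k\le |\xi|$, and they are reached almost surely. Conservation forces $\xi_{N+1}(\infty)=|\xi|-\xi_0(\infty)$, so on the absorbing event $g$ equals $\rho_L^{\xi_0(\infty)}\rho_R^{|\xi|-\xi_0(\infty)}$. Passing the limit inside the (finite-state) expectation and recalling the definition \eqref{eq:abs-prob} of $p_{\check\xi}(k)=\mathbb{P}[\xi_0(\infty)=k\mid \xi(0)=\check\xi]$ yields $G(\xi)=\sum_{k=0}^{|\xi|}\rho_L^k\rho_R^{|\xi|-k}p_{\check\xi}(k)$, which is \eqref{eq:expansion}.

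The step that requires genuine care is this passage $t\to\infty$. The identity $G(\xi)=\mathbb{E}^{\mathrm{dual}}_{\check\xi}[g(\xi(t))]$ holds a priori only in $[0,\infty]$, and to replace the time-$t$ expectation by its absorbed limit I must know that $g$ is bounded on the relevant (finite) state space, i.e.\ that the factorial moments of all orders $\le |\xi|$ are finite. I would secure this independently by monotonicity and stochastic domination: the observable $\prod_{i=1}^N \frac{m_i!}{(m_i-\xi_i)!}$ is nondecreasing in each $m_i$, and the non-equilibrium measure is dominated by the reversible product measure \eqref{rev-mes} with $\beta=\max(\beta_L,\beta_R)$, whose scaled factorial moments are simply $(\rho^*)^{|\xi|}<\infty$ with $\rho^*=\beta/(1-\beta)$. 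Once finiteness is in hand, the finiteness of the state space makes $g$ bounded and the limit immediate; equivalently, one may observe that stationarity together with duality make $g$ an $\mathscr{L}^{\mathrm{dual}}$-harmonic function on a finite absorbing chain, hence determined by its boundary values weighted by the absorption probabilities.
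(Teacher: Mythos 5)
Your skeleton is the same as the paper's: read $G(\xi)$ as the stationary expectation of $D(\cdot,\check\xi)$, transport it to the dual side via Proposition~\ref{prop:duality}, and let $t\to\infty$ so that only the absorbed configurations $k\delta_0+(|\xi|-k)\delta_{N+1}$ survive, on which the duality function reduces to $\rho_L^k\rho_R^{|\xi|-k}$. Your execution is in one respect tidier than the paper's: the paper starts the primal process from a fixed configuration $m$ and asserts that $\mathbb{E}_m[D(m(t),\check\xi)]$ converges to $G(\xi)$ (which tacitly requires convergence to stationarity \emph{of expectations of an unbounded observable}), whereas you start the primal process from $\mu$ itself, so by invariance the left-hand side is constant in $t$ and all limiting is pushed onto the dual side, where the state space (configurations of $|\xi|$ particles on $N+2$ sites) is finite.

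The genuine gap is precisely the finiteness input that you yourself flag. To pass $t\to\infty$ inside $\mathbb{E}^{\mathrm{dual}}_{\check\xi}[g(\xi(t))]$ you need $g$ finite on the finite dual state space, i.e. $G(\eta)<\infty$ for all $|\eta|\le|\xi|$, and you propose to obtain this from stochastic domination of $\mu$ by the product measure \eqref{rev-mes} with $\beta=\max(\beta_L,\beta_R)$. That domination is proved nowhere in the paper, is not obvious, and would itself require a monotonicity/attractiveness property of a dynamics in which entire piles of particles jump --- plausibly as hard as the proposition you are proving. So, as written, your argument rests on an unproven nontrivial lemma (the same caveat applies to your closing ``harmonic function'' remark, which also presupposes finiteness). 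The repair, however, stays entirely inside the duality circle of ideas and needs no domination: Fatou applied to your identity gives
$G(\xi)=\liminf_{t\to\infty}\mathbb{E}^{\mathrm{dual}}_{\check\xi}[g(\xi(t))]\ge \sum_{k=0}^{|\xi|}\rho_L^k\rho_R^{|\xi|-k}p_{\check\xi}(k)$,
using $g\ge 0$ and almost-sure absorption; the reverse inequality follows from the paper's direction: for fixed $m$, $\mathbb{E}_m[D(m(t),\check\xi)]=\mathbb{E}^{\mathrm{dual}}_{\check\xi}[D(m,\xi(t))]\to\sum_{k}\rho_L^k\rho_R^{|\xi|-k}p_{\check\xi}(k)$ by bounded convergence (fixed $m$, finite dual state space), and then Fatou applied to $\mathbb{E}_m[D(m(t),\check\xi)]=\sum_{m'}p_t(m,m')D(m',\check\xi)$ together with the ergodic convergence $p_t(m,m')\to\mu(m')$ yields $G(\xi)\le\sum_{k}\rho_L^k\rho_R^{|\xi|-k}p_{\check\xi}(k)$. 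The two bounds give equality and finiteness simultaneously. In fairness, the paper's own two-line proof glosses over exactly the same integrability point in \eqref{eq:duality-relation}; your proposal is more honest about where the difficulty sits, but patches it with the wrong tool.
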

The proof of the above proposition will also be given in Section~\ref{sec:first step}.

\subsection{Main results} 
\label{sec:factorial}

We now present our results which describe the non-equilibrium steady state of the open
symmetric harmonic process. The main finding is the following closed-form expression for the scaled factorial moments.
\begin{theorem}[Scaled factorial moments]
\label{theo:factorial}
For a multi-index $\xi=(\xi_1,\ldots,\xi_N)\in\N_0^N$, the scaled factorial moments of the non-equilibrium steady state 
are given by
\begin{equation}
\label{ed2}
G(\xi) =\sum_{n=0}^{|{ \xi}|}\rho_R^{|{ \xi}|-n}(\rho_{L}-\rho_{R})^n g_{\xi}(n)
\end{equation} 
with
\begin{equation}\label{eq:gfunc}
 g_{\xi}(n)=
\sum_{{\underset{\eta_1+\ldots+\eta_{\scalebox{0.5} N}=n}{(\eta_1,\ldots,\eta_{\scalebox{0.5} N}) \in \N_0^{\scalebox{0.5} N}}}}
\prod_{i=1}^N \binom{\xi_i}{\eta_i} \prod_{j=1}^{\eta_i}\frac{2s(N+1-i)-j+\sum_{k=i}^{N}\eta_k}{2s(N+1)-j+\sum_{k=i}^{N}\eta_k}\,.
\end{equation}
In particular,
\begin{equation}
\label{eq: express-one}
 g_{\xi}(n)=
\sum_{|\eta|=n}\prod_{i=1}^N \binom{\xi_i}{\eta_i} \frac{N+1-i}{N+1-i +\sum_{k=i}^{N}\eta_k}
\end{equation}
 for the case $s=1/2$.

\end{theorem}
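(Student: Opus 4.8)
The plan is to turn the theorem into a statement about absorption probabilities of the dual process and then to evaluate those probabilities in closed form. First I would start from Proposition~\ref{prop:facmom-abs}, which already expresses $G(\xi)$ through the absorption probabilities $p_{\check\xi}(k)$ via \eqref{eq:expansion}. The passage from the $(\rho_L,\rho_R)$ variables in \eqref{eq:expansion} to the $(\rho_L-\rho_R,\rho_R)$ variables in \eqref{ed2} is a pure change of expansion point: writing $\rho_L=(\rho_L-\rho_R)+\rho_R$ and using the binomial theorem,
\[
\rho_L^{k}\rho_R^{|\xi|-k}=\sum_{n=0}^{k}\binom{k}{n}(\rho_L-\rho_R)^{n}\rho_R^{|\xi|-n},
\]
and substituting into \eqref{eq:expansion} and reordering the double sum yields exactly the shape \eqref{ed2} with
\[
g_\xi(n)=\sum_{k=n}^{|\xi|}\binom{k}{n}\,p_{\check\xi}(k)=\mathbb{E}_{\check\xi}\!\left[\binom{K}{n}\right],
\]
i.e. $g_\xi(n)$ is the $n$-th binomial moment of the number $K$ of dual particles eventually absorbed at the left site $0$. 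This part is routine.

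Next I would give $g_\xi(n)$ its combinatorial content. Since $\binom{K}{n}$ counts the $n$-element subsets of the left-absorbed particles, linearity of expectation gives $g_\xi(n)=\sum_{S}\mathbb{P}[\,S\subseteq\{\text{left-absorbed}\}\,]$, the sum over size-$n$ subsets $S$ of the $|\xi|$ initial particles. Grouping subsets by the number $\eta_i$ of their members at each site $i$ and using exchangeability of equal-site particles gives
\[
g_\xi(n)=\sum_{|\eta|=n}\Big(\prod_{i=1}^N\binom{\xi_i}{\eta_i}\Big)\,A(\eta),
\]
where $A(\eta)$ is the probability that a tagged family with site-profile $\eta$ is absorbed \emph{entirely} at $0$. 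Comparing with \eqref{eq:gfunc} reduces the theorem to the identity $A(\eta)=\prod_{i=1}^N\prod_{j=1}^{\eta_i}\frac{2s(N+1-i)-j+\sum_{k\ge i}\eta_k}{2s(N+1)-j+\sum_{k\ge i}\eta_k}$. One point that needs care here is that $A(\eta)$ must be shown to be insensitive to the untagged particles, i.e. to coincide with the all-left absorption probability $p_{\check\eta}(|\eta|)$ of the dual started from $\check\eta$ alone; this is a consistency property of the dynamics under removal of a particle, which I would establish as a separate (softer) lemma.

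The crux is the evaluation of $A(\eta)$. As a function of the dual configuration it is the unique bounded solution of $\mathscr{L}^{\mathrm{dual}}A=0$ on transient states with boundary data $A=1$ when all particles sit at $0$ and $A=0$ as soon as one particle reaches $N+1$. For a single particle ($\eta=\delta_i$) the rates \eqref{eq:varphi} reduce to a symmetric nearest-neighbour walk and $A$ is the gambler's-ruin value $\tfrac{N+1-i}{N+1}$, matching the formula; but for $n\ge 2$ the moving of piles makes the particles genuinely interact, so the naive product of one-particle probabilities fails, and this is where I expect the real difficulty. The route I would take is the one announced for the later sections: pass to the $\mathfrak{sl}(2)$ description of $\mathscr{L}^{\mathrm{dual}}$ and use the symmetry produced by the Quantum Inverse Scattering Method to build a similarity transformation that makes the two boundary generators diagonal. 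In that basis the absorption decouples, and $A(\eta)$ is computed by a transfer-matrix recursion proceeding from site $N$ down to site $1$, in which the partial sums $M_i=\sum_{k\ge i}\eta_k$ enter naturally and the recursion collapses into the advertised nested product. Equivalently, one may simply verify directly that the proposed nested product is $\mathscr{L}^{\mathrm{dual}}$-harmonic with the stated boundary values, uniqueness of bounded harmonic functions then identifying it with $A(\eta)$.

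Finally, the $s=\tfrac12$ reduction to \eqref{eq: express-one} is elementary: at $s=\tfrac12$ each inner product over $j$ telescopes to a ratio of factorials, and although the individual site factors do \emph{not} match the single fractions in \eqref{eq: express-one}, the product over $i$ of the denominators collapses to $(N+n)!/N!$ while the numerator factorials telescope across sites, leaving $\prod_i\frac{N+1-i}{N+1-i+\sum_{k\ge i}\eta_k}$; this is a Gamma-function identity. To summarize, the main obstacle is obtaining the closed form for the interacting many-particle absorption probability $A(\eta)$, for which the integrable symmetry is essential, while the consistency statement of the second step is the secondary point that also requires a genuine argument.
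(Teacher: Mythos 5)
Your reduction is correct as far as it goes, and it is genuinely different from the route the paper takes: the paper's proof of Theorem~\ref{theo:factorial} never uses the absorption-probability representation at all, but instead constructs the stationary vector directly as $|\mu\rangle = e^{-S_-^{\mathrm{tot}}}e^{\rho_R S_+^{\mathrm{tot}}}\mathcal{W}|\Omega\rangle$, where the whole weight of the argument is carried by the non-local charge $Q_+$ of Proposition~\ref{prop:symm}, the transformation $\mathcal{W}$ of Proposition~\ref{prop:non-local}, and the explicit evaluation of $\langle m|Q_+^{|m|}|\Omega\rangle$ in Lemma~\ref{lem:actW}, which rests on the nontrivial recursion identity \eqref{eq:toshow}. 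Your preliminary steps are sound: the binomial-moment rewriting of \eqref{eq:expansion} is exact, the subset decomposition $g_\xi(n)=\sum_{|S|=n}\mathbb{P}[S\subseteq\{\text{left-absorbed}\}]$ is legitimate once a labelled (exchangeable) version of the dual dynamics is fixed, and the consistency property you flag is indeed available for this model — the paper notes it follows from the $S_-$ symmetry of $H^{\mathrm{dual}}$ and cites \cite{2019arXiv190710583C} — so that lemma, while necessary, is not the obstruction.

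The genuine gap is the crux you yourself identify: the closed form for $A(\eta)$ is never established, and given your steps this identity is \emph{equivalent} to the theorem (taking $|\xi|=n$ in \eqref{eq:gfunc} reads off $A(\xi)=g_\xi(|\xi|)$ directly, so nothing is gained in generality by the reduction). Your first strategy — QISM symmetry, a basis "in which the absorption decouples", and a "transfer-matrix recursion from site $N$ down to site $1$" — names the right toolbox but supplies none of its content: what is actually needed is the construction of the commuting charge $Q''$ (the longest computation in the paper), the operator $\mathcal{W}$ built from it, and the evaluation of its vacuum matrix elements; no decoupling statement or recursion is exhibited, and it is not clear what the transfer-matrix object would even be. Your fallback — verify directly that the nested product is $\mathscr{L}^{\mathrm{dual}}$-harmonic with the stated boundary data and invoke uniqueness of bounded harmonic functions — is a self-consistent plan (absorption occurs a.s., uniqueness holds, and the function is in fact harmonic since the theorem is true), but that verification is an infinite family of identities coupling all sites through the partial sums $\sum_{k\ge i}\eta_k$ via the pile-moving rates $\varphi_s$; it does not factor bond by bond, and it is of the same order of difficulty as the paper's identity \eqref{eq:toshow}. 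Saying it can "simply" be checked is precisely where the proof is missing: until one of the two routes is carried out, the theorem has been reformulated, not proved.
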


\begin{remark}[Half-integer spin values]
{\em
As explained in the proof of Theorem \ref{theo:factorial}, a simplification similar to the one in \eqref{eq: express-one} occurs for all $s\in \N/2$, see \eqref{eq: express-onexxx}.
}
\end{remark}

\begin{remark}[Absorption probabilities]
{\em
By expanding $(\rho_{L}-\rho_{R})^n$ in \eqref{ed2}, the scaled  factorial moments can  be written as a polynomial in the densities of the two reservoirs
as in \eqref{eq:expansion}.
The coefficients $p_{\check\xi}(\cdot)$,
that are obviously related to the coefficients
$g_{ \xi}(\cdot)$ by 
\begin{equation}
p_{\check\xi}(k)=
\sum_{n=k}^{|\xi|}(-1)^{n-k}\binom{n}{k}
 g_{ \xi}(n)
 \,,
\end{equation} 
are the absorption probabilities  \eqref{eq:abs-prob}. 
}
\end{remark}

The multi-index that labels the factorial moments 
can equivalently be prescribed by assigning the {\em ordered} positions of the dual particles. 
Namely, a configuration ${\xi}= (\xi_1,\ldots,\xi_N)$ with
$|{ \xi}|=\sum_{i=1}^N\xi_i$ particles can alternatively
be described by their ordered positions 
$1\le x_1 \le x_2 \le \ldots \le x_{|{ \xi}|}\le N$, i.e. 
$$
\xi = \sum_{i=1}^{|\xi|} \delta_{x_i}.
$$
We then have the following alternative expression for the scaled factorial moments.
\begin{corollary}[Scaled factorial moments, coordinate form]
\label{cor:fac-mom}
For a multi-index  $\xi= (\xi_1,\ldots,\xi_N)\in\N_0^N$ that is in bijection
with the ordered set of positions $x=(x_1,\ldots,x_{|\xi|})$ satisfying the condition  $1\le x_1 \le \ldots \le x_{|{ \xi}|}\le N$, we have
\begin{equation}
\label{ed}
G(x)= \sum_{n=0}^{|\xi|}\rho_R^{|  \xi|-n}(\rho_L-\rho_R)^n\
g_{x}(n)
\end{equation} 
with
\begin{equation}
\label{eq: express-two}
 g_{x}(n)=
\sum_{1\leq i_1< \ldots< i_n\leq |  \xi|} \; \prod_{\alpha=1}^n\frac{n-\alpha+2s(N+1-x_{i_\alpha})}{n-\alpha+2s(N+1)}\,.
\end{equation} 
\end{corollary}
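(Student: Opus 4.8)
The plan is to deduce Corollary \ref{cor:fac-mom} directly from Theorem \ref{theo:factorial} by a purely combinatorial reindexing. Since the polynomial prefactors $\rho_R^{|\xi|-n}(\rho_L-\rho_R)^n$ are identical in \eqref{ed2} and \eqref{ed}, it suffices to prove that $g_\xi(n)=g_x(n)$ for each $0\le n\le|\xi|$, whenever the occupation vector $\xi=(\xi_1,\ldots,\xi_N)$ and the ordered list of positions $x=(x_1,\ldots,x_{|\xi|})$ encode the same configuration via $\xi=\sum_{i=1}^{|\xi|}\delta_{x_i}$. The whole content is thus an identity between the second-quantized sum over occupations $(\eta_1,\ldots,\eta_N)$ in \eqref{eq:gfunc} and the first-quantized sum over particle subsets in \eqref{eq: express-two}.

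First I would pass from occupation numbers to particle levels. Writing $S_i:=\sum_{k=i}^N\eta_k$ for the number of selected particles sitting at sites $\ge i$, one has $S_1=n$, $S_{N+1}=0$ and $S_i=S_{i+1}+\eta_i$, so the integer intervals $[S_{i+1},S_i-1]$ partition $\{0,1,\ldots,n-1\}$ as $i$ runs over $1,\ldots,N$. Reindexing the inner product in \eqref{eq:gfunc} by $m:=S_i-j$, so that $j=1,\ldots,\eta_i$ corresponds to $m=S_i-1,\ldots,S_{i+1}$, each factor becomes
\[
\frac{2s(N+1-i)+m}{2s(N+1)+m},
\]
and the double product $\prod_i\prod_j$ collapses into a single product over the levels $m\in\{0,\ldots,n-1\}$, in which the site attached to a level $m$ is the unique $i=i(m)$ with $S_{i+1}\le m\le S_i-1$.

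Next I would match this with the coordinate-form weight. Ordering the $n$ selected particles from left to right, exactly $n-S_i$ of them lie strictly to the left of site $i$, so the particles at site $i$ occupy the ranks $\{n-S_i+1,\ldots,n-S_{i+1}\}$; equivalently, a particle of rank $\alpha$ sits at site $i$ precisely when $n-\alpha\in[S_{i+1},S_i-1]$, i.e.\ when the level $m=n-\alpha$ is attached to site $i=x_{i_\alpha}$. Under the substitution $m=n-\alpha$ the numerator $2s(N+1-i)+m$ becomes $2s(N+1-x_{i_\alpha})+(n-\alpha)$ and the denominator becomes $2s(N+1)+(n-\alpha)$, reproducing the factor indexed by $\alpha$ in \eqref{eq: express-two} term by term. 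Hence, for a fixed occupation vector $(\eta_1,\ldots,\eta_N)$, the common weight of \eqref{eq:gfunc} equals the coordinate-form weight of any size-$n$ subset realizing those occupations. Finally, summing the coordinate-form weight over all ordered subsets $1\le i_1<\cdots<i_n\le|\xi|$ and grouping them by the induced site-occupation vector $(\eta_1,\ldots,\eta_N)$ produces exactly the multiplicity $\prod_{i=1}^N\binom{\xi_i}{\eta_i}$ of \eqref{eq:gfunc}, since the $\xi_i$ particles at a given site are interchangeable. This yields $g_x(n)=g_\xi(n)$ and completes the proof.

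The main obstacle is the bookkeeping in the reindexing step: one must verify that the left-to-right ordering of the selected particles aligns the running index $j$ of \eqref{eq:gfunc} with the rank $\alpha$ of \eqref{eq: express-two}, so that the telescoping of the partial sums $S_i$ matches the levels $m=n-\alpha$. Once this alignment is established the equality of the two weights is immediate, and the only remaining point is the elementary observation that regrouping subsets by site-occupation recovers the binomial factors.
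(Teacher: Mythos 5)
Your proof is correct. It follows the same overall skeleton as the paper's argument — reduce the claim to the identity $g_\xi(n)=g_x(n)$ for each fixed $n$, observe that all size-$n$ subsets inducing the same occupation vector $\eta$ contribute the same weight, and recover the binomial multiplicity $\prod_i\binom{\xi_i}{\eta_i}$ by grouping — but the mechanism you use to match the two weights is genuinely different from the paper's. The paper proves a Young-diagram lemma (Lemma~\ref{lemma:fct-young}): it expands \emph{each} factor of both \eqref{eq:gfunc} and \eqref{eq: express-two} into a telescoping product of elementary ratios $f_s(i,j)$ attached to the boxes of a Young diagram, and then identifies the two double products as enumerations of the same set of boxes; the corollary is then obtained by repeating this for subdiagrams with $n$ columns. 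You instead exhibit a direct factor-by-factor bijection: setting $S_i=\sum_{k\ge i}\eta_k$ and $m=S_i-j=n-\alpha$, the factor of \eqref{eq:gfunc} indexed by $(i,j)$ literally equals the factor of \eqref{eq: express-two} indexed by the rank $\alpha$ of the corresponding selected particle, because the rank-$\alpha$ particle sits at site $i$ exactly when $n-\alpha\in[S_{i+1},S_i-1]$. This is leaner — no box expansion, no telescoping, no auxiliary function $f_s$ — and makes the pairwise equality of factors explicit, which the paper's route never needs to observe. What the paper's formulation buys in exchange is the pictorial subdiagram language (sub-selection of columns), which it reuses to interpret the weight as (a multiple of) the ground state $\mu''$, and which makes the multiplicity count visually immediate. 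Both arguments are complete; yours could serve as a shorter replacement for Section~\ref{sec:proof-coordinate}.
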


We continue by observing that, knowing the factorial moments, one can reconstruct the stationary measure
of the open symmetric harmonic process using the inversion formula \cite{feller2008introduction}
\begin{equation}\label{eq:measure}
\mu(m) = \sum_{\xi \ge m} G(\xi)  \Big[\prod_{i=1}^N \frac{(-1)^{\xi_i-m_i}}{\xi_i!} {\xi_i \choose m_i} \frac{\Gamma(2s+\xi_i)}{\Gamma(2s)} \Big]\,,
\end{equation}
where the restriction on the summation has to interpreted componentwise, i.e. 
$\xi_i\ge m_i$ for all $i\in\{1,\ldots,N\}$.
Inserting the factorial moments \eqref{ed2}-\eqref{eq:gfunc} into \eqref{eq:measure} we obtain the following expression for the  non-equilibrium steady state in terms of the parameters $\beta_{L}$ and $\beta_R$ .

 \begin{corollary}[Stationary state]
 \label{theo:steady}
 For $m\in \N_0^N$, the weights of the invariant distribution in Definition 
\ref{def:stat-dist} are given by
 \be
 \label{eq:stead1}
 \mu({m}) = 
 \sum_{\xi \in \N_0^N}
 \sum_{\eta \in \N_0^N}
 \rho_R^{|{ \xi}|-|\eta|} (\rho_L-\rho_R)^{|\eta|}
 \varphi_{{ m}}(\xi,\eta)
 \ee
 where
 \be
 \label{eq:stead2}
 \varphi_{{ m}}({ \xi},\eta)=
 \prod_{i=1}^N {\xi_i \choose \eta_i} {\xi_i \choose m_i} 
 \frac{(-1)^{\xi_i - m_i} }{\xi_i!}\frac{\Gamma(2s+\xi_i)}{\Gamma(2s)} 
  \prod_{j=1}^{\eta_i}\frac{2s(N+1-i)-j+\sum_{k=i}^{N}\eta_k}{2s(N+1)-j+\sum_{k=i}^{N}\eta_k}
 \ee
 \end{corollary}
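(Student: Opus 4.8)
The plan is to derive \eqref{eq:stead1}--\eqref{eq:stead2} by substituting the closed form of the scaled factorial moments from Theorem~\ref{theo:factorial} into the moment-inversion formula \eqref{eq:measure} and then regrouping the resulting multiple sum. Formula \eqref{eq:measure} is taken as given: it is the standard binomial inversion of Definition~\ref{def:fac}, which factorizes over the sites and rests on the elementary identity $\sum_{j}\binom{\ell}{j}(-1)^{j}=\delta_{\ell,0}$, so that the actual content of the corollary lies entirely in the substitution and the subsequent rearrangement.

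Concretely, I would first rewrite the right-hand side of \eqref{ed2}--\eqref{eq:gfunc} as a single sum over $\eta\in\N_0^N$ instead of a sum over $n$ subject to the internal constraint $\eta_1+\dots+\eta_N=n$. The key observation is that $n$ is slaved to $|\eta|$, so the outer sum over $n$ and the internal constraint collapse into one unconstrained sum, the prefactor becoming $\rho_R^{|\xi|-|\eta|}(\rho_L-\rho_R)^{|\eta|}$, while the ranges $\eta\le\xi$ and $|\eta|\le|\xi|$ are enforced automatically by the binomials $\binom{\xi_i}{\eta_i}$. This rewriting yields
\be
G(\xi)=\sum_{\eta\in\N_0^N}\rho_R^{|\xi|-|\eta|}(\rho_L-\rho_R)^{|\eta|}\prod_{i=1}^N\binom{\xi_i}{\eta_i}\prod_{j=1}^{\eta_i}\frac{2s(N+1-i)-j+\sum_{k=i}^N\eta_k}{2s(N+1)-j+\sum_{k=i}^N\eta_k}\,.
\ee
Inserting this into \eqref{eq:measure}, the site-factorized inversion kernel $\tfrac{(-1)^{\xi_i-m_i}}{\xi_i!}\binom{\xi_i}{m_i}\tfrac{\Gamma(2s+\xi_i)}{\Gamma(2s)}$ multiplies the site-factorized product above, and the two combine site by site into exactly the factor $\varphi_m(\xi,\eta)$ of \eqref{eq:stead2}. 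Finally I would drop the componentwise restriction $\xi\ge m$, since $\binom{\xi_i}{m_i}=0$ whenever $\xi_i<m_i$ adds only vanishing terms, so the $\xi$-sum extends freely over $\N_0^N$; this produces \eqref{eq:stead1}.

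The main obstacle is not the algebra, which is bookkeeping, but the analytic justification of the two infinite summations and of their interchange. The inversion series \eqref{eq:measure} is alternating in $\xi$, and because $G(\xi)$ grows at most geometrically in $|\xi|$ while the kernel grows only polynomially in each $\xi_i$, the series converges absolutely only when the reservoir densities are small enough; for larger densities one is evaluating a generating function outside its Taylor polydisk, and the series must be read in the sense of analytic continuation. I would therefore first establish the identity for sufficiently small $\rho_L,\rho_R$, where absolute convergence of the double series over $\xi$ and $\eta$ legitimizes interchanging the two summations and the termwise substitution, and then extend it to all $\beta_L,\beta_R\in(0,1)$ by analyticity in the reservoir parameters, using that the $G(\xi)$ are polynomials in $\rho_L,\rho_R$ and that a measure on $\N_0^N$ is determined by its scaled factorial moments. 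The absolute-convergence check in the small-density regime reduces to the elementary bound that each factor in the product over $j$ lies in $(0,1]$ — numerator and denominator differ only by the positive constant $2s\,i$ added to the denominator — whence $g_\xi(n)\le\binom{|\xi|}{n}\le 2^{|\xi|}$ uniformly; summing the resulting geometric-type series over $\xi$ is then routine.
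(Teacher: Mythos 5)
Your proposal is correct and coincides with the paper's own proof, which obtains \eqref{eq:stead1}--\eqref{eq:stead2} in exactly the same way—by inserting Theorem~\ref{theo:factorial} into the inversion formula \eqref{eq:measure}—and then merely re-packages the identical computation algebraically as $\mu = e^{-S_-^{\mathrm{tot}}}\mu'$, using the ground state $\mu'$ of Lemma~\ref{lem:gsHprime} and the matrix elements $\langle m|e^{-S_-^{\mathrm{tot}}}|\xi\rangle$. Your closing discussion of absolute convergence at small densities and analytic continuation in $\rho_L,\rho_R$ goes beyond the paper, which treats \eqref{eq:measure} and the resulting double series purely formally; the concern is legitimate (for $\rho_R\ge 1$ the alternating $\xi$-series in \eqref{eq:stead1} has terms that do not tend to zero), so this added care is a refinement rather than a deviation from the paper's route.
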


We notice that, in  formula \eqref{eq:stead1}, the sum over $\eta$
is actually a finite sum due the Newton binomial coefficient in $\varphi_{{ m}}({ \xi},\eta)$. 
\begin{remark}[Case $s=1/2$ and $N=1,2$]
{\em { Similar to the scaled factorial moments,  the stationary measure \eqref{eq:stead1}-\eqref{eq:stead2}  simplifies for spin $1/2$, cf.~\eqref{eq: express-one}. In this case we can evaluate the sums over the $\eta$ variables by performing the change of variables $\eta_i\to\eta_i-\eta_{i+1}$ where $i=1,\ldots,N-1$. 
For $N=1,2$ this leads to the following results.}
In the case of one site $N=1${, where the bulk contribution is absent,} and spin $s=1/2$ the stationary measure reads
\begin{equation}
 \mu(m_1)=\frac{(\beta_L-1) (\beta_R-1)}{\beta_L-\beta_R} \left(\sum_{k=m_1+1}^\infty\frac{\beta_L^k}{k}-\sum_{k=m_1+1}^\infty\frac{\beta_R^k}{k}\right).
\end{equation} 
For two sites and $s=1/2$ we obtain the stationary measure
\begin{equation}
 \mu(m_1,m_2)=2\frac{(\beta_L-1)^2(\beta_R-1)^2}{(\beta_L-\beta_R)^2}\left(  \phi_{\beta_L}(m_1,m_2)- \kappa(m_1,m_2)   + \phi_{\beta_R}(m_2,m_1)    \right)
\end{equation} 
where
\begin{equation}
\begin{split}
 \phi_\beta(m_1,m_2)=\frac{1}{2}\gamma_{\beta}^2(1+m_1)&-\sum_{k=m_1+1}^{m_2}\frac{1}{k}    \, \gamma_{\beta}(m_1+k+1)+\sum_{k=m_2+1}^{m_1}\frac{1}{k}  \,   \gamma_{\beta}(m_1+k+1)
 \end{split}
\end{equation} 
and
\begin{equation}
 \kappa(m_1,m_2)=\gamma_{\beta_L}(1+m_1)\gamma_{\beta_R}(1+m_2)   \,.
\end{equation} 
Here we introduced the sum
\begin{equation}
 \gamma_\beta(n)=\sum_{k=n}^\infty \frac{\beta^k}{k}=-\log(1-\beta)-\sum_{k=1}^{n-1}\frac{\beta^k}{k}\,,
\end{equation} 
which is related to the incomplete Beta function via $B_{\beta}(n,0)= \gamma_\beta(n)$.
}
\end{remark}

The exact solution at finite volume $N$ allows to establish, by a direct 
computation, that in the thermodynamic limit $N\to\infty$
the non-equilibrium stationary measure approaches locally a Gibbs 
distribution and transport of mass across the system satisfies Fick's law.
Let $O$ be the algebra of cylindrical bounded functions on $\N_0^{\N}$
and denote by $\tau_i$ the translation by $i$, i.e. for all function $f\in O$
define $(\tau_i f)(j) = f(i+j)$.

\begin{corollary}[Local equilibrium \& Fick's law] 
Let $\rho_L,\rho_R$ be defined by 
\eqref{eq:rho} and let $\mu$ be the unique invariant measure for the 
open symmetric harmonic process.
Then the following hold: 
\begin{itemize}
\item[(i)]
for $u\in(0,1)$
\label{theo:loc}
\be
\lim_{N\to\infty} \mu(\tau_{[u N]} f) = \nu_{\rho(u)}(f) \qquad\qquad \forall \, f \in O
\ee
where $[x]$ denotes the integer part of $x\in\R$,  
\be
\label{eq:linear}
\rho(u) = \rho_L + (\rho_R-\rho_L) u \,,
\ee
and
$\nu_{\rho}$ is the product measure on $\N_0^{\N}$ with marginals
given by Negative Binomial distributions with shape parameter $2s$
and mean $\rho$;
\item[(ii)]
in a system of size $N$, define the stationary current between
two neighbor sites $i,i+1$ by
\be
\label{eq:currbond}
J_{i,i+1} =  \sum_{m\in\mathscr{C}_N} \mu(m) [m_i - m_{i+1}]
\ee
and the total stationary current in the thermodynamic limit as
\be
J = \lim_{N\to\infty} \sum_{i=1}^{N-1} J_{i,i+1}.
\ee
Then Fick's law holds, namely
\be
\label{eq:fick}
J = - K_s\frac{d\rho(u)}{du}  \qquad\qquad u\in(0,1)
\ee
where the diffusivity $K_s = 2s$ and the density profile $\rho(u)$ is defined by Eq. \eqref{eq:linear}.
\end{itemize}
\end{corollary}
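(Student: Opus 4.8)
The plan is to derive both (i) and (ii) directly from the closed-form factorial moments, using that the product measure $\nu_\rho$ is pinned down by its scaled factorial moments. A direct computation with the Negative Binomial mass function \eqref{rev-mes} shows that the scaled factorial moment of $\nu_\rho$ of order $\xi$ factorizes over sites and equals $\rho^{|\xi|}$; since the Negative Binomial law has exponential tails its moment problem is determinate, so $\nu_\rho$ is the unique law on $\N_0^{\N}$ with these moments. Thus for (i) it suffices to prove that, for every fixed multi-index $\xi$ supported on a finite window, the translated moment $G(\tau_{[uN]}\xi)$ converges to $\rho(u)^{|\xi|}$ as $N\to\infty$.

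For this I would use the coordinate form of Corollary~\ref{cor:fac-mom}. After translation by $[uN]$ the ordered positions satisfy $x_{i_\alpha}=[uN]+O(1)$, so each factor in $g_x(n)$ obeys
\begin{equation}
\frac{n-\alpha+2s(N+1-x_{i_\alpha})}{n-\alpha+2s(N+1)}=\frac{2sN(1-u)+O(1)}{2sN+O(1)}\xrightarrow[N\to\infty]{}1-u.
\end{equation}
Since $\xi$ is fixed, $g_x(n)$ is a finite sum of $\binom{|\xi|}{n}$ finite products, so the limit passes through termwise and $g_x(n)\to\binom{|\xi|}{n}(1-u)^n$. Inserting this into \eqref{ed} and applying the binomial theorem gives
\begin{equation}
G(\tau_{[uN]}\xi)\longrightarrow\sum_{n=0}^{|\xi|}\binom{|\xi|}{n}\rho_R^{|\xi|-n}\big[(\rho_L-\rho_R)(1-u)\big]^n=\big[\rho_R+(\rho_L-\rho_R)(1-u)\big]^{|\xi|}=\rho(u)^{|\xi|},
\end{equation}
which is exactly the scaled factorial moment of $\nu_{\rho(u)}$.

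The only delicate point in (i) is to upgrade this convergence of moments to convergence of $\mu(\tau_{[uN]}f)$ for every bounded cylinder $f$, and this is the main obstacle. I would argue by the method of moments on each finite window of sites: the scaled factorial moments differ from the ordinary joint factorial moments only by the fixed constants $\prod_i(2s)(2s+1)\cdots(2s+\xi_i-1)$, and factorial moments determine ordinary moments through an invertible triangular relation, so the ordinary moments of the window marginals converge to those of $\nu_{\rho(u)}$. The uniformly bounded moments make the window marginals tight, and since the limiting product Negative Binomial is moment-determinate every subsequential weak limit coincides with $\nu_{\rho(u)}$; boundedness of $f$ then yields $\mu(\tau_{[uN]}f)\to\nu_{\rho(u)}(f)$.

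Part (ii) needs only the first moment. Writing $\mathbb{E}[m_i]=2s\,G(\delta_i)$ and reading off from \eqref{eq: express-two} with a single particle the exact finite-$N$ profile $G(\delta_i)=\rho_R+(\rho_L-\rho_R)\frac{N+1-i}{N+1}$, one obtains the bond current
\begin{equation}
J_{i,i+1}=2s\big[G(\delta_i)-G(\delta_{i+1})\big]=\frac{2s(\rho_L-\rho_R)}{N+1},
\end{equation}
which is independent of $i$. Summing over the $N-1$ bonds and letting $N\to\infty$ gives $J=2s(\rho_L-\rho_R)$. Since the profile \eqref{eq:linear} has slope $\tfrac{d\rho}{du}=\rho_R-\rho_L$, this equals $-K_s\,\tfrac{d\rho}{du}$ with $K_s=2s$, which is Fick's law \eqref{eq:fick}.
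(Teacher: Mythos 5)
Your proposal is correct and follows essentially the same route as the paper: part (i) is proved by showing $G(\tau_{[uN]}\xi)\to\rho(u)^{|\xi|}$ via the coordinate form \eqref{eq: express-two}, exactly as in the paper's proof, and part (ii) reads off the bond current from the single-particle moment \eqref{profile} just as the paper does. The only difference is that you explicitly justify the step from convergence of scaled factorial moments to weak convergence of the window marginals (triangular relation between factorial and ordinary moments, tightness, and moment-determinacy of the Negative Binomial limit), a point the paper compresses into the single assertion that ``it is enough to study the convergence of the scaled factorial moments.''
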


\subsection{Discussion and open problems}
\label{sec-disc-op}
In this section, we discuss our results, considering in particular relations to the literature, possible extensions and open problems.

\paragraph{Local equilibrium.} The property of local equilibrium (Corollary \ref{theo:loc})
is  important for the construction of non-equilibrium thermodynamics
\cite{oono1998steady,Sasa,jona}. 
Indeed, as a consequence of local equilibrium, it is possible to locally define 
thermodynamic variables (such as the density) which vary 
smoothly on the macroscopic scale. In boundary driven systems with an absorbing dual,
proving local equilibrium amounts to prove that the absorption probabilities of the dual particles factorize on the macroscopic scale. One way to achieve this is via  the construction of a coupling between the dual particles and independent particles, which is often a non-trivial task (see, e.g., \cite{demasi2006mathematical} for
the symmetric exclusion process and  \cite{kipnis1982heat} for the KMP model).
Our system is special in that we can solve the asymptotic dual dynamics (due to integrability of the model)  and thus
we can directly prove local equilibrium as a consequence of the exact solution.

\paragraph{Correlation functions and cumulants.}
The correlation functions 
in the non-equilibrium steady state can be read off
from Corollary \ref{cor:fac-mom}. 
In particular, if ${ \xi}=(\xi_1,\ldots,\xi_N)\in\mathbb{N}_0^N$,
then $G(\xi)$ provides information
on the $|{ \xi}|$-point correlation functions in the
steady state. \cris{The correlation functions turn out to be 
very long-ranged and contribute, despite their vanishing pointwise as 
$N\to \infty$, to the fluctuations about the typical density profile.
}
We show this by considering
a few simple cases. \cris{In this paragraph and in the next one}, $M=(M_1,\ldots,M_N)$
denotes a random vector whose probability distribution is the
non-equilibrium steady state.

Using  $\xi ={\delta}_{x_1}$ (one dual particle at position $1\le x_1\le N$), we obtain the average profile \cris{$\mathbb{E}[M_{x_1}]= 2s G(x_1)$}
that linearly interpolates between the two densities $\rho_L$ and $\rho_R$ at the boundaries:
\be
\label{profile}
\mathbb{E}[M_{x_1}] =  2s\left(\rho_L + \frac{\rho_R-\rho_L}{N+1} x_1\right).
\ee

\noindent
Considering  $\xi ={\delta}_{x_1}+  {\delta}_{x_2}$ (two dual particles at positions $x_1$ and $x_2$)
one gets the joint cumulants of second order
\be
\kappa_2(M_{x_1},M_{x_2}) = \mathbb{E}[M_{x_1} M_{x_2}] - \mathbb{E}[M_{x_1}]  \mathbb{E}[M_{x_2}].
\ee 
In particular, for $1\le x_1 < x_2 \le N$, we obtain the covariance \cris{$\mathbb{C}ov[M_{x_1},M_{x_2}]= (2s)^2 [G(x_1,x_2) - G(x_1)G(x_2)]$ which reads}
\begin{eqnarray}
\label{cov}
\mathbb{C}ov(M_{x_1}, M_{x_2})
& = & 
(2s)^2  \frac{ x_1(N+1-x_2)}{(N+1)^2(1+2s(N+1))} (\rho_R-\rho_L)^2\,,
\end{eqnarray}
and for $1\le x_1 = x_2 \le N$ we obtain the variance
\cris{$\mathbb{V}ar[M_{x_1}]= (2s)^2 [G(x_1,x_1) - G(x_1)^2] + 2s [G(x_1,x_1) + G(x_1)]$ which reads}
\begin{eqnarray}
\label{var}
Var(M_{x_1})
& = & 
\frac{2s}{2s(1+N)^3 + (1+N)^2}
\Big((N+1)^2(1+2s(N+1))\rho_L(1+\rho_L)\nonumber\\
&&-(N+1)(\rho_L-\rho_R)(1+\rho_L+\rho_R+2s(1+N+\rho_L+2N\rho_L+\rho_R))x_1 \nonumber\\
&&+ 2s N (\rho_L-\rho_R)^2 x_1^2\Big)\,.
\end{eqnarray}

\noindent
Using  $\xi ={\delta}_{x_1}+  {\delta}_{x_2} +  {\delta}_{x_3}$  (three dual particles), 
we find the third order cumulant 
\begin{eqnarray}
\kappa_3(M_{x_1},M_{x_2},M_{x_3}) & = & 
\mathbb{E}[M_{x_1} M_{x_2} M_{x_3}] +2 \mathbb{E}[M_{x_1}]  \mathbb{E}[M_{x_2}] \mathbb{E}[M_{x_3}]  \\
&-& \mathbb{E}[M_{x_1} M_{x_2}] \mathbb{E}[M_{x_3}] 
- \mathbb{E}[M_{x_1} M_{x_3}] \mathbb{E}[M_{x_2}] 
- \mathbb{E}[M_{x_2} M_{x_3}] \mathbb{E}[M_{x_1}]. \nonumber
\end{eqnarray} 
For $1\le x_1 < x_2 < x_3\le N$ the result is
\be
\label{k3}
\kappa_3(M_{x_1},M_{x_2},M_{x_3}) =  (2s)^3  \frac{x_1(N+1-2 x_2)(N+1-x_3)}{(N+1)^3(1+s+sN)(1+2s(N+1))} (\rho_R-\rho_L)^3.
\ee
Similarly, considering $\xi= \sum_{i=1}^n \delta_{x_i}$ one can compute
the $n^{\text{th}}$   cumulant 
for which it is expected \cite{DLS} that
\be
\kappa_n(M_{x_1},\ldots,M_{x_n}) = f^{(N)}_{n}(x_1,\ldots,x_n) (\rho_R-\rho_L)^n\,,
\ee
with $f^{(N)}_{n}$ a sequence of functions such that
\be
\lim_{N\to\infty} N^{n-1} f^{(N)}_{n}(Ny_1,\ldots,Ny_n) = f_{n}(y_1,\ldots,y_n)\,,
\ee
with $0<y_1<y_2<\ldots<y_n <1$. 
\cris{The scaling of the $n^{\text{th}}$ cumulant as the $n^{\text{th}}$ power of the density difference has been proven in \cite{Floreani-Redig-Sau} for a large class of models
using orthogonal dualities. It is an open problem instead to characterize the degree of
universality of the limiting functions $f_n$'s within the set of boundary driven
diffusive systems. See the comment on the comparison to Symmetric Exclusion Process that follows below.}

\paragraph{Comparison to product measure.}
Although macroscopically the system satisfies local equilibrium, at microscopic level the stationary state is
substantially different from a product state. For instance, considering the fluctuations of the total number of 
particles $|M|= \sum_{i=1}^N M_i$, equations
\eqref{profile},\eqref{cov} and \eqref{var} give
\be
\label{comparison}
\lim_{N\to\infty} \frac{1}{N} \Big[\mathbb{E}(|M|^2)- \mathbb{E}(|M|)^2\Big]
=
\lim_{N\to\infty} \frac{1}{N} \Big [\mathbb{E}_{loc}(|M|^2)- \mathbb{E}_{loc}(|M|)^2\Big] +\frac{s}{6} (\rho_R-\rho_L)^2
\ee
where $\mathbb{E}_{loc}$ denotes the expectation w.r.t. 
the inhomogeneous product measure with Negative 
Binomials marginals and profile given by \eqref{profile}. 
Of course, the average is the same 
$\mathbb{E}(|M|)  = \mathbb{E}_{loc}(|M|) = \frac{N}{2}(\rho_L+\rho_R)$.
However we find that the fluctuations of $|M|$ in the
non-equilibrium steady state are increased in comparison to those 
of the inhomogeneous product measure.
If one puts $s=-1/2$ then the expression in \eqref{comparison} coincides
with the analogous one for the open symmetric
exclusion process, where fluctuations are decreased in comparison to the product measure (see formula (10) in \cite{Derrida2001}).

\paragraph{Comparison to Symmetric Exclusion Process.}
\cris{Remarkably, we find that 
our microscopic expressions 
{\eqref{profile} and  \eqref{k3}} 
for the first and the third cumulants,
when ``formally'' specialized to the spin value $s=-1/2$, 
do coincide with \cris{the negative of} those of the open symmetric simple exclusion process (cf. \cite{DLS} formula (2.3) and (2.5) with $a=b=1$).
The expression for the covariance \eqref{cov} with $s=-1/2$ is instead 
the same as SEP (cf. \cite{DLS} formula (2.4) with $a=b=1$).
This leads to the conjecture for the cumulants $\kappa_n^{SHM} = (-1)^n \kappa_n^{SEP}$,
where SHM stands for ``Symmetric Harmonic Model''.
This might be related to the fact that the
group behind our model is $SU(1,1)\cong SL(2,\R)$ 
and the group underlying the SEP model is $SU(2)$. }

\paragraph{Absorption probabilities.} The crucial quantities that allow to express the stationary state in closed form
are the absorption probabilities of the dual process
\cite{carinci-giardina-giberti-redig}. For several boundary-driven systems that have an
absorbing dual process, it has been proved  in \cite{2019arXiv190710583C} that the absorption probabilities satisfy a consistency property
implying a recursion relation linking the absorption probabilities of $|\xi|$ dual particles to the absorption probabilities of $|\xi|-1$ particles. However, these recursion relations are not sufficient to determine in closed form 
the absorption probabilities from the single particle absorption probabilities.
In our case, the explicit form of the absorption
probabilities in coordinate form reads
\begin{equation}
p_\xi(k)=
\sum_{m=k}^{|\xi|}(-1)^{n-k}\binom{n}{k}\sum_{1\leq i_1\leq \ldots\leq i_n\leq |\xi|}\prod_{\alpha=1}^n\frac{n-\alpha+2s(N+1-x_{i_\alpha})}{n-\alpha+2s(N+1)}
\end{equation} 
and from this we may indeed verify, as expected, the recursion relations. For instance,
with obvious notation, we have
\begin{equation}
 p_{x_1}(1)+p_{x_2}(1)= 2 p_{x_1,x_2}(2) + p_{x_1,x_2}(1)
\end{equation} 
and
\begin{equation}
 p_{x_2,x_3}(2) + p_{x_1,x_3}(2) + p_{x_1,x_2}(2) =  3 p_{x_1,x_2,x_3}(3) + p_{x_1,x_2,x_3}(2)
\end{equation} 
\begin{equation}
 p_{x_2,x_3}(1) + p_{x_1,x_3}(1) + p_{x_1,x_2}(1) =  2 p_{x_1,x_2,x_3}(2) + 2 p_{x_1,x_2,x_3}(1)\,.
\end{equation}

\paragraph{Proof strategy.}
The key idea behind our proofs
is to use the algebraic description of the 
Markov generator {in terms of 
$\mathfrak{sl}(2)$ Lie algebra generators} (see Section~\ref{sec:algebraic}).
More precisely, the transposed of the process generator
is given by the Hamiltonian $H$ of the open integrable XXX spin chain
with non-compact spins.
The formulation 
within the Quantum Inverse Scattering Method gives full control over the symmetries of the model
and suggests then 
a sequence of similarity transformations
that simplify the spectral problem
of the non-equilibrium Hamiltonian
$H$. The sequence is described by the following
diagram:
\be
\label{eq:seq-trans}
{ H} \;\; 
\stackrel{{\cal S}_{1}}{\longrightarrow} \;\;  { H'} \;\;  
\stackrel{{\cal S}_{2}}{\longrightarrow} \;\;  {H''} \;\;  
\stackrel{{\cal W}^{-1}\,\,\,}{\longrightarrow}  \;\;  {H}^\circ. 
\ee
Here 
$H'$ has boundary terms both triangular, 
$H''$ has the left boundary term triangular and the right boundary term diagonal,
$H^{\circ}$ has boundary terms that are both diagonal.
The arrows in the diagram have to be interpreted as similarity 
transform, e.g. $H'={\cal S}_{1}H{\cal S}_{1}^{-1}$.
As a consequence all the Hamiltonians will be isospectral.
The transformations denoted by ${\cal S}_1$ and ${\cal S}_{2}$ are local, whereas the transformation ${\cal W}$ is non-local.  It will be computed exactly using  perturbation theory following a similar reasoning as presented in \cite{Alcaraz:1992zc}.
For the study of such transformations and the isospectrality for the SEP, we refer the reader to \cite{stinchcombe1995application,2005NuPhB.711..565M,Essler,Crampee,Frassek:2020omo}.

\paragraph{Construction of the steady state.}
From the sequence of transformations  
\eqref{eq:seq-trans} one can construct
the stationary measure of the process.
Indeed the ground state of $H^{\circ}$ is trivial
beacuse the boundary terms are diagonal. It can then  be mapped to the steady state by using backwards the sequence of transformations in \eqref{eq:seq-trans}.
The same strategy has been applied to
the open symmetric exclusion process in  \cite{Frassek:2019imp}, where it provides an alternative route compared to the Matrix Product Ansatz \cite{Derrida_1993}.
The formulation of our solution as a matrix product state is an open problem.

\paragraph{Mapping to equilibrium.}
If one starts from the equilibrium Hamiltonian
$H^{\mathrm{eq}}$ with reservoir parameters 
$\rho_L=\rho_R=\rho$ then the sequence of transformations leading to 
the block diagonal form $H^{\circ}$ reads
\be
\label{eq:seq-trans2}
{ H}^{\mathrm{eq}} \;\; 
\stackrel{{\cal S}_{1}}{\longrightarrow} \;\;  { H'} \;\;  
\stackrel{{\cal S}_{2}}{\longrightarrow} \;\;  {H''} \;\;  
\stackrel{{\cal I}d\,\,\,}{\longrightarrow}  \;\;  {H}^\circ 
\ee
where now 
${\cal W}= {\cal I}d$.
Since the diagonal form $H^{\circ}$ is the same
whether one starts from $H$ or from $H^{\mathrm{eq}}$, 
combining together the two sequences 
\eqref{eq:seq-trans} and \eqref{eq:seq-trans2}
of transformations one gets a similarity 
transformation
\begin{equation}
\label{eq:mapping-intro}
{H}^{\mathrm{eq}}={\cal P}^{-1} \,{H} \,{\cal P}
\end{equation} 
relating the non-equilibrium Hamiltonian
to the equilibrium one.
The operator ${\cal P}$ is given by
\begin{equation}
{\cal P} =  {\cal S}_{1}^{-1}{\cal S}_{2}^{-1}(\rho_R){\cal W}(\rho_L-\rho_R) {\cal S}_{2}(\rho){\cal S}_{1}, 
\end{equation}
where we stressed the dependence of the transformations on the parameter values.
In this paper we focus on the non-equilibrium steady state,
that will then be related to the equilibrium one by 
\be
\mu = {\cal P} \mu^{\mathrm{eq}}\;.
\ee
We expect that the mapping \eqref{eq:mapping-intro} is true
at the level of generator of the process and
therefore, at any time $t\ge 0$, the non-equilibrium
dynamics can be expressed in term of the equilibrium one \cite{Frassek:2020omo}.
The transformation ${\mathcal P}$ relating non-equilibrium
to equilibrium contains both local parts
(the transformations ${\cal S}_{1}$ and ${\cal S}_{2}$)
as well as a non-local part (the transformation ${\cal W}$)
that is responsible for the spatial and temporal 
long-range correlations appearing in non-equilibrium.
For the analogous result at macroscopic
level see the work \cite{Tailleur}, which 
provides the mapping between equilibrium and non-equilibrium 
in the context of Macroscopic Fluctuation 
Theory \cite{jona} via a sequence of non-local change of variables.

\paragraph{Comparison to bulk driven systems.}
An important class of problems 
in non-equilibrium statistical mechanics deals with 
{\em bulk-driven systems}. A prototype example
is the asymmetric exclusion processes on the full line. 
Unlike the case of open boundaries, in these cases 
the system reaches a steady state of 
constant density, and one is interested in density
fluctuations and their correlations. 
Recently, there has been a surge of interest 
in these problems, due to their close relationship 
to growth models, whose continuum version yields 
the celebrated KPZ equation. A whole host of 
asymmetric integrable models have been shown to belong to 
the KPZ universality class  
\cite{sasamoto2010one,corwin2012kardar,borodin,povolotsky2021untangling}.
The bulk-driven version of our model has
been studied by \cite{Sasamoto} for spin $s=1/2$
and by \cite{Povolotsky,barraquand} for general spin $s$.
For a discussion of these bulk driven systems in the framework of the Quantum Inverse Scattering Method we refer the reader to \cite{Kuniba,frassek00}, {see also \cite{frassek22} for a proposal of integrable boundary reservoirs within this setup.}

\section{Algebraic description}
\label{sec:algebraic}
\label{sec:alg-des}
An important role in our analysis is played by the fact that the Markov generator in Definition
\ref{def:process} is related to the integrable
Hamiltonian of the non-compact open Heisenberg spin chain for a particular choice of boundary conditions \cite{Frassek:2019vjt}. 

\subsection{The open non-compact Heisenberg spin chain}

At each lattice site $i\in\{1,2\ldots N\}$ we consider a copy of the $\mathfrak{sl}(2)$ Lie algebra,
whose generators satisfy the commutation relations
\begin{equation}
\label{eq:sl2 discrete}
 [S_0^{[i]},S_\pm^{[i]}]=\pm S_\pm^{[i]}\,,\qquad [S_+^{[i]},S_-^{[i]}]=-2S_0^{[i]}.
\end{equation}
Generators at different sites commute.
The non-compact representations 
that are relevant  are defined by 
\begin{equation}\label{eq:sl2ac}
  S_+^{[i]}|m_i\rangle = (m_i+2s)|m_i+1\rangle\,,\qquad  S_-^{[i]}|m_i\rangle  =  m_i|m_i-1\rangle\,,\qquad S_0^{[i]}|m_i\rangle  =  (m_i+s)|m_i\rangle\,,
\end{equation} 
where $S_+^{[i]}, S_-^{[i]}, S_0^{[i]}$ are infinite-dimensional matrices.
The elements of these matrices are specified by providing the 
action of the matrices on the standard orthonormal base in the Hilbert space $\ell_2(\N_0)$,
i.e. the infinite-dimensional column vectors $|m_i\rangle$, labelled by $m_i=0,1,2,\ldots$, 
with all components zero expect in the $m_i^{th}$ place
where there is a 1. Here the parameter $s$  (spin value) is a positive real number labelling the lowest weight representation.

The quantum Hamiltonian is defined on the tensor product of $N$ irreducible spin $s$ representations and 
it thus acts on the Hilbert space ${\mathscr H}_N = \otimes_{i=1}^N \ell_2(\mathbb{N}_0)$.
It is given by 
\begin{equation}\label{eq:fullham-sss}
H=\mathcal{B}_1+\sum_{i=1}^{N-1}\mathcal{H}_{i,i+1}+\mathcal{B}_N
\end{equation} 
where the nearest neighbour interaction on the tensor product of two sites $i$ and $i+1$ is
\begin{equation}
\label{eq:hacts-ss}
\begin{split}
 \mathcal{H}_{i,i+1}|m_i\rangle\otimes|m_{i+1}\rangle&=\left(h_s(m_i)+h_s(m_{i+1})\right)|m_i\rangle\otimes|m_{i+1}\rangle\\
 &\qquad-\sum_{k=1}^{m_i}
 \implicit{k}{m_i}
 |m_i-k\rangle\otimes|m_{i+1}+k\rangle
 \\&\qquad-\sum_{k=1}^{m_{i+1}}
  \implicit{k}{m_{i+1}}
 |m_i+k\rangle\otimes|m_{i+1}-k\rangle
 \end{split}
\end{equation}
and the boundary terms read
\begin{equation}\label{eq:harmbndss}
  \mathcal{B}_i|m_i\rangle=\left(h_s(m_i)+\sum_{k=1}^\infty\frac{\beta_i^k}{k}\right)|m_i\rangle -\sum_{k=1}^{m_i}
   \implicit{k}{m_i}
|m_i-k\rangle -\sum_{k=1}^\infty \frac{\beta_i^k}{k}|m_i+k\rangle
\end{equation} 
with $\beta_{1} = \beta_L$ and $\beta_N = \beta_R$ and
$h_s$ and $\varphi_s$  defined in \eqref{eq:identity} \eqref{eq:varphi}, respectively. 
With these definitions we immediately have that 
\be
\label{eq:identify}
{L} = - {H}^T
\ee
where ${H}^T$ denotes the transposed of the matrix ${H}$ in \eqref{eq:fullham-sss}, 
and $L$ is matrix associated to the Markov generator ${\mathscr L}$, i.e.
\be
{\mathscr L} f(m) = \sum_{m'\in {\mathscr C}_{N}} L(m,m') f(m').
\ee
We now provide the algebraic description of the Hamiltonian \eqref{eq:fullham-sss}
in terms of the generators $(S_+^{[i]}, S_-^{[i]}, S_0^{[i]})$ with ${i\in\{1,\ldots,N\}}$.
To this aim, we start with two lemmas that will be used several times in what follows. The lemmas are stated for the single site Hilbert space. 
\begin{lemma}[Rotation by $S_-$]
\label{lemma:rotation with S-}
For all $s>0$ and $\alpha\in\C$, the following identity holds with $n\in\N_0$
\be
\label{eq:rotation with S-}
e^{-\alpha S_-}
\Big(\psi(S_0+s)-\psi(2s)\Big) 
e^{\alpha S_-} |n\rangle = 
h_s(n)|n\rangle - \sum_{k=1}^n \implicit{k}{n} \alpha^k |n-k\rangle
\ee
\end{lemma}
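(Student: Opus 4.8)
The plan is to prove the identity in Lemma~\ref{lemma:rotation with S-} by evaluating the left-hand side directly on the basis vector $|n\rangle$, using the explicit action of the generator $S_-$ in \eqref{eq:sl2ac} together with the fact that $S_0+s$ acts diagonally with eigenvalue $n+2s$ on $|n\rangle$. First I would observe that $\psi(S_0+s)-\psi(2s)$ is a diagonal operator whose eigenvalue on $|m\rangle$ is exactly $\psi(m+2s)-\psi(2s) = h_s(m)$, by the identity \eqref{eq:identity} already established. Thus the operator sandwiched between the two exponentials is simply the diagonal operator $\operatorname{diag}(h_s(m))_{m\ge 0}$, and the whole task reduces to understanding how conjugation by $e^{\alpha S_-}$ acts on a diagonal operator.

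The key computational step is to expand $e^{\alpha S_-}|n\rangle$. Since $S_-|m\rangle = m|m-1\rangle$, iterating gives $S_-^k|n\rangle = \frac{n!}{(n-k)!}|n-k\rangle$ for $0\le k\le n$ (and zero for $k>n$), so
\be
e^{\alpha S_-}|n\rangle = \sum_{k=0}^n \frac{\alpha^k}{k!}\frac{n!}{(n-k)!}|n-k\rangle\,.
\ee
Applying the diagonal operator multiplies the coefficient of $|n-k\rangle$ by $h_s(n-k)$, and then I would apply $e^{-\alpha S_-}$, which has the same structure with $\alpha$ replaced by $-\alpha$. Collecting the coefficient of $|n-j\rangle$ in the resulting double sum yields a single binomial-type sum over the intermediate index, which I expect telescopes cleanly. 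The diagonal ($j=0$) term reproduces $h_s(n)|n\rangle$, and the off-diagonal terms should assemble into $-\implicit{j}{n}\,\alpha^j$. The crux is thus to verify the combinatorial identity that the coefficient of $|n-j\rangle$ equals $-\varphi_s(j,n)\alpha^j$ for $1\le j\le n$; recalling from \eqref{eq:varphi} that $\implicit{j}{n} = \frac{1}{j}\frac{\Gamma(n+1)\Gamma(n-j+2s)}{\Gamma(n-j+1)\Gamma(n+2s)}$, this amounts to a finite-sum identity involving differences of digamma values weighted by binomial coefficients.

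The main obstacle will be precisely this combinatorial verification: the naive double-sum expansion produces terms of the form $\sum_{k} \binom{n-j}{k}(-1)^{\ell} h_s(\cdot)$, and one must show the digamma-valued sum collapses to the compact ratio of Gamma functions. I would handle this either by induction on $j$, using the recurrence $\psi(x+1)=\psi(x)+1/x$ to peel off one factor at a time, or—more slickly—by differentiating a generating identity: note that $h_s(m) = \partial_t\big|_{t=0}\frac{\Gamma(m+2s+t)}{\Gamma(2s+t)}\big/\frac{\Gamma(m+2s)}{\Gamma(2s)}$-type representations convert the digamma weights into ordinary Pochhammer ratios, after which the binomial sum becomes a Vandermonde-type convolution that evaluates in closed form. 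Either route reduces the identity to a standard hypergeometric summation; the inductive approach is more elementary and is the one I would write out in detail, checking the base case $j=1$ by hand and verifying that the recurrence step reproduces the Gamma-function ratio defining $\varphi_s$.
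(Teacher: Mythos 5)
Your proposal is correct, and its computational skeleton coincides with the paper's proof: expand $e^{\alpha S_-}|n\rangle=\sum_{j}\binom{n}{j}\alpha^j|n-j\rangle$, use that $\psi(S_0+s)-\psi(2s)$ is diagonal with eigenvalue $h_s(\cdot)$, collect the coefficient of $|n-k\rangle$, and reduce everything to the finite identity $\sum_{j=0}^{k}(-1)^{k-j}\binom{n}{j}\binom{n-j}{k-j}h_s(n-j)=-\implicit{k}{n}$. Where you diverge is in verifying this crux. The paper proves it through the auxiliary identity \eqref{eq:id-sum2}, $\sum_{r=0}^{\ell}(-1)^r\binom{\ell}{r}\psi(r+b)=-B(\ell,b)$, obtained by differentiating the tabulated summation \eqref{eq:GR} in $b$ and taking $a\to b$ with the special-function fact $\lim_{x\to0}\psi(x)/\Gamma(x)=-1$; note that your ``slicker'' alternative (converting the digamma weights into derivatives of Gamma ratios and invoking a Vandermonde-type summation) is precisely this argument. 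Your preferred inductive route also closes, and is more elementary: since $\binom{n}{j}\binom{n-j}{k-j}=\binom{n}{k}\binom{k}{j}$ and the constant $-\psi(2s)$ cancels against $\sum_{r}(-1)^r\binom{k}{r}=0$ for $k\ge 1$, the crux is exactly $F(\ell,b):=\sum_{r=0}^{\ell}(-1)^r\binom{\ell}{r}\psi(r+b)=-B(\ell,b)$ with $\ell=k$ and $b=n-k+2s$; Pascal's rule gives $F(\ell,b)=F(\ell-1,b)-F(\ell-1,b+1)$, the Beta function obeys the same recursion $B(\ell,b)=B(\ell-1,b)-B(\ell-1,b+1)$, and the base case is the digamma recurrence $\psi(b)-\psi(b+1)=-1/b$. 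What the induction buys is independence from the Gradshteyn--Ryzhik formula and from the limit $\lim_{x\to 0}\psi(x)/\Gamma(x)=-1$; what the paper's route buys is that \eqref{eq:id-sum2} is established once, in a standalone form that is reused verbatim later (in Lemma~\ref{lemma:rotation with S+} and in the computation of the operators $B_k^{[1]}$ inside the proof of Proposition~\ref{prop:symm}), so if you adopt the induction you should still isolate that identity as a separate statement rather than burying it inside this lemma.
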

\begin{proof}
From the action of the generators \eqref{eq:sl2ac} it follows immediately that
\be\label{eq:rotsm}
e^{\alpha S_-} |n\rangle = \sum_{j=0}^n \alpha^j {n \choose j} |n-j\rangle.
\ee
Furthermore, from the identity \eqref{eq:identity}
one gets
\be
\label{eq:donald}
\Big(\psi(S_0+s)-\psi(2s)\Big) |n\rangle = h_s(n)|n\rangle.
\ee
Using the previous two expressions one obtains
\be
e^{-\alpha S_-}\Big(\psi(S_0+s)-\psi(2s)\Big) e^{\alpha S_-} |n\rangle =
\sum_{j=0}^n\sum_{l=0}^{n-j} (-1)^l \alpha^{j+l} {n \choose j} {n-j \choose l} h_{s}(n-j) |n-j-l\rangle
\ee
We change variable by defining $k=j+l$, thus finding
\be
e^{-\alpha S_-}\Big(\psi(S_0+s)-\psi(2s)\Big) e^{\alpha S_-} |n\rangle =
\sum_{k=0}^n\alpha^{k} \sum_{j=0}^{k} (-1)^{k-j}  {n \choose j} {n-j \choose k-j} h_{s}(n-j) |n-k\rangle
\ee
Formula \eqref{eq:rotation with S-} follows
then by using the identity (valid for all $k\in\{1,\ldots,n\}$)
\be
\label{eq:id-sum1}
\sum_{j=0}^{k} (-1)^{k-j} {n \choose j} {n-j \choose k-j} h_{s}(n-j) = - \implicit{k}{n}.
\ee
It remains to prove \eqref{eq:id-sum1}. This in turn is
obtained by  using the simpler identity 
\begin{equation}
\label{eq:id-sum2}
 \sum_{r=0}^\ell (-1)^r \binom{\ell}{r} \psi(r+b)= -B(\ell,b)\,,\qquad\qquad b>0,
\end{equation} 
where we recall that the Beta function is defined via
\begin{equation}
 B(\ell,b)=\frac{\Gamma(\ell)\Gamma(b)}{\Gamma(\ell+b)}\,.
\end{equation} 
To show \eqref{eq:id-sum2} we employ
a representation of the Beta function
as a sum with alternating signs. 
We use  formula (0.160) from \cite{gradshteyn2007table} 
\begin{equation}\label{eq:GR}
 \sum_{r=0}^\ell (-1)^r \binom{\ell}{r} \frac{\Gamma(r+b)}{\Gamma(r+a)}=\frac{B(\ell+a-b,b)}{\Gamma(a-b)}.
\end{equation} 
Then
\begin{equation}
\begin{split}
 \sum_{r=0}^\ell (-1)^r \binom{\ell}{r} \psi(r+b)&=\lim_{a\to b}\frac{\partial}{\partial {b}}\left( \sum_{r=0}^\ell (-1)^r \binom{\ell}{r} \frac{\Gamma(r+b)}{\Gamma(r+a)}\right)\\
 &=\lim_{a\to b}\frac{\partial}{\partial {b}}\left( \frac{B(\ell+a-b,b)}{\Gamma(a-b)}\right)\\
 &=B(\ell,b)\lim_{a\to b}\frac{ \psi (a-b) }{\Gamma(a-b)}\\
 &=-B(\ell,b)\,.
 \end{split}
\end{equation} 
Here we used  that $\lim_{x\to 0}\frac{\psi(x)}{\Gamma(x)}=-1$, see \cite{psigamma}.
\end{proof}

\begin{lemma}[Rotation by $S_+$]
\label{lemma:rotation with S+}
For all $s>0$ and $\alpha\in\C$, the following identity holds with $n\in\N_0$
\be
\label{eq:rotation with S+}
e^{\alpha S_+}
\Big(\psi(S_0+s)-\psi(2s)\Big) 
e^{-\alpha S_+} |n\rangle = 
h_s(n)|n\rangle - \sum_{k=1}^\infty  \frac{\alpha^k}{k} |n+k\rangle\,.
\ee
\end{lemma}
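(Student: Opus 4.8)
The plan is to mirror the proof of Lemma~\ref{lemma:rotation with S-}, replacing the lowering action of $S_-$ by the raising action of $S_+$. First I would record the action of the exponential of the raising operator on a basis vector. Iterating the relation $S_+^{[i]}|m_i\rangle=(m_i+2s)|m_i+1\rangle$ from \eqref{eq:sl2ac} gives $S_+^j|n\rangle=\frac{\Gamma(n+j+2s)}{\Gamma(n+2s)}|n+j\rangle$, and hence
\be
e^{\pm\alpha S_+}|n\rangle=\sum_{j=0}^\infty \frac{(\pm\alpha)^j}{j!}\,\frac{\Gamma(n+j+2s)}{\Gamma(n+2s)}\,|n+j\rangle\,.
\ee
Inserting the eigenvalue relation \eqref{eq:donald}, namely $\big(\psi(S_0+s)-\psi(2s)\big)|m\rangle=h_s(m)|m\rangle$, between the two exponentials and applying them in turn produces a double series over an index $j$ (from $e^{-\alpha S_+}$) and an index $l$ (from $e^{\alpha S_+}$). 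The Gamma prefactors telescope to $\Gamma(n+j+l+2s)/\Gamma(n+2s)$, so collecting the coefficient of the fixed target vector $|n+k\rangle$ via the substitution $k=j+l$ yields
\be
e^{\alpha S_+}\big(\psi(S_0+s)-\psi(2s)\big)e^{-\alpha S_+}|n\rangle
=\sum_{k=0}^\infty \alpha^k\,\frac{\Gamma(n+k+2s)}{\Gamma(n+2s)}\,T_k\,|n+k\rangle\,,
\qquad
T_k:=\frac1{k!}\sum_{j=0}^{k}(-1)^j\binom{k}{j}h_s(n+j)\,.
\ee

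The heart of the argument is the evaluation of $T_k$. Writing $h_s(m)=\psi(2s+m)-\psi(2s)$, the constant piece $-\psi(2s)$ contributes $\sum_{j=0}^k(-1)^j\binom{k}{j}$, which vanishes for $k\ge1$, while the remaining sum is handled by the identity \eqref{eq:id-sum2} with $\ell=k$ and $b=2s+n>0$ (the positivity holds since $s>0$, $n\ge0$). This is where the $S_+$ computation is in fact cleaner than the $S_-$ one: whereas the latter required first establishing the auxiliary identity \eqref{eq:id-sum1}, here \eqref{eq:id-sum2} applies directly and gives $T_k=-B(k,2s+n)/k!$ for $k\ge1$. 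Since $B(k,2s+n)=\Gamma(k)\Gamma(2s+n)/\Gamma(k+2s+n)$ and $k!=k\,\Gamma(k)$, the Gamma prefactor in the displayed coefficient cancels exactly, so the weight of $|n+k\rangle$ collapses to $-\alpha^k/k$; for $k=0$ one has $T_0=h_s(n)$, reproducing the diagonal term $h_s(n)|n\rangle$. This is precisely the right-hand side of \eqref{eq:rotation with S+}.

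The one point requiring care --- and the main technical obstacle --- is that, unlike in Lemma~\ref{lemma:rotation with S-}, where all sums are finite because $S_-$ lowers the weight, here both exponentials generate genuinely infinite series, so the rearrangement of the double sum under $k=j+l$ must be justified. I would treat the identity as a formal power series in $\alpha$, equivalently restricting to $\alpha$ in a disc where the rearranged series converges absolutely (consistent with the fact that the final answer $\sum_{k\ge1}\alpha^k/k$ converges for $|\alpha|<1$), which is all that the subsequent applications of the lemma require. Once the interchange is legitimate, the termwise simplification above completes the proof.
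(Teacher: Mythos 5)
Your proof is correct and takes essentially the same route as the paper's: expand both exponentials via \eqref{eq:sl2ac}, insert the eigenvalue relation \eqref{eq:donald}, resum with $k=j+l$, and evaluate the resulting alternating sum through the digamma--Beta identity \eqref{eq:id-sum2}; your explicit computation of $T_k$ merely fills in the step the paper compresses into ``which in turn is also a consequence of \eqref{eq:id-sum2}''. Your remark on justifying the rearrangement of the genuinely infinite double series (absolute convergence for $|\alpha|<1$, or reading the identity as a formal power series) is a sound point that the paper leaves implicit.
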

\begin{proof}
From the action of the generators \eqref{eq:sl2ac} it follows immediately that
\be
e^{\alpha S_+} |n\rangle = \sum_{j=0}^\infty \frac{\alpha^j}{j!} \frac{\Gamma(n+2s+j)}{\Gamma(n+2s)} |n+j\rangle.
\ee
Using this expression and the one in \eqref{eq:donald} one obtains
\be
e^{\alpha S_+}\Big(\psi(S_0+s)-\psi(2s)\Big) e^{-\alpha S_+} |n\rangle =
\sum_{j=0}^\infty\sum_{l=0}^{\infty} (-1)^j \alpha^{j+l} 
\frac{\Gamma(n+2s+j+l)}{j! l! \,\Gamma(n+2s)} h_{s}(n+j) |n+j+l\rangle.
\ee
We change variable by defining $k=j+l$, thus finding
\be
e^{\alpha S_+}\Big(\psi(S_0+s)-\psi(2s)\Big) e^{-\alpha S_+} |n\rangle =
\sum_{k=0}^\infty \alpha^{k} \sum_{j=0}^{k} (-1)^j  
\frac{\Gamma(n+2s+k)}{j! (k-j)! \,\Gamma(n+2s)} h_{s}(n+j) |n+k\rangle.
\ee
Formula \eqref{eq:rotation with S+} follows
by using the identity (valid for all $k\in\N$)
\be
\sum_{j=0}^{k} (-1)^j  
\frac{\Gamma(n+2s+k)}{j! (k-j)! \,\Gamma(n+2s)} h_{s}(n+j) = -\frac{1}{k},
\ee
which in turn is also a consequence of \eqref{eq:id-sum2}.
\end{proof}

\subsection{Bulk Hamiltonian}\label{sec:bulk}
The bulk part of the Hamiltonian \eqref{eq:fullham-sss} consists out of the sum of Hamiltonian densities 
that act non-trivially on the tensor product of two neighboring sites. 
An algebraic expression has been given in \cite{Faddeev:1996iy}, it reads
\begin{equation}\label{eq:fadham}
 \mathcal{H}_{i,i+1}=2\left(\psi(\mathbb{S}_{i,i+1})-\psi(2s)\right)
\end{equation} 
where the operator $\mathbb{S}_{i,i+1}$ is related to the two-site Casimir via $C_{i,i+1}=\mathbb{S}_{i,i+1}(\mathbb{S}_{i,i+1}-1)$, where
\begin{equation}\label{eq:twositeC}
 C_{i,i+1}=2S_0^{[i]}S_0^{[i+1]}- S_+^{[i]}S_-^{[i+1]}-S_-^{[i]}S_+^{[i+1]}+2s(s-1).
\end{equation} 
The algebraic expression for the Hamiltonian density in \eqref{eq:fadham} makes its $\mathfrak{sl}(2)$ symmetry manifest. However it is not convenient when acting on the tensor product basis of two sites, 
i.e. it is not immediately clear that its action is given by \eqref{eq:hacts-ss}. There is a more advantageous algebraic expression for the Hamiltonian density, written explicitly in terms of the generators
at sites $i$ and $i+1$. Furthermore it naturally
consists of two parts associated to the left and a right particle jumps. 
\begin{lemma}[Bulk Hamiltonian] 
\label{lemma:bulk}
The Hamiltonian densities in \eqref{eq:hacts-ss} can algebraically be written as
\begin{equation}\label{eq:hdec}
 \mathcal{H}_{i,i+1}= \mathcal{H}_{i,i+1}^{\rightarrow}+ \mathcal{H}^{\leftarrow}_{i,i+1}
\end{equation} 
where 
\begin{equation}
\label{eq:pippo}
 \mathcal{H}_{i, i+1}^{\rightarrow}=e^{-S_+^{[i+1]}(S_0^{[i+1]}+s)^{-1}S_-^{[i]}}
 \Big(\psi(S_0^{[i]}+s)-\psi(2s)\Big)
 e^{S_+^{[i+1]}(S_0^{[i+1]}+s)^{-1}S_-^{[i]}}
\end{equation} 
and
\begin{equation}
 \mathcal{H}_{i, i+1}^{\leftarrow}=e^{-S_+^{[i]}(S_0^{[i]}+s)^{-1}S_-^{[i+1]}}
 \Big(\psi(S_0^{[i+1]}+s)-\psi(2s)\Big)
 e^{S_+^{[i]}(S_0^{[i]}+s)^{-1}S_-^{[i+1]}}\;.
\end{equation} 
Here obviously we have the symmetry
\begin{equation}
  \mathcal{H}^{\leftarrow}_{i, i+1}= \mathcal{H}^{\rightarrow}_{i+1,i}\;.
\end{equation} 
\end{lemma}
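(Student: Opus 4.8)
The plan is to verify \eqref{eq:hdec} by computing the action of each term on an arbitrary basis vector $|m_i\rangle\otimes|m_{i+1}\rangle$ and matching the result against \eqref{eq:hacts-ss}. The key observation is that $\mathcal{H}^{\rightarrow}_{i,i+1}$ is exactly the two-site analogue of the single-site conjugation already handled in Lemma \ref{lemma:rotation with S-}, with the scalar $\alpha$ promoted to an operator acting on the neighbouring site. Concretely, I would first record two elementary facts about $B := S_+^{[i+1]}(S_0^{[i+1]}+s)^{-1}$. Since $(S_0^{[i+1]}+s)|m_{i+1}\rangle = (m_{i+1}+2s)|m_{i+1}\rangle$ with $m_{i+1}+2s\ge 2s>0$, the inverse is well defined on all of $\ell_2(\N_0)$, and the action \eqref{eq:sl2ac} gives $B|m_{i+1}\rangle = |m_{i+1}+1\rangle$, hence $B^k|m_{i+1}\rangle = |m_{i+1}+k\rangle$ for every $k\ge 0$. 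Moreover $B$ is supported on site $i+1$, so it commutes with every operator on site $i$, in particular with $S_-^{[i]}$ and with $\psi(S_0^{[i]}+s)$.

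Next I would rewrite the conjugating exponent as $A := S_+^{[i+1]}(S_0^{[i+1]}+s)^{-1}S_-^{[i]} = S_-^{[i]} B$ and exploit $[S_-^{[i]},B]=0$ to factor the exponential as $e^{A} = \sum_{j\ge 0}\tfrac{1}{j!}(S_-^{[i]})^{j}B^{j}$, the sum acting on $|m_i\rangle$ truncating at $j=m_i$ since $S_-^{[i]}$ is lowering. This yields
\begin{equation}
e^{A}|m_i\rangle\otimes|m_{i+1}\rangle = \sum_{j=0}^{m_i}\binom{m_i}{j}\,|m_i-j\rangle\otimes B^{j}|m_{i+1}\rangle,
\end{equation}
which is formally identical to \eqref{eq:rotsm} with $\alpha^{j}$ replaced by $B^{j}$. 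Because $B$ commutes with every site-$i$ operator and because $\alpha$ entered the proof of Lemma \ref{lemma:rotation with S-} only as a commuting multiplier, every algebraic step there — in particular the combinatorial identity \eqref{eq:id-sum1} — carries over verbatim with $\alpha^{k}$ replaced throughout by $B^{k}$. The outcome is
\begin{equation}
\mathcal{H}^{\rightarrow}_{i,i+1}|m_i\rangle\otimes|m_{i+1}\rangle = h_s(m_i)\,|m_i\rangle\otimes|m_{i+1}\rangle - \sum_{k=1}^{m_i}\varphi_s(k,m_i)\,|m_i-k\rangle\otimes B^{k}|m_{i+1}\rangle,
\end{equation}
and substituting $B^{k}|m_{i+1}\rangle = |m_{i+1}+k\rangle$ reproduces precisely the diagonal contribution $h_s(m_i)$ together with the rightward hopping term of \eqref{eq:hacts-ss}.

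Finally, the stated symmetry $\mathcal{H}^{\leftarrow}_{i,i+1} = \mathcal{H}^{\rightarrow}_{i+1,i}$ lets me read off the action of $\mathcal{H}^{\leftarrow}_{i,i+1}$ by exchanging the roles of the two sites, which produces the diagonal term $h_s(m_{i+1})$ and the leftward hopping term. Adding $\mathcal{H}^{\rightarrow}_{i,i+1}$ and $\mathcal{H}^{\leftarrow}_{i,i+1}$ then recovers \eqref{eq:hacts-ss} exactly. The only genuine point requiring care — and hence the main obstacle — is the justification that the scalar $\alpha$ in Lemma \ref{lemma:rotation with S-} may be replaced by the operator $B$ without disturbing any step of that proof; this rests entirely on the commutativity of $B$ with all site-$i$ operators and on $\alpha$ having appeared there only as a commuting multiplier. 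Once this is granted, everything else is a direct bookkeeping computation.
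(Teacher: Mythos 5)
Your proof is correct and follows essentially the same route as the paper's: the paper likewise applies Lemma~\ref{lemma:rotation with S-} with $\alpha$ promoted to the operator $S_+^{[i+1]}(S_0^{[i+1]}+s)^{-1}$, then uses $\bigl(S_+^{[i+1]}(S_0^{[i+1]}+s)^{-1}\bigr)^k|m_{i+1}\rangle=|m_{i+1}+k\rangle$ and the site-exchange symmetry for the leftward part. If anything, you are more explicit than the paper on the one point it glosses over, namely why the scalar-to-commuting-operator substitution leaves every step of Lemma~\ref{lemma:rotation with S-} intact.
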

\begin{proof}
It is rather straightforward to convince oneself of  the validity of the splitting of the Hamiltonian density in  \eqref{eq:hdec}. Consider for instance the action of
$\mathcal{H}_{i, i+1}^{\rightarrow}$ in \eqref{eq:pippo}. Using Lemma~\ref{lemma:rotation with S-} we find that 
\begin{equation}
\label{eq:inverse0}
\begin{split}
 \mathcal{H}^{\rightarrow}_{i,i+1}|m_i\rangle\otimes|m_{i+1}\rangle&=h_s(m_i)|m_i\rangle\otimes|m_{i+1}\rangle\\
 &\quad\,-\sum_{k=1}^{m_i}\left(S_+^{[i+1]}\left(S_0^{[i+1]}+s\right)^{-1}\right)^{k}\varphi_s(k,m_i)|m_i-k\rangle\otimes|m_{i+1}\rangle . \end{split}
\end{equation} 
The inverse of $S_0^{[i+1]}+s$ is well defined as $s>0$.
Furthermore, due to the action of the generators
\eqref{eq:sl2 discrete} one has
\be
\label{eq:inverse}\left(S_+^{[i+1]}(S_0^{[i+1]}+s)^{-1}\right)^k |m_{i+1}\rangle= |k+m_{i+1}\rangle \qquad\qquad k\ge 0\,.
\ee
Combining \eqref{eq:inverse0} and \eqref{eq:inverse} one finds
\begin{equation}
 \mathcal{H}^{\rightarrow}_{i,i+1}|m_i\rangle\otimes|m_{i+1}\rangle=h_s(m_i)|m_i\rangle\otimes|m_{i+1}\rangle-\sum_{k=1}^{m_i}\varphi_s(k,m_i)|m_i-k\rangle\otimes|m_{i+1}+k\rangle\,.
\end{equation} 
The left part $\mathcal{H}^{\leftarrow}_{i,i+1}$
is treated similarly and thus the decomposition of the Hamiltonian density  in \eqref{eq:hdec} is verified.
\end{proof}

\begin{remark}
{\em
Introducing pairs of creation and annihilation operators $\oa_i$ and $\oad_i$ satisfying the Heisenberg algebra $[\oa_i,\oad_i]=1$, one
defines the Holstein-Primakoff transformation \cite{Holstein:1940zp}
\begin{equation}
S_0^{[i]}=\oad_i \oa_i+s\,,\qquad
S_-^{[i]}=\oa_i\,
\qquad
S_+^{[i]}=\oad_i(\oad_i\oa_i+2s)\,.
\end{equation}
This allows to write
\begin{equation}\label{eq:oscHam}
\begin{split}
 \mathcal{H}_{i, i+1}^{\rightarrow}=&e^{-\oad_{i+1}\oa_i}\psi\left(\oad_i\oa_i+2s\right)e^{\oad_{i+1}\oa_i}-\psi(2s).
 \end{split}
\end{equation} 
This expression is related to the analysis given in \cite{Derkachov:2005hw}.
As shown there, the fundamental R-matrix of the $\mathfrak{sl}(2)$ chain can be factorises into two parts.  The logarithmic derivative of the R-matrix, which yields the Hamiltonian in the quantum inverse scattering method, thus decomposes as a sum of two corresponding parts. One recovers the left and right Hamiltonian density expressed in terms of Heisenberg pairs as in \eqref{eq:oscHam} when 
considering representation of the Heisenberg algebra
in terms of differential operators acting on polynomials.  

}
\end{remark}

As a consequence of the $\mathfrak{sl}(2)$ symmetry of the Hamiltonian density we have
\begin{equation}\label{eq:comHden}
 [S_{a}^{[i]}+S_{a}^{[i+1]},\mathcal{H}_{i,i+1}]=0 \qquad\qquad \text{for } a\in\{+,-,0\}\,.
\end{equation} 
The symmetry extends to the whole bulk of the chain but is broken by the boundary terms. They are discussed in the next section.
  
\subsection{Boundary Hamiltonian}\label{sec:bnd}
 Let us now turn to the description of the boundaries in the Hamiltonian.  
\begin{lemma}[Boundary Hamiltonian] \label{lem:bnd}
The Hamiltonian boundary terms in \eqref{eq:harmbndss} can algebraically be written as
\begin{equation}\label{eq:bndH}
 \mathcal{B}_i=e^{-S_-^{[i]}}
  e^{\rho_iS_+^{[i]}}
  \Big(\psi(S_0^{[i]}+s)-\psi(2s)\Big)
  e^{-\rho_i S_+^{[i]}}e^{ S_-^{[i]}}\qquad\qquad \text{for } i\in\{1,N\} \,.
\end{equation}
Here we use the variables $\rho_i=\beta_i(1-\beta_i)^{-1}$ defined in \eqref{eq:rho}. 
\end{lemma}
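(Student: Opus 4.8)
The plan is to establish the operator identity \eqref{eq:bndH} by letting both sides act on an arbitrary basis vector $|m\rangle$ and matching with the explicit boundary action \eqref{eq:harmbndss}. Throughout I suppress the site label and abbreviate $\Psi := \psi(S_0+s)-\psi(2s)$, $\rho := \rho_i$, $\beta := \beta_i = \rho/(1+\rho)$, and $T := S_+(S_0+s)^{-1}$, which by \eqref{eq:inverse} is the pure raising operator $T^k|n\rangle = |n+k\rangle$. The idea is to peel off the two conjugations one at a time: first the inner one by $e^{\pm\rho S_+}$ using Lemma~\ref{lemma:rotation with S+}, then the outer one by $e^{\pm S_-}$ using Lemma~\ref{lemma:rotation with S-}.

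First I would apply Lemma~\ref{lemma:rotation with S+} with $\alpha=\rho$. Since $\Psi|n\rangle = h_s(n)|n\rangle$ is diagonal while the right-hand side of \eqref{eq:rotation with S+} differs from it only by a purely raising term, this yields the clean operator identity $e^{\rho S_+}\Psi e^{-\rho S_+} = \Psi - R$, where $R := \sum_{k\ge 1}\frac{\rho^k}{k}T^k = -\log(I-\rho T)$. Conjugating by $e^{S_-}$ then splits the boundary term,
\be
\mathcal{B}_i = e^{-S_-}\Psi\, e^{S_-} - e^{-S_-} R\, e^{S_-}.
\ee
The first summand is handled directly by Lemma~\ref{lemma:rotation with S-} with $\alpha=1$: acting on $|m\rangle$ it gives $h_s(m)|m\rangle - \sum_{k=1}^m \varphi_s(k,m)|m-k\rangle$, which already reproduces the diagonal harmonic-number term and all the particle-removal (lowering) terms of \eqref{eq:harmbndss}.

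The heart of the proof is the second summand, and the key sublemma I would isolate is the operator identity
\be
e^{-S_-}\, T\, e^{S_-} = T - I.
\ee
To prove it I would compute the adjoint action of $e^{S_-}$ on the generators via \eqref{eq:sl2 discrete}; the two relevant series truncate and give $e^{-S_-}S_+ e^{S_-} = S_+ - 2S_0 + S_-$ and $e^{-S_-}(S_0+s)e^{S_-} = (S_0+s) - S_-$, whence $e^{-S_-}T e^{S_-} = (S_+-2S_0+S_-)(S_0+s-S_-)^{-1}$. After clearing the inverse, the claimed identity reduces to $S_+(S_0+s)^{-1}S_- = S_0 - s$, which is immediate on basis vectors from \eqref{eq:sl2ac}. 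Since conjugation commutes with the analytic functional calculus (all series act entrywise on $|m\rangle$ and converge because $0<\beta<1$), this yields
\be
e^{-S_-}R\,e^{S_-} = -\log\big(I-\rho(T-I)\big) = -\log\big((1+\rho)(I-\beta T)\big) = \log(1-\beta)\,I + \sum_{k\ge1}\tfrac{\beta^k}{k}T^k,
\ee
where I used $(1+\rho)^{-1}=1-\beta$. The decisive feature is that this resummation converts the reservoir density $\rho$ into the fugacity $\beta$, and I expect this to be the main obstacle: both getting the algebra of the geometric/logarithmic resummation right and justifying it as a bona fide operator identity rather than a formal manipulation.

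Finally I would assemble the pieces. Acting on $|m\rangle$, the second summand contributes $+\big(\sum_{k\ge1}\beta^k/k\big)|m\rangle - \sum_{k\ge1}\frac{\beta^k}{k}|m+k\rangle$, the signs coming from the minus sign in front of it together with $\log(1-\beta)=-\sum_{k\ge1}\beta^k/k$. Adding the two summands then gives exactly
\be
\mathcal{B}_i|m\rangle = \Big(h_s(m)+\sum_{k\ge1}\tfrac{\beta^k}{k}\Big)|m\rangle - \sum_{k=1}^m\varphi_s(k,m)|m-k\rangle - \sum_{k\ge1}\tfrac{\beta^k}{k}|m+k\rangle,
\ee
which is precisely \eqref{eq:harmbndss}, completing the argument.
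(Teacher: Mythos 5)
Your proof is correct, and it reaches \eqref{eq:bndH} by a genuinely different route than the paper at the decisive step. Both arguments open identically: Lemma~\ref{lemma:rotation with S+} turns the inner conjugation into $e^{\rho S_+}\Psi e^{-\rho S_+}=\Psi-R$ with $R=\sum_{k\ge1}\rho^k T^k/k$, and the contribution of $\Psi$ to the outer conjugation is exactly Lemma~\ref{lemma:rotation with S-} (the paper's ``diagonal contribution'' \eqref{eq:dia}). The two proofs part ways on $e^{-S_-}R\,e^{S_-}$. The paper evaluates it entry by entry: it writes the double sum \eqref{eq:hamma}, recognizes a ${}_2F_1$, applies a Pfaff transformation to trade $\rho$ for $\beta$, and finishes with the alternating-sum formula \eqref{eq:GR} plus a delicate limit $q\to p$ involving $1/\Gamma(-p)$. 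You instead prove the operator identity $e^{-S_-}T e^{S_-}=T-I$ --- your derivation via \eqref{eq:simtr1}--\eqref{eq:simtr2} and the basis-vector check $S_+(S_0+s)^{-1}S_-=S_0-s$ is correct --- after which the conversion $\rho\mapsto\beta$ is the elementary factorization $1-\rho(t-1)=(1+\rho)(1-\beta t)$ inside the logarithm, applied in the commutative algebra generated by $T$. Your route is more economical and more conceptual: it avoids special functions entirely and explains why the reservoir rates resum to $-\log(1-\beta T)$; it is also less exposed to index bookkeeping errors (the paper's \eqref{eq:offdia} in fact contains a slip: the surviving Kronecker delta in its derivation is $\delta_{k,q}$, not $\delta_{k,0}$, so the off-diagonal entry is $\beta^{p-q}/(p-q)$ --- which is what \eqref{eq:harmbndss} requires --- rather than $\beta^p/p$). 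What the paper's computation buys in exchange is that it never needs the inverse of $S_0+s-S_-$ nor functional calculus for power series in $T$. Finally, the convergence caveat you flag is real but is not a gap relative to the paper: interchanging the conjugation with the sum over $k$ converges absolutely only for $\rho<1$ (i.e.\ $\beta<1/2$), yet the paper's own sums over $n$ in \eqref{eq:hamma} diverge in exactly the same regime, its ${}_2F_1$ at argument $-\rho$ being precisely the analytic continuation of a divergent series; in both proofs the identity is rigorous for $\rho<1$ and extends to all $\rho>0$ by analyticity of the matrix elements in $\rho$, with the product on the right of \eqref{eq:bndH} understood in that continued sense.
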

\begin{proof}
To alleviate notation, in this proof, we do not write the site $i$.
The proof is essentially a consequence of Lemma~\ref{lemma:rotation with S-}
and Lemma~\ref{lemma:rotation with S+}.
To obtain \eqref{eq:harmbndss} from the algebraic expression \eqref{eq:bndH} we first evaluate the matrix elements of the operator
\begin{equation}\label{eq:Oop}
 \mathcal{B}'=
  e^{\rho S_+}\psi(S_0+s)e^{-\rho S_+}-\psi(2s)\,.
\end{equation} 
Lemma~\ref{lemma:rotation with S+} yields
\begin{equation}\label{eq:opO}
\begin{split}
 \langle n|\mathcal{B}'|m\rangle
 &=h_s(m)\delta_{n,m} -\frac{\rho^{n-m}}{n-m}\mathds{1}_{\{n>m\}}\,.
 \end{split}
\end{equation} 
With the help of this formula we are in the position to evaluate the matrix elements of the boundary Hamiltonian. Obviously they can be written in terms of \eqref{eq:opO} as
\begin{equation}\label{eq:hamma}
\begin{split}
 \langle p|\mathcal{B}|q\rangle &=\sum_{n,m=0}^\infty \langle p| e^{-S_-}|n\rangle \langle n|\mathcal{B}'|m\rangle\langle m| e^{S_-}|q\rangle=\sum_{n=p}^\infty \sum_{m=0}^q \frac{(-1)^{n-p}}{(q-m)!(n-p)!}\frac{q!n!}{m!p!} \langle n|\mathcal{B}'|m\rangle\,,
 \end{split}
\end{equation} 
where we used the formula \eqref{eq:rotsm}.
To evaluate the right hand side of \eqref{eq:hamma} let us first consider the term that arises from the diagonal contribution of \eqref{eq:opO}. Using Lemma~\ref{lemma:rotation with S-} we find 
\begin{equation}\label{eq:dia}
\begin{split}
\frac{q!}{p!} \sum_{n=p}^\infty \sum_{m=0}^q \frac{(-1)^{n-p}}{(q-m)!(n-p)!}\frac{n!}{m!} h_s(m)\delta_{n,m}
  &=h_s(q)\delta_{q,p}-\varphi_s(q-p,q)\mathds{1}_{\{p<q\}}\,.
 \end{split}
\end{equation} 
Thus the diagonal contribution of \eqref{eq:opO} already gives us half of the terms we are looking for in order to recover \eqref{eq:harmbndss}. The remaining terms arise from the non-diagonal lower triangular part of \eqref{eq:opO}. In agreement with \eqref{eq:harmbndss} we obtain
\begin{equation}\label{eq:offdia}
 -\frac{q!}{p!}\sum_{n=p}^\infty \sum_{m=0}^q \frac{(-1)^{n-p}}{(q-m)!(n-p)!}\frac{n!}{m!}\frac{\rho^{n-m}}{n-m}\mathds{1}_{\{n>m\}}=\delta_{p,q}\sum_{k=1}^\infty\frac{\beta^k}{k}-\frac{\beta^p}{p}\mathds{1}_{\{p>q\}}\,.
\end{equation} 
%
%
To verify  \eqref{eq:offdia} we consider first the case where $p> q$ and rewrite the expression in terms of $\beta$ using a Pfaff transform. We find
\begin{equation}
\begin{split}
 \sum_{n=p}^\infty \sum_{m=0}^q \frac{(-1)^{n-p}}{(q-m)!(n-p)!}\frac{n!q!}{p!m!}\frac{\rho^{n-m}}{n-m}\mathds{1}_{\{n>m\}}
 &=q!\sum_{m=0}^q\frac{ \rho ^{p-m} \, _2F_1(p+1,p-m;-m+p+1;-\rho )}{(p-m) m! (q-m)!}
 \\
&=q!\sum_{m=0}^q\beta^{p-m}\frac{  \, _2F_1(-m,p-m;-m+p+1;\beta )}{(p-m)m!  (q-m)!} \\&=q!\sum_{m=0}^q\sum_{k=0}^m\frac{ (-1)^k \beta^{p+k-m}}{ (p+k-m)k! (m-k)! (q-m)!}
\,.
 \end{split}
\end{equation} 
Further a change of variables $k\to m-k$ and interchanging the sums yields
\begin{equation}
\begin{split}
q!\sum_{m=0}^q\sum_{k=0}^m\frac{ (-1)^k \beta^{p+k-m}}{ (p+k-m)k! (m-k)! (q-m)!}&
=q!\sum_{k=0}^q\sum_{m=k}^q\frac{ (-1)^{m-k} \beta^{p-k}}{ (p-k)(m-k)! k! (q-m)!}\\&=\sum_{k=0}^q\frac{ (-1)^{k} \beta^{p-k}}{ (p-k) k! }\delta_{k,0}\\
&=\frac{\beta^p}{p}\,.
 \end{split}
\end{equation} 
It remains the case $q\geq p$ for which we obtain
\begin{equation}\label{eq:summ}
\begin{split}
 \sum_{n=p}^\infty \sum_{m=0}^q \frac{(-1)^{n-p}}{(q-m)!(n-p)!}\frac{n!q!}{p!m!}\frac{\rho^{n-m}}{n-m}\mathds{1}_{\{n>m\}}&=  \sum_{m=0}^{q}\sum_{n=m+1}^\infty (-1)^{n-p} \binom{q}{m}\binom{n}{p}\frac{\rho^{n-m}}{n-m}\\
 &= \sum_{n=1}^\infty(-1)^{n+p} \frac{\rho^{n}}{n} \sum_{m=0}^{q} (-1)^{m}\binom{q}{m}\binom{n+m}{p} \\
 &= \sum_{n=1}^\infty(-1)^{n+p} \frac{\rho^{n}}{n}\frac{1}{p!}\frac{B(q-p,n+1)}{\Gamma(-p)} \\
 \end{split}
\end{equation} 
To get the second line we shift the boundaries of the sum $n\to m+n$  and simply exchanged the sums. 
Further in the third line we used \eqref{eq:GR} with   $a=n-p+1$ and $b=n+1$. Next we note that the  function  $B(q-p,n+1)$ is finite for $q>p$ but not for $q=p$. In the case $q\to p$ with $p\in \N_0$  we obtain
\begin{equation}
 \lim_{q\to p}\frac{B(q-p,n+1)}{\Gamma(-p)}=(-1)^p p!\,.
\end{equation} 
As a consequence we only get a non-zero contribution for $q=p$ from \eqref{eq:summ}.
We end up with
\begin{equation}
\begin{split}
   \sum_{n=1}^\infty(-1)^{n+p} \frac{\rho^{n}}{n}\frac{1}{p!}\frac{B(q-p,n+1)}{\Gamma(-p)} &
 =\delta_{p,q} \sum_{n=1}^\infty\frac{ (-\rho)^{n}   }{ n}
 =-\delta_{p,q} \sum_{n=1}^\infty\frac{ \beta ^n  }{ n}\,.
 \end{split}
\end{equation} 
This validates the equality in  \eqref{eq:offdia}
and thus concludes the equivalence of the expressions \eqref{eq:harmbndss} and \eqref{eq:bndH} of the boundary Hamiltonian.
%
\end{proof}

\begin{remark}
{\em
The boundary Hamiltonian \eqref{eq:bndH} can be brought to the form that was obtained from the boundary Yang-Baxter equation in \cite{Frassek:2019vjt}. For this we make use of the commutation relations for the $\mathfrak{sl}(2)$ generators
\begin{equation}\label{eq:simtr1}
 e^{\gamma S_\pm}S_\mp e^{-\gamma S_\pm}=S_\mp\mp2\gamma S_0+\gamma^2S_\pm\,,
\end{equation} 
and 
\begin{equation}\label{eq:simtr2}
 e^{\gamma S_\pm}S_0 e^{-\gamma S_\pm}=S_0\mp \gamma S_\pm\,.
\end{equation} 
where $\gamma\in \mathbb{C}$.
It follows that for any function $f$ that only depends on the Cartan element $f(S_0)$ (and not on $S_+$ or $S_-$) we have
\begin{equation}\label{eq:eqive}
  e^{\beta S_+}e^{-\alpha S_-}f(S_0)e^{\alpha S_-} e^{-\beta S_+}=e^{-\frac{\alpha}{1+\alpha\beta}S_-}e^{\beta(1+\alpha\beta)S_+}f(S_0)e^{-\beta(1+\alpha\beta)S_+} e^{\frac{\alpha}{1+\alpha\beta}S_-}\,.
\end{equation} 
The boundary Hamiltonian can then be represented in two equivalent ways
\begin{equation}
\begin{split}
 \mathcal{B}_i&=e^{-S_-^{[i]}}
  e^{\beta_i(1-\beta_i)^{-1} S_+^{[i]}}\psi(S_0^{[i]}+s)e^{-\beta_i(1-\beta_i)^{-1} S_+^{[i]}}e^{ S_-^{[i]}}-\psi(2s)\\
  &=
  e^{\beta_i S_+^{[i]}}e^{-(1-\beta_i)^{-1} S_-^{[i]}}\psi(S_0^{[i]}+s)e^{(1-\beta_i)^{-1}S_-^{[i]}} e^{-\beta_i S_+^{[i]}}-\psi(2s)\,.
  \end{split}
\end{equation} 
Similarly, there is a corresponding rewriting of the boundary  K-matrices that were given in \cite{Frassek:2019vjt}. It can be read off immediately from \eqref{eq:eqive}.
}
\end{remark}

\section{The absorbing dual process}
\label{sec:first step}

In this section we prove the basic properties of Section~\ref{basic}, in particular Proposition~\ref{prop:duality} and Proposition~\ref{prop:facmom-abs}.   
The proof of duality involves a change of representation for the underlying $\mathfrak{sl}(2)$ Lie algebra. We introduce a representation in terms of differential operators and explain how it is intertwined with the representation \eqref{eq:sl2ac}.

\begin{lemma}[Intertwining]
\label{lemma:inter}
Consider the differential operators 
working on differentiable functions
$f: \R \to \R$ as
\begin{equation}
\label{eq:repr-cont} 
\mathscr{S}_0 f = \left(\rho\frac{\partial}{\partial \rho}+s\right)f  \qquad\qquad
\mathscr{S}_- f = \frac{\partial f}{\partial \rho}
\qquad\qquad
\mathscr{S}_+ f = \rho\left(\rho\frac{\partial}{\partial \rho}+2s\right)f 
\end{equation}
They satisfy the commutation relations 
\begin{equation}
\label{eq:comm}
 [\mathscr{S}_0,\mathscr{S}_\pm]=\pm \mathscr{S}_\pm\,,\qquad\qquad 
 [\mathscr{S}_+,\mathscr{S}_-]=-2\mathscr{S}_0\,
\end{equation} 
and therefore they provide a representation of the $\mathfrak{sl}(2) $ Lie algebra.
Define the operator
\begin{equation}
 \mathcal{I}= \sum_{n=0}^\infty  \rho^{n}\langle n|.
\end{equation} 
Then $\mathcal{I}$ intertwines the representation in \eqref{eq:repr-cont} 
and the representation in 
\eqref{eq:sl2ac}.
Namely
\begin{equation}\label{eq:intertw}
\mathscr{S}_0\,\mathcal{I} = \mathcal{I}\,S_0  \,,\qquad \qquad
\mathscr{S}_-\,\mathcal{I} = \mathcal{I}\,S_-\,,\qquad \qquad
\mathscr{S}_+\,\mathcal{I} = \mathcal{I}\,S_+ \;.
\end{equation} 
\end{lemma}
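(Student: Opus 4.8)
The plan is to reduce everything to the action on the canonical basis $\{|m\rangle : m\in\N_0\}$ of $\ell_2(\N_0)$, exploiting the single computational fact that $\mathcal{I}|m\rangle=\sum_{n\ge 0}\rho^{n}\langle n|m\rangle=\rho^{m}$. Under $\mathcal{I}$ the basis vector $|m\rangle$ is therefore sent to the monomial $\rho^{m}$, and each intertwining relation in \eqref{eq:intertw} becomes a comparison between how a differential operator from \eqref{eq:repr-cont} acts on $\rho^{m}$ and how the corresponding discrete generator from \eqref{eq:sl2ac} acts on $|m\rangle$ (read off again through $\mathcal{I}$). Since $\{|m\rangle\}$ spans a dense subspace and all operators in sight are linear, agreement on each $|m\rangle$ will suffice.

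First I would dispatch the claim that \eqref{eq:repr-cont} is a representation, i.e. the commutation relations \eqref{eq:comm}. Writing $\theta:=\rho\,\partial_\rho$ for the Euler operator, one has $\mathscr{S}_0=\theta+s$, $\mathscr{S}_-=\partial_\rho$ and $\mathscr{S}_+=\rho(\theta+2s)$. The elementary commutators $[\theta,\partial_\rho]=-\partial_\rho$, $[\theta,\rho]=\rho$ and $[\rho,\partial_\rho]=-1$ then give, in two lines, $[\mathscr{S}_0,\mathscr{S}_-]=-\mathscr{S}_-$, $[\mathscr{S}_0,\mathscr{S}_+]=\mathscr{S}_+$, and, using $[\mathscr{S}_+,\mathscr{S}_-]=\rho[\theta+2s,\partial_\rho]+[\rho,\partial_\rho](\theta+2s)=-2\theta-2s=-2\mathscr{S}_0$, exactly the relations \eqref{eq:comm}.

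For the intertwining I would verify each of the three identities in \eqref{eq:intertw} by evaluating both sides on an arbitrary $|m\rangle$ and comparing the resulting functions of $\rho$. Using $\mathcal{I}|m\rangle=\rho^{m}$ together with \eqref{eq:repr-cont} and \eqref{eq:sl2ac}: for $S_0$, one has $\mathscr{S}_0\mathcal{I}|m\rangle=(\theta+s)\rho^{m}=(m+s)\rho^{m}$ and $\mathcal{I}S_0|m\rangle=(m+s)\mathcal{I}|m\rangle=(m+s)\rho^{m}$; for $S_-$, $\mathscr{S}_-\mathcal{I}|m\rangle=\partial_\rho\rho^{m}=m\rho^{m-1}$ and $\mathcal{I}S_-|m\rangle=m\,\mathcal{I}|m-1\rangle=m\rho^{m-1}$; for $S_+$, $\mathscr{S}_+\mathcal{I}|m\rangle=\rho(\theta+2s)\rho^{m}=(m+2s)\rho^{m+1}$ and $\mathcal{I}S_+|m\rangle=(m+2s)\mathcal{I}|m+1\rangle=(m+2s)\rho^{m+1}$. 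In each case the two sides coincide for every $m$, and extending by linearity over the space of finitely supported sequences (the natural common domain) yields the operator identities \eqref{eq:intertw}.

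There is no genuine obstacle here: the argument is pure bookkeeping once the key image $\mathcal{I}|m\rangle=\rho^{m}$ is recorded. The only point deserving a sentence of care is the boundary case $m=0$ of the $S_-$ relation, where both sides vanish consistently ($\partial_\rho 1=0$ matches $S_-|0\rangle=0$), so that the computation is uniform in $m\in\N_0$.
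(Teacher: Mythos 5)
Your proof is correct and takes essentially the same route as the paper: a direct, elementary verification of the commutators followed by a term-by-term check of the three intertwining relations on the canonical basis. The only cosmetic difference is that you evaluate both sides on kets $|m\rangle$ (via $\mathcal{I}|m\rangle=\rho^{m}$), whereas the paper manipulates the formal sum $\sum_{n\ge 0}\rho^{n}\langle n|$ directly, using the left action of $S_0,S_\pm$ on the bras $\langle n|$ and re-indexing the sums; the two computations are exact duals of one another.
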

\begin{proof}
The fact that the operators defined in \eqref{eq:repr-cont} do satisfy the $\mathfrak{sl}(2)$ commutation relations in \eqref{eq:comm} can
easily be verified by a direct computation. 
Next, we prove the intertwining relation. 
We have
\begin{equation}
\mathscr{S}_0\,\mathcal{I} = \left(\rho\frac{\partial}{\partial\rho}+s\right)\sum_{n=0}^\infty  \rho^n\langle n|  = \sum_{n=0}^\infty (n+s) \rho^n\langle n| = \mathcal{I}\,S_0
\end{equation} 
where, in the last equality, the left action of $S_0$ on the row vector $\langle n|$ has been used, i.e. $\langle n|S_0  =  (n+s)\langle n|$.
Similarly, we have
\begin{equation}
\mathscr{S}_-\,\mathcal{I} = \frac{\partial}{\partial\rho} \sum_{n=0}^\infty  \rho^n\langle n| = \sum_{n=1}^\infty  n\rho^{n-1}\langle n|= \sum_{n'=0}^\infty  (n'+1)\rho^{n'}\langle n'+1| = \mathcal{I}\,S_-
 \end{equation} 
where now we used that
$\langle n'| S_-   =  (n'+1)\langle n'+1|$
and we made a shift $n=n'+1$.
Finally we also verify that
\begin{equation}
\mathscr{S}_+\,\mathcal{I} =\rho\left(\rho\frac{\partial}{\partial\rho} +2s\right)\sum_{n=0}^\infty  \rho^n\langle n| = \sum_{n=0}^\infty (n+2s) \rho^{n+1}\langle n| = \sum_{n'=1}^\infty (n'+2s-1) \rho^{n'}\langle n'-1|  = \mathcal{I}\,S_+
\end{equation} 
where in the last step it has been used that  $\langle n'|S_+  =  (n'-1+2s)\langle n'-1|$, followed by the shift
$n=n'-1$.
\end{proof}

\bigskip

Using the above lemma we prove the relation between the
original process and the dual process.
\begin{proof}[{\bf Proof of Proposition \ref{prop:duality}}]
As a preliminary step we write the duality function \eqref{eq:duality-fct} algebraically. To this aim, for each site $i$ of the chain, we introduce the diagonal matrix $d^{[i]}$  defined by
\begin{equation}\label{eq:r}
d^{[i]} =  \frac{\Gamma(S_0^{[i]}-s+1)\Gamma(2s)}{\Gamma(S_0^{[i]}+s)}.
\end{equation} 
With this definition the matrix $D$ reads
\be
\label{eq:D-matrix-form}
D =\mathcal{I}^{[0]}(\rho_L) \left( \prod_{i=1}^{N} D^{[i]} \right) \mathcal{I}^{[N+1]}(\rho_R)
\ee
where the product has to be interpreted as a tensor product, and for all $i\in\{1,\ldots,N\}$
\be
\label{eq:bulk-duality-fct}
D^{[i]} =  d^{[i]} e^{S_+^{[i]}} 
\ee
and
\be
\mathcal{I}^{[0]}(\rho_L)= \sum_{m_0=0}^\infty  \rho_L^{m_0}\langle m_{0}|
\qquad\qquad
\mathcal{I}^{[N+1]}(\rho_R)= \sum_{m_{N+1}=0}^\infty  \rho_R^{m_{N+1}}\langle m_{N+1}| .
\ee 
Thus the matrix $D$ acts on the tensor product of $N+2$ sites.
The algebraic expression in \eqref{eq:D-matrix-form} can be verified by using the action of the generators in \eqref{eq:sl2ac}.

Next, since we are dealing with Markov chains taking values on countable state spaces, 
to establish the semigroup duality  \eqref{eq:duality-relation},
it is enough to prove the corresponding duality relation for the Markov generators.
Namely, for all
$m\in {\mathscr C}_{N}$ and $\xi \in {\mathscr C}_{N+2}$
we prove that 
\be
({\mathscr L} D(\cdot,\xi)) (m) = ({\mathscr L}^{\mathrm{dual}} D(m,\cdot)) (\xi)
\ee
or, more explicitly,
\be
\label{eq:h724}
\sum_{m'\in {\mathscr C}_{N}} L(m,m') D(m',\xi) = \sum_{\xi'\in {\mathscr C}_{N+2}} L^{\mathrm{dual}}(\xi,\xi') D(m,\xi').
\ee
This equality says that the matrix $D$ with entries $D(m,\xi)$ given in \eqref{eq:duality-fct}
is the intertwiner between the matrix $L$ describing the action
of the generator ${\mathscr L}$ and the transposed of the matrix $L^{\mathrm{dual}}$
describing the action of the generator 
${\mathscr L^{\mathrm{dual}}}$, i.e.
\be
\label{eq:pluto}
LD = D(L^{\mathrm{dual}})^T.
\ee
In view of the identification \eqref{eq:identify}, the previous equation is equivalent to 
\be
\label{eq:tobeproved}
H^TD = DH^{\mathrm{dual}}.
\ee
Thus our task is to prove \eqref{eq:tobeproved}.
Recalling the algebraic form of the Hamiltonian in \eqref{eq:fullham-sss} we have that
$H^T$ is the sum of $N+1$ terms
\be
H^T=\mathcal{B}_1^T+\sum_{i=1}^{N-1}\mathcal{H}_{i,i+1}^T+\mathcal{B}_N^T.
\ee
Similarly, combining the Definition \ref{def:processdual} of the dual process 
with Lemma \ref{lemma:bulk}, it follows that also $H^{\mathrm{dual}}$
can be written as the sum of $N+1$ terms
\be
\label{eq:ham-dual-def}
H^{\mathrm{dual}}=\mathcal{H}_{0,1}^{\leftarrow}+\sum_{i=1}^{N-1}\mathcal{H}_{i,i+1}+\mathcal{H}_{N,N+1}^{\rightarrow}.
\ee
We shall prove that the intertwining \eqref{eq:tobeproved}  holds because the 
contributions to the transposed and
to the dual Hamiltonian are intertwined 
term by term,
i.e. 
\be
\label{eq:verify1}
\mathcal{B}_1^TD = D \mathcal{H}_{0,1}^{\leftarrow}
\ee
and
\be
\label{eq:verify2}
\mathcal{H}_{i,i+1}^TD = D \mathcal{H}_{i,i+1}
\ee
and
\be
\label{eq:verify3}
\mathcal{B}_N^TD = D \mathcal{H}_{N,N+1}^{\rightarrow}\;.
\ee
One may wonder why the extra sites $0$ and $N+1$ do not appear in the l.h.s. of  \eqref{eq:verify1} and \eqref{eq:verify3}. We remark that
while the operators $\mathcal{B}_i=\mathcal{B}_i(\rho_i)$ can be expanded in positive powers of the densities $\rho_i$, 
the operators $\mathcal{H}_{0,1}^{\leftarrow}$ and $\mathcal{H}_{N,N+1}^{\rightarrow}$ are independent of the densities. The role of the extra site in the $\mathcal{B}_i$ is taken by the densities $\rho_i$. Indeed, the space spanned by the polynomials in $\rho$,  $V=\text{span}(1,\rho,\rho^2,\ldots)$, is isomorphic to the Hilbert space at the extra sites  $V\simeq  \ell_2(\mathbb{N}_0)$. As we shall see below the intertwiner $\mathcal{I}$ provides such isomorphism.

We are now in the position to verify \eqref{eq:verify1}, \eqref{eq:verify2}, \eqref{eq:verify3}.

\bigskip
\noindent
\paragraph{Boundary duality: proof of equations \eqref{eq:verify1} and \eqref{eq:verify3}. }
The proof is essentially the same for the two cases, thus we only prove  \eqref{eq:verify1}.
We start by observing that the following conjugation
relation holds
\be
\label{eq:trans}
d^{[i]} S_{\pm}^{[i]} (d^{[i]})^{-1} = (S_{\mp}^{[i]})^T
\ee
whereas $S_{0}^{[i]} = (S_{0}^{[i]})^T$.
Again, this can be checked using \eqref{eq:sl2ac} and \eqref{eq:r}. We shall now consider separately the left and right hand side of \eqref{eq:verify1}. Starting from the algebraic expression \eqref{eq:bndH} of the left-boundary term
$\mathcal{B}_1$ we find for the transposed
\begin{equation}\label{eq:bndH-trans}
 \mathcal{B}_1^T= d^{[1]} e^{S_+^{[1]}}
  e^{-\rho_L S_-^{[1]}}\Big (\psi(S_0^{[1]}+s)-\psi(2s)\Big)e^{\rho_L S_-^{[1]}}e^{ - S_+^{[1]}} (d^{[1]})^{-1}\,.
\end{equation}
Combining this with \eqref{eq:D-matrix-form}, we arrive to the following expression for the left hand side of \eqref{eq:verify1}
\begin{equation}
\label{eq:left1}
 \mathcal{B}_1^T D = 
D^{[1]}  \left [e^{-\rho_L S_-^{[1]}}\Big (\psi(S_0^{[1]}+s)-\psi(2s)\Big)e^{\rho_L S_-^{[1]}}\mathcal{I}^{[0]} \right] \prod_{i=2}^{N} D^{[i]} \mathcal{I}^{[N+1]} \,.
\end{equation}
Now, it is convenient to interpret the $\rho_L$ in \eqref{eq:left1} as a multiplication operator 
at the extra site $0$ in the representation \eqref{eq:repr-cont}, i.e. for a differentiable function $f:\R\rightarrow\R$
\begin{equation}
 \rho_L f(\rho_L)=\mathscr{S}^{[0]}_+(\mathscr{S}_0^{[0]}+s)^{-1} f(\rho_L) \;.
\end{equation} 
This allows to rewrite  \eqref{eq:left1}
 as
\begin{equation}
\label{eq:left11}
 \mathcal{B}_1^T D = 
D^{[1]}  \left [e^{- \mathscr{S}^{[0]}_+(\mathscr{S}_0^{[0]}+s)^{-1}  S_-^{[1]}}\Big (\psi(S_0^{[1]}+s)-\psi(2s)\Big)e^{ \mathscr{S}^{[0]}_+(\mathscr{S}_0^{[0]}+s)^{-1}  S_-^{[1]}}\mathcal{I}^{[0]} \right] \prod_{i=2}^{N} D^{[i]}\mathcal{I}^{[N+1]} \,.
\end{equation}

We then turn to the right hand side of \eqref{eq:verify1}. Using Lemma~\ref{lemma:bulk} to express the term  $\mathcal{H}_{0,1}^{\leftarrow}$ and combining  with \eqref{eq:D-matrix-form}, we find 
\begin{equation}
\label{eq:left2}
 D \mathcal{H}_{0,1}^{\leftarrow} = D^{[1]}  \left[   \mathcal{I}^{[0]}  e^{-S_+^{[0]}(S_0^{[0]}+s)^{-1}S_-^{[1]}}\Big(\psi\left(S_0^{[1]}+s\right)-\psi(2s)\Big)e^{S_+^{[0]}(S_0^{[0]}+s)^{-1}S_-^{[1]}} \right] \prod_{i=2}^{N} D^{[i]}   \mathcal{I}^{[N+1]} 
\end{equation} 
Now, thanks to Lemma \ref{lemma:inter}, we have  that $\mathcal{I}^{[0]}$ is  the intertwiner between
the generators in the representation \eqref{eq:repr-cont} in terms of differential operators and the generators in the representation \eqref{eq:sl2ac} in terms of the infinite-dimensional matrix.
As a consequence, the terms between square brackets in \eqref{eq:left11} and in \eqref{eq:left2} are the same
and the equation \eqref{eq:verify1}  follows.

\bigskip
\noindent
\paragraph{Bulk duality: proof of equation \eqref{eq:verify2}.}
We start by observing that the elements of the diagonal matrix $d^{[i]}d^{[i+1]}$ coincide with the inverse of the following unnormalized distribution 
\be
\mu^{\mathrm{rev}}(m_i,m_{i+1}) = \frac{\Gamma(m_i+2s)}{m_i! \Gamma(2s)} \frac{\Gamma(m_{i+1}+2s)}{m_{i+1}! \Gamma(2s)}\;.
\ee
This distribution is reversible for the Hamiltonian density ${\cal H}_{i,i+1}$.
In other words, it is immediately verified that the detailed-balance relation 
for the Hamiltonian density ${\cal H}_{i.i+1}$
with respect to the measure $\mu^{\mathrm{rev}}$
holds  and may be written as
\be
\label{eq:det-bal}
{\cal H}_{i,i+1}^T d^{[i]}d^{[i+1]} = d^{[i]}d^{[i+1]} {\cal H}_{i,i+1}.
\ee
Equivalently, this relation may also be verified 
starting from the two-site Casimir \eqref{eq:twositeC}
and using the relations \eqref{eq:trans} to compute its transposed, which then gives
\begin{equation}
 C_{i,i+1}^T=d^{[i]}d^{[i+1]}C_{i,i+1}(d^{[i]})^{-1}(d^{[i+1]})^{-1}\;.
\end{equation} 
Then  \eqref{eq:det-bal} follows from the algebraic form
of the Hamiltonian density given in  \eqref{eq:fadham}.

Considering now the left hand side of \eqref{eq:verify2} and using the algebraic description of the matrix $D$ in \eqref{eq:D-matrix-form} we may write
\be
\label{eq:det-bal2}
\mathcal{H}_{i,i+1}^TD  =  
\mathcal{H}_{i,i+1}^T d^{[i]} d^{[i+1]} e^{S_+^{[i]}} e^{S_+^{[i+1]}} \mathcal{I}^{[0]}\mathcal{I}^{[N+1]} \prod_{j\neq i,i+1}D^{[j]}\,.
\ee
We insert \eqref{eq:det-bal} into \eqref{eq:det-bal2} and we find
\begin{equation}
\mathcal{H}_{i,i+1}^TD 
= 
 d^{[i]} d^{[i+1]} \mathcal{H}_{i,i+1} e^{S_+^{[i]}} e^{S_+^{[i+1]}} \mathcal{I}^{[0]}\mathcal{I}^{[N+1]} \prod_{j\neq i,i+1}D^{[j]}\,.
\end{equation}
Upon using the $\mathfrak{sl}(2)$ invariance of the Hamiltonian density stated in 
\eqref{eq:comHden} we may exchange the Hamiltonian density and the creation operators
thus arriving to
\begin{equation}
\mathcal{H}_{i,i+1}^TD 
=    
 d^{[i]} d^{[i+1]}e^{S_+^{[i]}} e^{S_+^{[i+1]}}  \mathcal{H}_{i,i+1}  \mathcal{I}^{[0]}\mathcal{I}^{[N+1]} \prod_{j\neq i,i+1}D^{[j]}.
\end{equation}
Equation \eqref{eq:verify2} then follows. This concludes the proof of Proposition~\ref{prop:duality}.

\end{proof}

\begin{remark}[Intertwined symmetries]
{\em
From the algebraic description we see that the dual process has two evident  symmetries, namely
\begin{equation}
 \left[H^{\mathrm{dual}},\sum_{k=0}^{N+1}S_a^{[k]}\right]=0\,,\qquad a\in \{0,-\}\,.
\end{equation} 
When $a=0$ this symmetry expresses the particle conservation of the dual process. If $a=-$ then the symmetry is at the origin of the consistency property of the dual process remarked in \cite{2019arXiv190710583C}.
Here we observe that the intertwiner $D$ in \eqref{eq:D-matrix-form} between $H^T$ and $H^{\mathrm{dual}}$ allows to map those symmetries to symmetries of the original process. More precisely, as a consequence of the intertwining relations \eqref{eq:intertw},   the   case $a=0$ leads to 
\begin{equation}\label{eq:sym1}
\left[\rho_L\partial_{\rho_L}+\rho_R\partial_{\rho_R}-S_0^{\mathrm{tot}}+{S_-^{\mathrm{tot}}}, H\right]=0\,,
\end{equation} 
while the case $a=-$ gives
\begin{equation}\label{eq:sym2}
 [\partial_{\rho_L}+\partial_{\rho_R}-S_+^{\mathrm{tot}}+2S_0^{\mathrm{tot}}-S_-^{\mathrm{tot}},H]=0\,.
\end{equation} 
It is noteworthy that a symmetry of the non-equilibrium Hamiltonian arises from the extension of the  Hilbert space related to the dual process.
It would be interesting to understand how symmetries of this type manifest themselves in the study of the spectrum of the process or equivalently of the open spin chain.

}
\end{remark}

\begin{proof}[{\bf Proof of Proposition~\ref{prop:facmom-abs}}]
It is an immediate consequence of
Proposition~\ref{prop:duality}. It follows by choosing $\check\xi = (0,\xi_1,\ldots,\xi_N,0)$ in the duality relation \eqref{eq:duality-relation}
and taking the limit $t\to\infty$. The left hand side of \eqref{eq:duality-relation} will
converge as $t\to\infty$ to the scaled factorial moments of Definition \ref{def:fac}.
The right hand side of \eqref{eq:expansion} is instead a consequence of
the fact that the dual process, as $t\to\infty$,  voids the chain and the configurations
of the form $k \delta_0 + (|\xi|-k) 
\delta_{N+1}$ are absorbing.
\end{proof}

\section{The sequence of similarity transformations}
\label{sec:transform}

In this section we prove  intermediate results that will then be used in Section~\ref{sec:proof-main-th} 
for the proof of the main results. The key idea to compute the absorption probability of the
dual process and thus characterize the stationary state is to implement a sequence of 
similarity transformations to diagonalize the boundaries of the open Hamiltonian. Two of those transformations are local and discussed in Section~\ref{sec:equi}. One transformation is non-local  and given in Section~\ref{app:nonlocal}. The construction of the non-local transformation is based on the identification of a non-trivial symmetry of the open chain obtained from the Quantum Inverse Scattering Method, see Section~\ref{sec:QISM}.  As a by-product of the sequence of similarity transformations a mapping between equilibrium and non-equilibrium is discussed in Section~\ref{sec:equil}.

\subsection{Local transformations}
\label{sec:equi}

Considering the Hamiltonian in \eqref{eq:fullham-sss} we see that, as a consequence of \eqref{eq:comHden}, the bulk part commutes with the sum of the Lie algebra generators 
\begin{equation}\label{eq:bulkinv2}
 \left[S_{a}^{\mathrm{tot}},\sum_{i=1}^{N-1}\mathcal{H}_{i,i+1}\right]=0 \,,
\end{equation} 
where
\begin{equation}\label{eq:totspin}
S_{a}^{\mathrm{tot}}:= \sum_{i=1}^N S_{a}^{[i]} \qquad\qquad \text{for } a\in\{+,-,0\}\,.
\end{equation} 
However, this is not the case for the boundary terms ${\cal B}_1$ and ${\cal B}_N$ in the Hamiltonian \eqref{eq:fullham-sss}. 
It follows that local transformations of the form $f(S_{+}^{\mathrm{tot}},S_{-}^{\mathrm{tot}},S_{0}^{\mathrm{tot}})$ will only affect the boundaries and thus can be used to simplify them. 

\begin{proposition}[Local transformation]
\label{prop:local}

\phantom{2}
\begin{enumerate}
 \item[i)]
Define the Hamiltonian $H'$
\begin{equation}
\label{eq:h prime transform}
{H}':=e^{S_-^{\mathrm{tot}}} {H} e^{-S_-^{\mathrm{tot}}}\,.
\end{equation} 
where $S_-^{\mathrm{tot}}$ is given in \eqref{eq:totspin}.
Then 
\begin{equation}\label{eq:Hp}
 {H}'=\mathcal{B}_1'+\sum_{i=1}^{N-1}\mathcal{H}_{i,i+1}+\mathcal{B}_N'
\end{equation} 
with boundary terms
\begin{align}
  \mathcal{B}_1' &=
  e^{\rho_L S_+^{[1]}}\Big(\psi(S_0^{[1]}+s)-\psi(2s)\Big)e^{-\rho_L S_+^{[1]}}
\\
  \mathcal{B}_N' &=
  e^{\rho_R S_+^{[N]}}\Big(\psi(S_0^{[N]}+s)-\psi(2s)\Big)e^{-\rho_R S_+^{[N]}}\,.
 \end{align}
 
 \item[ii)]
Define
\begin{equation}
{H}'':=e^{-\rho_R S_+^{\mathrm{tot}}}{H}'e^{\rho_R S_+^{\mathrm{tot}}}\,.
\end{equation} 
Then 
\begin{equation}\label{eq:Hpp}
{H}''=\mathcal{B}_1''+\sum_{i=1}^{N-1}\mathcal{H}_{i,i+1}+\mathcal{B}_N''
\end{equation} 
with boundary terms
\begin{align}
  \mathcal{B}_1'' &=
  e^{(\rho_L-\rho_R) S_+^{[1]}}\Big(\psi(S_0^{[1]}+s)-\psi(2s)\Big)e^{-(\rho_L-\rho_R) S_+^{[1]}}
\\
  \mathcal{B}_N'' &=
 \psi(S_0^{[N]}+s)-\psi(2s)\,.
 \end{align}

\end{enumerate}

\end{proposition}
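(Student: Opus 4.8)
The plan is to exploit two facts already established in the excerpt. First, the bulk Hamiltonian density is $\mathfrak{sl}(2)$-invariant, so that $[S_a^{\mathrm{tot}},\sum_{i=1}^{N-1}\mathcal{H}_{i,i+1}]=0$ for every $a\in\{+,-,0\}$, as recorded in \eqref{eq:bulkinv2}. Second, the boundary terms $\mathcal{B}_1$ and $\mathcal{B}_N$ act nontrivially only on the single sites $1$ and $N$. The first fact guarantees at once that conjugating $H$ by $e^{S_-^{\mathrm{tot}}}$, and later $H'$ by $e^{-\rho_R S_+^{\mathrm{tot}}}$, leaves the bulk sum $\sum_{i=1}^{N-1}\mathcal{H}_{i,i+1}$ unchanged, so in both parts I only need to follow how the boundary terms transform. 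The second fact lets me replace conjugation by the total operator $e^{S_-^{\mathrm{tot}}}=\prod_{i=1}^N e^{S_-^{[i]}}$ (respectively $e^{-\rho_R S_+^{\mathrm{tot}}}=\prod_{i=1}^N e^{-\rho_R S_+^{[i]}}$) by conjugation with the single-site factor at the relevant boundary site, since generators at distinct sites commute and the boundary operator acts as the identity on all other sites.

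For part i) I would start from the algebraic form of the boundary Hamiltonian in Lemma~\ref{lem:bnd}, namely
\begin{equation}
\mathcal{B}_i = e^{-S_-^{[i]}} e^{\rho_i S_+^{[i]}}\Big(\psi(S_0^{[i]}+s)-\psi(2s)\Big) e^{-\rho_i S_+^{[i]}} e^{S_-^{[i]}}
\end{equation}
with $\rho_1=\rho_L$ and $\rho_N=\rho_R$. Conjugating $\mathcal{B}_1$ by $e^{S_-^{\mathrm{tot}}}$ reduces to conjugation by $e^{S_-^{[1]}}$, whereupon the outer factors $e^{\pm S_-^{[1]}}$ in the expression for $\mathcal{B}_1$ are cancelled directly, leaving exactly $\mathcal{B}_1'=e^{\rho_L S_+^{[1]}}(\psi(S_0^{[1]}+s)-\psi(2s))e^{-\rho_L S_+^{[1]}}$. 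The identical cancellation applied to $\mathcal{B}_N$ produces $\mathcal{B}_N'$, which completes part i). Note that the nontrivial commutation relations \eqref{eq:simtr1}--\eqref{eq:simtr2} are not needed here; the mechanism is simply the telescoping of the $S_-$ rotations.

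For part ii) I would conjugate $H'$ from part i) by $e^{-\rho_R S_+^{\mathrm{tot}}}$. Once more the bulk is untouched and each boundary conjugation localizes to its site. For the right boundary, conjugating $\mathcal{B}_N'$ by $e^{-\rho_R S_+^{[N]}}$ cancels the two exponentials $e^{\pm\rho_R S_+^{[N]}}$ and yields the purely diagonal term $\mathcal{B}_N''=\psi(S_0^{[N]}+s)-\psi(2s)$. For the left boundary I would use only that $S_+^{[1]}$ commutes with itself, so that $e^{-\rho_R S_+^{[1]}}e^{\rho_L S_+^{[1]}}=e^{(\rho_L-\rho_R)S_+^{[1]}}$, and likewise on the right, collapsing the exponentials into $\mathcal{B}_1''=e^{(\rho_L-\rho_R)S_+^{[1]}}(\psi(S_0^{[1]}+s)-\psi(2s))e^{-(\rho_L-\rho_R)S_+^{[1]}}$.

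The computation is essentially mechanical, so I do not expect a serious obstacle. The only point requiring care is the reduction from the total-spin conjugation to a single-site conjugation, which rests precisely on the bulk invariance \eqref{eq:bulkinv2} and on the one-site support of the boundary terms; once that reduction is in place, both identities follow from elementary manipulations of commuting exponentials of $\mathfrak{sl}(2)$ generators.
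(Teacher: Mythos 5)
Your proposal is correct and follows essentially the same route as the paper: the paper's proof likewise invokes Lemma~\ref{lem:bnd} for the algebraic form of the boundary terms together with the $\mathfrak{sl}(2)$ invariance of the bulk \eqref{eq:bulkinv2}, with item ii) following from item i) by the same telescoping of single-site exponentials. Your write-up merely makes explicit the localization of the total-spin conjugation to the boundary sites, which the paper leaves implicit.
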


\begin{proof}
Recalling the Hamiltonian \eqref{eq:fullham-sss},
the proof of item i) is an immediate consequence of Lemma~\ref{lem:bnd} which provides the algebraic description of the boundary terms of the Hamiltonian and of the $\mathfrak{sl}(2)$ invariance of the bulk part of the Hamiltonian, cf.~\eqref{eq:bulkinv2}. Item ii) follows  from item i).
\end{proof}

\begin{remark}
 {\em
 The Hamiltonian $H'$ is closely related to the Hamiltonian of the  dual process $H^{\mathrm{dual}}$ defined in \eqref{eq:ham-dual-def}. More precisely one has 
\begin{equation}
 (H')^T \mathcal{I}^{0}\mathcal{I}^{N+1}\prod_{i=1}^Nd^{[i]} =\mathcal{I}^{0}\mathcal{I}^{N+1}\prod_{i=1}^N d^{[i]} H^{\mathrm{dual}}
\end{equation} 
This equation itself has the form of a duality relation, compare~\eqref{eq:tobeproved}. However the Hamiltonian $H'$ does not describe a stochastic process.
 }
\end{remark}
We also observe that the boundaries of $H'$ contain only the creation  and the number operator and therefore are  upper-triangular in our basis. In the  Hamiltonian $H''$ one of the two boundaries is diagonal.  Interestingly, there exists yet another similarity transformation $\mathcal{W}$, which is non-local,  that also diagonalises the other boundary. This transformation is obtained in Section~\ref{app:nonlocal} from a non-local charge of the Hamiltonian $H''$ that is discussed in the next section.

\subsection{Hidden symmetries: the non-local charges.}
\label{sec:QISM}

\begin{proposition}[Non-local charge $Q''$]
\label{prop:symm}
The operator $Q''$ defined by
\begin{equation}\label{eq:Qprime}
 Q''=Q^{\circ} -(\rho_L-\rho_R)Q_+
\end{equation}
where
\be
Q^{\circ} = S_0^{\mathrm{tot}}\left(S_0^{\mathrm{tot}}+2s-1\right)
\ee 
and
\begin{equation}\label{eq:Qplus}
\begin{split}
 Q_+&=sS_+^{\mathrm{tot}}+\sum_{i=1}^N S_+^{[i]}\left(S_0^{[i]}+2\sum_{j=i+1}^NS_0^{[j]}\right)\\
 \end{split}
\end{equation} 
commutes with the Hamiltonian $H''$, i.e.
\be
[H'',Q''] =0\;.
\ee
Here $S^{\mathrm{tot}}_a$ with $a\in\{0,+,-\}$ are the total spin operators in \eqref{eq:totspin}.
\end{proposition}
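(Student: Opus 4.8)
The plan is to prove $[H'',Q'']=0$ by direct computation, expanding the commutator as a power series in the single parameter $\lambda:=\rho_L-\rho_R$ and checking that each coefficient vanishes. Only the left boundary of $H''$ carries the $\lambda$-dependence: with $B:=\psi(S_0^{[1]}+s)-\psi(2s)$ one has $\mathcal{B}_1''=e^{\lambda S_+^{[1]}}Be^{-\lambda S_+^{[1]}}=\sum_{k\ge0}\lambda^k\Delta_k$, where $\Delta_0=B$, $\Delta_{k+1}=\tfrac1{k+1}[S_+^{[1]},\Delta_k]$, and every $\Delta_k$ is supported on site $1$. Since $Q''=Q^{\circ}-\lambda Q_+$ is affine in $\lambda$, collecting powers of $\lambda$ reduces the statement to three families of identities: at order $\lambda^0$, $[H''(0),Q^{\circ}]=0$; at order $\lambda^1$, $[H''(0),Q_+]=[\Delta_1,Q^{\circ}]$; and at order $\lambda^{k+1}$ for $k\ge1$, $[\Delta_k,Q_+]=[\Delta_{k+1},Q^{\circ}]$. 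The order-$\lambda^0$ identity is immediate: at $\lambda=0$ both boundaries of $H''(0)$ are diagonal functions of $S_0^{[1]}$ and $S_0^{[N]}$, while the bulk commutes with $S_0^{\mathrm{tot}}$ by \eqref{eq:comHden}; hence $H''(0)$ commutes with $S_0^{\mathrm{tot}}$ and therefore with $Q^{\circ}=S_0^{\mathrm{tot}}(S_0^{\mathrm{tot}}+2s-1)$.

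I would handle the bulk contribution to the order-$\lambda^1$ identity first. On a single bond $\{i,i+1\}$, the invariance \eqref{eq:comHden} allows me to drop from $Q_+$ every piece built out of the bond totals $S_a^{[i]}+S_a^{[i+1]}$ and out of operators located away from the bond; after subtracting the bond part of $S_+^{\mathrm{tot}}S_0^{\mathrm{tot}}$, the only surviving contribution is the antisymmetric combination
\[
A_i:=S_+^{[i]}S_0^{[i+1]}-S_+^{[i+1]}S_0^{[i]}=\tfrac12\big((S_+^{[i]}-S_+^{[i+1]})(S_0^{[i]}+S_0^{[i+1]})-(S_+^{[i]}+S_+^{[i+1]})(S_0^{[i]}-S_0^{[i+1]})\big),
\]
so that $[\sum_{i=1}^{N-1}\mathcal{H}_{i,i+1},Q_+]=\sum_{i=1}^{N-1}[\mathcal{H}_{i,i+1},A_i]$. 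Using \eqref{eq:comHden} once more, $[\mathcal{H}_{i,i+1},A_i]=\tfrac12\big([\mathcal{H}_{i,i+1},S_+^{[i]}-S_+^{[i+1]}](S_0^{[i]}+S_0^{[i+1]})-(S_+^{[i]}+S_+^{[i+1]})[\mathcal{H}_{i,i+1},S_0^{[i]}-S_0^{[i+1]}]\big)$, and evaluating the two remaining commutators on the basis through the explicit action \eqref{eq:hacts-ss} exhibits $[\mathcal{H}_{i,i+1},A_i]$ as a lattice current, i.e. a difference of operators attached to the two ends of the bond. Summing over $i$ then telescopes, leaving a contribution supported only near sites $1$ and $N$.

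It remains to assemble the order-$\lambda^1$ identity by comparing this telescoped bulk term with the boundary commutators $[B,Q_+]$ (site $1$) and $[\mathcal{B}_N'',Q_+]$ (site $N$, where $\mathcal{B}_N''$ is diagonal) against the right-hand side $[\Delta_1,Q^{\circ}]$. Introducing $\Sigma:=\sum_{j\ge2}S_0^{[j]}$ and separating the $\Sigma$-dependent from the $\Sigma$-independent parts, the $\Sigma$-dependent parts of $[B,Q_+]$ and of $[\Delta_1,Q^{\circ}]$ cancel automatically thanks to the universal relation $[\Delta_0,S_+^{[1]}]=[\Delta_1,S_0^{[1]}]$ (a consequence of $[S_0,S_+]=S_+$ and $S_+f(S_0)=f(S_0-1)S_+$), leaving only strictly local end-terms that must vanish. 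This last cancellation is what pins down the coefficient $s$ of $S_+^{\mathrm{tot}}$ and the partial-sum weights $2\sum_{j>i}S_0^{[j]}$ in the definition \eqref{eq:Qplus} of $Q_+$. Finally, the higher-order identities ($k\ge1$) involve only the site-$1$ operators $\Delta_k,\Delta_{k+1}$ and, after the same $\Sigma$-splitting, reduce to the two purely local relations $[\Delta_k,S_+^{[1]}]=[\Delta_{k+1},S_0^{[1]}]$ and $[\Delta_k,sS_+^{[1]}+S_+^{[1]}S_0^{[1]}]=[\Delta_{k+1},(S_0^{[1]})^2+(2s-1)S_0^{[1]}]$; the first holds for any diagonal $B$, while the second follows from the shift identity satisfied by $b(S_0):=B(S_0-1)-B(S_0)=-(S_0+s-1)^{-1}$, using the recursion for $\Delta_k$ and $S_+f(S_0)=f(S_0-1)S_+$.

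The main obstacle is the order-$\lambda^1$ identity, the only place where the bulk dynamics talks to the boundaries and hence where the precise form of $Q_+$ is forced. Concretely, the difficulty is to compute the bulk current $[\mathcal{H}_{i,i+1},A_i]$ explicitly enough to identify its telescoping end-term, and then to verify that this end-term exactly matches the boundary commutators with the correct numerical coefficients. The shape of $Q_+$ is most naturally anticipated from the Quantum Inverse Scattering Method of Section~\ref{sec:QISM}, where it arises as the first non-local boundary charge generated by the reflection algebra; but the commutation $[H'',Q'']=0$ can be established self-containedly by the order-by-order scheme above.
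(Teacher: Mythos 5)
Your scaffolding coincides with the paper's own proof: you expand $H''=H^\circ+\sum_{k\ge1}\lambda^k\Delta_k$ in $\lambda=\rho_L-\rho_R$ (your $\Delta_k$ are exactly the paper's $B_k^{[1]}=-\tfrac1k\,\Gamma(S_0^{[1]}+s-k)\Gamma(S_0^{[1]}+s)^{-1}(S_+^{[1]})^k$), match powers of $\lambda$, dispose of order $\lambda^0$ by particle conservation, reduce the bulk part of the order-$\lambda$ identity to the antisymmetric combination $A_i=S_+^{[i]}S_0^{[i+1]}-S_+^{[i+1]}S_0^{[i]}$ via the bond invariance \eqref{eq:comHden}, and settle all orders $\lambda^{k+1}$, $k\ge1$, by the two local identities you state. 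Those identities are indeed correct: $[\Delta_k,S_+^{[1]}]=[\Delta_{k+1},S_0^{[1]}]=-(k+1)\Delta_{k+1}$ is pure weight counting, and the second one amounts to the paper's pair of relations $[S_+^{[1]},B_{k-1}^{[1]}]=kB_k^{[1]}$ and $S_+^{[1]}B_{k-1}^{[1]}=\tfrac{k}{k-1}(S_0^{[1]}+s-1)B_k^{[1]}$, which do require the explicit form of $\Delta_k$ coming from $b(S_0)=-(S_0+s-1)^{-1}$, exactly as you say. Your $\Sigma$-splitting of the order-$\lambda$ boundary terms is also sound and reproduces the paper's computations $[Q^{\circ},B_1^{[1]}]-[Q_+,\mathcal{B}_1^\circ]=-S_+^{[1]}$ and $[Q_+,\mathcal{B}_N^\circ]=-S_+^{[N]}$.

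The genuine gap is the step you yourself call the main obstacle: the evaluation of $[\mathcal{H}_{i,i+1},A_i]$. You assert that acting on the basis ``exhibits it as a lattice current,'' so that the sum over $i$ telescopes with end terms matching the boundary contributions, but you give no argument for this, and it is not a structural consequence of the $\mathfrak{sl}(2)$ invariance used up to that point: for a generic bond operator commuting with the bond totals there is no reason for such a commutator to be a difference of one-site operators, let alone equal to $S_+^{[i]}-S_+^{[i+1]}$ with unit coefficient. This is precisely where the paper spends the bulk of its proof: one must act with $[S_+^{[i]}S_0^{[i+1]},\mathcal{H}_{i,i+1}]$ and $[S_+^{[i+1]}S_0^{[i]},\mathcal{H}_{i,i+1}]$ separately on $|m_i\rangle\otimes|m_{i+1}\rangle$ using \eqref{eq:hacts-ss}, invoke the shift identity $h_s(m+1)=h_s(m)+(m+2s)^{-1}$ together with the two recursions relating $\varphi_s(k,m+1)$, $\varphi_s(k,m)$ and $\varphi_s(k-1,m)$, and check that the coefficients of all vectors $|m_i\pm k\rangle\otimes|m_{i+1}\mp k\rangle$ organize into the single function $\chi(m_i,m_{i+1},k)=(k-1)(k-m_i+m_{i+1}+s-1)\varphi_s(k-1,m_i)$ appearing symmetrically in both commutators, so that the two infinite sums cancel in the difference and only the boundary-of-the-bond terms survive. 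Without this computation (and the subsequent check that the surviving terms carry exactly the coefficients needed to cancel against $-S_+^{[1]}$ and $-S_+^{[N]}$ from the boundaries), nothing is established at order $\lambda$, which is the only order where the bulk and the precise form of $Q_+$ interact; filling it in is not a routine detail but the heart of the proposition.
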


\begin{remark}[Origin of the symmetry $Q''$]
{\em 
The symmetry $Q''$ of the Hamiltonian $H''$ in \eqref{eq:Hpp} is determined within the framework of the quantum inverse scattering method \cite{Faddeev:1996iy,Sklyanin:1988yz}.
The underlying transfer matrix for the Hamiltonian $H$ was given in \cite{Frassek:2019vjt}. 
Applying the similarity transformations discussed in the previous subsection one obtains the transfer matrix $T(z)$ with spectral parameter $z\in\C$ that commutes with the Hamiltonian $H''$
\begin{equation}\label{eq:HTcom}
 [H'',T(z)]=0\,.
\end{equation} 
It is defined explicitly by the following trace over the $2\times2$ auxiliary space
\begin{equation}\label{eq:transm}
 T(z)=\tr K(z)L_1(z)\cdots L_N(z)\hat K(z)L_N(z)\cdots L_1(z)\,.
\end{equation} 
Here the  Lax matrix $L_i(z)$ at site $i$ is defined via
\begin{equation}
L_i(z)=\left(\begin{array}{cc}
             z+\frac{1}{2}+S_0^{[i]}&-S_-^{[i]}\\
             S_+^{[i]}& z+\frac{1}{2}-S_0^{[i]}
            \end{array}
 \right)   \,,
\end{equation} 
and  the K-matrix for the left and right boundary read
\begin{equation}
  K(z)=\left(\begin{array}{cc}
             p_1+(z+1)p_2&-(z+1)\Delta\\
             0& p_1-(z+1)p_2
            \end{array}
\right)\,,\qquad  {\hat K}(z)=\left(\begin{array}{cc}
             q_1+zq_2&0\\
             0&q_1-zq_2
            \end{array}
\right)\,.
\end{equation}
The boundary parameters are fixed via 
\begin{equation}
 q_1=p_1=\frac{1}{2}\left(s-\frac{1}{2}\right)\,,\qquad q_2=p_2=\frac{1}{2}\,,\qquad \Delta=\rho_L-\rho_R\,,
\end{equation} 
see \cite{Frassek:2019vjt}.

The transfer matrix \eqref{eq:transm} is a polynomial of order $2N+2$  in the spectral parameter $z$.  Thus, as a consequence of \eqref{eq:HTcom}, all its coefficients $Q_k$ at order  $z^k$, $k=0,1,2,\ldots,2N+2$, commute with the Hamiltonian $H''$.
The first two charges $Q_{2N+2}$ and $Q_{2N+1}$ are diagonal and independent of the densities $\rho_L$ and $\rho_R$. The first charge that is non-diagonal and dependent on $\Delta=\rho_L-\rho_R$ appears at the order $z^{2N}$. It reads
\begin{equation}
 Q_{2N}=\frac{1}{8} \left(2 N^2+N (3-4 (s-1) s)+(1-2 s)^2\right)\ID+{Q''}\,,
\end{equation} 
with $Q''$ given in \eqref{eq:Qprime}.
We neglect the term proportional to the identity $\ID$ as it trivially commutes with the Hamiltonian.
}
\end{remark}

\noindent
\begin{proof}[{\bf Proof of Proposition~\ref{prop:symm}}]

For the sake of convenience, we abbreviate $\Delta = \rho_L -\rho_R$.
We would like to show that the Hamiltonian $H''$ commutes with the non-local charge $Q''$, i.e.
\begin{equation}\label{eq:HQ}
 [H'',Q'']=0\,.
\end{equation} 
We recall the expression for $H''$ in \eqref{eq:Hpp} from which we see that its right boundary $\mathcal{B}''_N$ is diagonal whereas its left boundary $\mathcal{B}''_1$ is triangular. It is convenient then to decompose the Hamiltonian $H''$  as 
\begin{equation}\label{eq:Hppdec}
 H''=H^\circ+\sum_{k=1}^\infty \Delta^k B_k^{[1]}
\end{equation} 
where $B_k^{[1]}$ acts on the first site only and is upper triangular (with zeros on the diagonal), while $H^\circ$ is
the Hamiltonian with  diagonal boundaries (which  is independent of the parameter $\Delta$). It is given by
\begin{equation}
\label{eq:H0-def}
 H^\circ=\mathcal{B}_1^\circ+\sum_{i=1}^{N-1}\mathcal{H}_{i,i+1}+  \mathcal{B}_N^\circ
\end{equation} 
with the boundary terms
\begin{equation}
 \mathcal{B}_i^\circ= \psi(S_0^{[i]}+s)-\psi(2s)\,,\qquad i=1,N\,.
\end{equation} 
The commutator in \eqref{eq:HQ} can then be expanded in terms of $\Delta $ using \eqref{eq:Qprime} and \eqref{eq:Hppdec}. We get
\begin{equation}
 [H'',Q'']=[H^\circ,Q^{\circ}]-\Delta  \left([Q^{\circ},B_1^{[1]}]-[Q_+,H^\circ]\right)-\sum_{k=2}^\infty \Delta ^k \left([Q^{\circ},B_k^{[1]}]-[Q_+,B_{k-1}^{[1]}]\right)\,.
\end{equation} 
The first term $[H^\circ,Q^{\circ}]$ vanishes due to particle conservation. To show that the other terms vanish some more work is required. 
While the term linear in $\Delta$ is rather involved, the higher order terms can be shown to vanish more easily. 
Thus as a preliminary step we identify the operators $B_k^{[1]}$ and show that the higher order terms vanish. In a second step we will consider the linear term.

\paragraph{The operators $B_k^{[1]}$.}
The operators $B_k^{[1]}$ in \eqref{eq:Hppdec} only act non-trivially on the first site. They are obtained from the expansion of the boundary term
\begin{equation}
\begin{split}
  \mathcal{B}_1''&=
  e^{\Delta  S_+^{[1]}}\psi\left(S_0^{[1]}+s\right)e^{-\Delta  S_+^{[1]}}-\psi(2s)\\
  &=\sum_{n,m=0}^\infty \frac{(-1)^m}{n!m!}\Delta ^{n+m}\left(S_+^{[1]}\right)^{n+m}\psi\left(S_0^{[1]}+s+m\right)-\psi(2s)\\
  &=\sum_{n=0}^\infty\sum_{l=n}^\infty \frac{(-1)^{l-n}}{n!(l-n)!}\Delta ^l \left(S_+^{[1]}\right)^l \psi\left(S_0^{[1]}+s+l-n\right)-\psi(2s)\\
  &=\psi\left(S_0^{[1]}+s\right)-\psi(2s)+\sum_{l=1}^\infty\Delta ^l \left(S_+^{[1]}\right)^l\sum_{n=0}^l \frac{(-1)^{l-n}}{n!(l-n)!} \psi\left(S_0^{[1]}+s+l-n\right)
  \end{split}
\end{equation} 
where in the last step we exchanged the two sums.
Here the first part is the term $\mathcal{B}_1^\circ$ contained in $H^\circ$. The last part can be written using \eqref{eq:id-sum2} as
\begin{equation}
 \sum_{n=0}^l \frac{(-1)^{l-n}}{n!(l-n)!} \psi\left(S_0^{[1]}+s+l-n\right)
 =-\frac{1}{l}\frac{\Gamma\left(S_0^{[1]}+s\right)}{\Gamma\left(S_0^{[1]}+s+l\right)}.
\end{equation} 
Thus we get
\begin{equation}\label{eq:Bk}
 B_k^{[1]}= -\frac{1}{k}\frac{\Gamma(S_0^{[1]}+s-k)}{\Gamma(S_0^{[1]}+s)}\left(S_+^{[1]}\right)^k.
\end{equation}

\paragraph{The higher order terms.}
Let us first show that 
 \begin{equation}
  [Q^{\circ},B_k^{[1]}]-[Q_+,B_{k-1}^{[1]}]=0.
 \end{equation} 
The first commutator is  computed using \eqref{eq:Bk}. We have
\begin{equation}\label{eq:Q0Bk}
\begin{split}
 [Q^{\circ},B_k^{[1]}]
 &=k (2 S_0^{\mathrm{tot}}-k+2 s-1)B_k^{[1]}\,.
\end{split}
 \end{equation} 
 In the  second commutator only the first term in the sum \eqref{eq:Qplus} defining $Q_+$ contributes. We find 
 \begin{equation}
 \begin{split}
  [Q_+,B_{k-1}^{[1]}]&=
\Big[sS_+^{[1]}+  S_+^{[1]}\Big(S_0^{[1]}+2\sum_{j=2}^NS_0^{[j]}\Big),B_{k-1}^{[1]}\Big]\\
&=
s\Big[S_+^{[1]},B_{k-1}^{[1]}\Big]+
\Big[  S_+^{[1]}S_0^{[1]},B_{k-1}^{[1]}\Big]+
2\Big[  S_+^{[1]},B_{k-1}^{[1]}\Big]\sum_{j=2}^NS_0^{[j]}\\
&=\Big[S_+^{[1]},B_{k-1}^{[1]}\Big](s+2S_0^{\mathrm{tot}}-S_0^{[1]})+S_+^{[1]}\Big[S_0^{[1]},B_{k-1}^{[1]}\Big]\,.
  \end{split}
 \end{equation} 
 Using  again \eqref{eq:Bk} we get
 \begin{equation}
 [S_+^{[1]},B_{k-1}^{[1]}]=kB_{k}^{[1]}
\end{equation} 
and
\begin{equation}
 S_+^{[1]}[S_0^{[1]},B_{k-1}^{[1]}]=(k-1) S_+^{[1]} B_{k-1}^{[1]}=k(S_0^{[1]} +s-1)B_{k}^{[1]}\,.
\end{equation}
It then follows that 
 \begin{equation}
 \label{eq:Q+B}
 \begin{split}
  [Q_+,B_{k-1}^{[1]}]&=kB_{k}^{[1]}(s+2S_0^{\mathrm{tot}}-S_0^{[1]})+k(S_0^{[1]} +s-1)B_{k}^{[1]}\\
  &=k(2S_0^{\mathrm{tot}}-k+2s-1)B_{k}^{[1]}\\
  \end{split}
 \end{equation}
 and thus combining \eqref{eq:Q0Bk} with \eqref{eq:Q+B} we obtain the desired result
 \begin{equation}
  [Q^{\circ},B_k^{[1]}]-[Q_+,B_{k-1}^{[1]}]=0\,.
 \end{equation} 

\paragraph{Linear order term.}
Let us now show that 
\begin{equation}\label{eq:com11} 
 [Q^{\circ},B_1^{[1]}]-[Q_+,H^\circ]=0\,.
\end{equation} 
This is more cumbersome as it contains the Hamiltonian which is more involved. The first term we get from \eqref{eq:Q0Bk} by specialising to $k=1$:
\begin{equation}\label{eq:Q0B1}
\begin{split}
 [Q^{\circ},B_1^{[1]}]
 &=2(S_0^{\mathrm{tot}}+ s-1)B_1^{[1]}=-2S_+^{[1]}(S_0^{\mathrm{tot}}+ s)\frac{1}{S_0^{[1]}+s}\,.
\end{split}
 \end{equation} 
 We now turn to the second term. First, using the Definition \eqref{eq:H0-def} of $H^\circ$,  we note that it  splits as
 \begin{equation}
  [Q_+,H^\circ]=
  [Q_+,\mathcal{B}_1^\circ]+\sum_{k=1}^{N-1}
  [Q_+,\mathcal{H}_{k,k+1}]+
  [Q_+,\mathcal{B}_N^\circ]\,.
 \end{equation} 
 The commutators with the boundary terms are computed as follows.  
 Again, only the first term in the sum defining $Q_+$ contributes and thus
 we have
 \begin{equation}\label{eq:QpB10}
  [Q_+,\mathcal{B}_1^\circ]=[sS_+^{[1]}+  S_+^{[1]}\left(S_0^{[1]}+2\sum_{j=2}^NS_0^{[j]}\right),\mathcal{B}_1^\circ]=[ S_+^{[1]},\mathcal{B}_1^\circ]\left(s+S_0^{[1]}+2\sum_{j=2}^NS_0^{[j]}\right)\,.
 \end{equation} 
 The last commutator is evaluated using the recurrence property of the digamma function $\psi(k+1)-\psi(k)=k^{-1}$, so that
 \begin{equation}\label{eq:SpB10}
 \begin{split}
  [ S_+^{[1]},\mathcal{B}_1^\circ]&=  [ S_+^{[1]},\psi(S_0^{[1]}+s)]\\
  &=  S_+^{[1]} \left(\psi(S_0^{[1]}+s)-\psi(S_0^{[1]}+1+s)\right)\\
& =-S_+^{[1]}\frac{1}{S_0^{[1]}+s}
\,.
  \end{split}
 \end{equation} 
 Combining \eqref{eq:Q0B1},  \eqref{eq:QpB10} and \eqref{eq:SpB10} we find the first contribution to \eqref{eq:com11}. It simply reads
 \begin{equation}
 \begin{split}
   [Q^{\circ},B_1^{[1]}]-  [Q_+,\mathcal{B}_1^\circ]
&=-2S_+^{[1]}(S_0^{\mathrm{tot}}+ s)\frac{1}{S_0^{[1]}+s}+S_+^{[1]}\frac{1}{S_0^{[1]}+s}\left(s-S_0^{[1]}+2S_0^{\mathrm{tot}}\right)\\
   &=-S_+^{[1]}\,.
   \end{split}
 \end{equation} 
 The  boundary contribution at the site $N$ yields
  \begin{equation}
  [Q_+,\mathcal{B}_N^\circ]
  =[ S_+^{[N]},\mathcal{B}_N^\circ]\left(S_0^{[N]}+s\right)=-S_+^{[N]}\,,
 \end{equation} 
 where we used the analogous of \eqref{eq:SpB10}, now for site $N$.
 
 We turn to the computation of the commutation relations involving the  Hamiltonian density. 
 To show that the commutator in \eqref{eq:com11} vanishes it remains to prove that 
 \begin{equation}
  \label{eq:Q+expan}
  \sum_{k=1}^{N-1}
  [Q_+,\mathcal{H}_{k,k+1}]=-S_+^{[1]}+S_+^{[N]}\,.
 \end{equation} 
 We verify below this equality  through a direct computation where we act on states of our Hilbert space.
 First note that the operator $Q_+$ can be rewritten as
\begin{equation}
\begin{split}
 Q_+
 &=S_+^{\mathrm{tot}}(s+S_0^{\mathrm{tot}})-\sum_{i=1}^NS_+^{[i]} \left(\sum_{j=1}^{i-1}S_0^{[j]}-\sum_{j=i+1}^NS_0^{[j]}\right)\,.
 \end{split}
\end{equation} 
Thus the commutator on the left hand side of \eqref{eq:Q+expan} can be expanded as
 \begin{equation}
 \label{eq:Q+Hbulkkk}
 \begin{split}
   [Q_+,\mathcal{H}_{k,k+1}]
   &= \sum_{i=1}^N\left[S_+^{[i]} ,\mathcal{H}_{k,k+1}\right]\sum_{j=i+1}^NS_0^{[j]}
   -\sum_{i=1}^N\left[S_+^{[i]} ,\mathcal{H}_{k,k+1}\right] \sum_{j=1}^{i-1}S_0^{[j]}\\
   &
   \quad\;
   +\sum_{i=1}^NS_+^{[i]} \left[ \sum_{j=i+1}^NS_0^{[j]},\mathcal{H}_{k,k+1}\right]
      -\sum_{i=1}^NS_+^{[i]}\left[ \sum_{j=1}^{i-1}S_0^{[j]},\mathcal{H}_{k,k+1}\right].    
   \end{split}
 \end{equation} 
 Let us first simplify the second line of the  equation above.
 By using the $\mathfrak{sl}(2)$ invariance of the Hamiltonian density (see \eqref{eq:comHden})
 we obtain
 \begin{equation}
   \left[ \sum_{j=i+1}^NS_0^{[j]},\mathcal{H}_{k,k+1}\right]=-\delta_{i,k}
   \left[ S_0^{[k]},\mathcal{H}_{k,k+1}\right]
 \end{equation} 
  and
 \begin{equation}
   \left[ \sum_{j=1}^{i-1}S_0^{[j]},\mathcal{H}_{k,k+1}\right]=-\delta_{i,k+1}
   \left[ S_0^{[k+1]},\mathcal{H}_{k,k+1}\right]
 \end{equation} 
 where $\delta_{i,k}$ denotes the Kronecker delta.
Therefore we find
 \begin{equation}
 \begin{split}
  \sum_{i=1}^NS_+^{[i]} \left[ \sum_{j=i+1}^NS_0^{[j]},\mathcal{H}_{k,k+1}\right]  -\sum_{i=1}^NS_+^{[i]}\left[ \sum_{j=1}^{i-1}S_0^{[j]},\mathcal{H}_{k,k+1}\right]    &=-S_+^{k}[S_0^{[k]},\mathcal{H}_{k,k+1}]+S_+^{k+1}[S_0^{[k+1]},\mathcal{H}_{k,k+1}]\\
  &=-(S_+^{[k]}+S_+^{[k+1]})[S_0^{[k]},\mathcal{H}_{k,k+1}].
  \end{split}
 \end{equation} 
 For the first line of \eqref{eq:Q+Hbulkkk} we have
  \begin{equation}
 \begin{split}
&\sum_{i=1}^N\left[S_+^{[i]} ,\mathcal{H}_{k,k+1}\right] \sum_{j=i+1}^NS_0^{[j]}-\sum_{i=1}^N\left[S_+^{[i]} ,\mathcal{H}_{k,k+1}\right] \sum_{j=1}^{i-1}S_0^{[j]}   \\&=\left[S_+^{[k]} ,\mathcal{H}_{k,k+1}\right] \left(\sum_{j=k+1}^NS_0^{[j]}-\sum_{j=1}^{k-1}S_0^{[j]}   \right)  +\left[S_+^{[k+1]} ,\mathcal{H}_{k,k+1}\right] \left(\sum_{j=k+2}^NS_0^{[j]}-\sum_{j=1}^{k}S_0^{[j]}   \right)  \\
&=\left[S_+^{[k]} ,\mathcal{H}_{k,k+1}\right] \left(\sum_{j=k+1}^NS_0^{[j]}-\sum_{j=1}^{k-1}S_0^{[j]}  -\sum_{j=k+2}^NS_0^{[j]}+\sum_{j=1}^{k}S_0^{[j]}  \right)  \\
&=\left[S_+^{[k]} ,\mathcal{H}_{k,k+1}\right] \left(S_0^{[k+1]}+S_0^{[k]}  \right).\\
\end{split}
 \end{equation} 
 In the first equality we used that only the terms with $i=k,k+1$ contribute to the commutators and the second equality uses again the $\mathfrak{sl}(2)$ invariance of the Hamiltonian density.
It then follows that we can write \eqref{eq:Q+Hbulkkk} as
 \begin{equation}
 \begin{split}
   [Q_+,\mathcal{H}_{k,k+1} ]&=\left[S_+^{[k]} ,\mathcal{H}_{k,k+1}\right] \left(S_0^{[k+1]}+S_0^{[k]}  \right)-(S_+^{[k]}+S_+^{[k+1]})[S_0^{[k]},\mathcal{H}_{k,k+1}]\\
   &=\left[S_+^{[k]} \left(S_0^{[k+1]}+S_0^{[k]}  \right),\mathcal{H}_{k,k+1}\right] -[(S_+^{[k]}+S_+^{[k+1]})S_0^{[k]},\mathcal{H}_{k,k+1}]\\
   &=\left[S_+^{[k]} S_0^{[k+1]},\mathcal{H}_{k,k+1}\right] -[S_+^{[k+1]}S_0^{[k]},\mathcal{H}_{k,k+1}].\\
   \end{split}
 \end{equation} 
The proof of \eqref{eq:Q+expan} that we are after will be completed by showing that
\begin{equation}
\label{eq:together}
 \left[S_+^{[k]} S_0^{[k+1]},\mathcal{H}_{k,k+1}\right] -[S_+^{[k+1]}S_0^{[k]},\mathcal{H}_{k,k+1}]=-S_+^{[k]}+S_+^{[k+1]}.
\end{equation} 
This equation is proved hereafter by acting on the tensor product of two arbitrary states.
For convenience we evaluate the two commutators in the left hand side of \eqref{eq:together} separately.
 \paragraph{The commutator $\left[S_+^{[k]} S_0^{[k+1]},\mathcal{H}_{k,k+1}\right]$.}
 By recalling the action of the Hamiltonian density \eqref{eq:hacts-ss}, the first commutator is explicitly written as
  \begin{equation}
 \begin{split}
  &\left[S_+^{[i]}S_0^{[i+1]} ,\mathcal{H}_{i,i+1}\right]|m_i\rangle\otimes|m_{i+1}\rangle\\
  &=(m_i+2s)(m_{i+1}+s)\left(h_s(m_i)+h_s(m_{i+1})\right)|m_i+1\rangle\otimes|m_{i+1}\rangle\\
 &\qquad-\sum_{k=1}^{m_i}(m_i-k+2s)(m_{i+1}+k+s)
 \implicit{k}{m_i}
 |m_i+1-k\rangle\otimes|m_{i+1}+k\rangle
 \\&\qquad-\sum_{k=1}^{m_{i+1}}(m_i+k+2s)(m_{i+1}-k+s)
  \implicit{k}{m_{i+1}}
 |m_i+1+k\rangle\otimes|m_{i+1}-k\rangle\\
 &\qquad -(m_i+2s)(m_{i+1}+s)\mathcal{H}_{i,i+1}|m_i+1\rangle\otimes|m_{i+1}\rangle
 \end{split}
 \end{equation} 
 where
 \begin{equation}
\begin{split}
 \mathcal{H}_{i,i+1}|m_i+1\rangle\otimes|m_{i+1}\rangle&=\left(h_s(m_i+1)+h_s(m_{i+1})\right)|m_i+1\rangle\otimes|m_{i+1}\rangle\\
 &\qquad-\sum_{k=1}^{m_i}
 \implicit{k}{m_i+1}
 |m_i+1-k\rangle\otimes|m_{i+1}+k\rangle
 \\&\qquad-\sum_{k=1}^{m_{i+1}}
  \implicit{k}{m_{i+1}}
 |m_i+1+k\rangle\otimes|m_{i+1}-k\rangle\\
 &\qquad-
 \implicit{m_i+1}{m_i+1}
 |0\rangle\otimes|m_{i+1}+m_i+1\rangle.
 \end{split}
\end{equation} 
It is convenient to add up the coefficients of the base vectors. For the vector $|m_i+1\rangle\otimes|m_{i+1}\rangle$
we have that the global coefficient simplifies to
\be
(m_i+2s)(m_{i+1}+s)\left[h_s(m_i)+h_s(m_{i+1})-h_s(m_{i}+1)-h_s(m_{i+1})\right] = - (m_{i+1}+s)
\ee
where we have used that
\be
h_s(m_i+1) = h_s(m_i) + \frac{1}{m_i+2s}
\ee
which in turn follows from \eqref{eq:identity} using the recurrence property of the digamma
function.
For the vector $|m_i+1-k\rangle\otimes|m_{i+1}+k\rangle$,
when we group together its coefficients
we find 
\begin{equation}
\begin{split}
\label{eq:chi}
 -(m_i-k+2s)(m_{i+1}+k+s)
 \implicit{k}{m_i}&+(m_i+2s)(m_{i+1}+s) \implicit{k}{m_i+1}\\
 &=\frac{k (m_i+2 s) (k-m_i+m_{i+1}+s-1)}{m_i+1}\implicit{k}{m_i+1}\\
 & = (k-1) (k-m_i+m_{i+1}+s-1) \implicit{k-1}{m_i} \\
 & =: \chi(m_i,m_{i+1},k).
\end{split}
\end{equation} 
In the simplifications above we have used that
(cf. \eqref{eq:varphi})
\be
\implicit{k}{m_i+1} = \frac{(m_i+1)(m_i-k+2s)}{(m_i+2s)(m_i-k+1)}
\implicit{k}{m_i}
\ee
and
\be
\implicit{k}{m_i} = \frac{k-1}{k}\frac{m_i-k+1}{m_i-k+2s}
\implicit{k-1}{m_i}\,.
\ee
Finally, for the vector $|m_i+1+k\rangle\otimes|m_{i+1}-k\rangle$,
we find a global coefficient
\begin{equation}
\begin{split}
 -(m_i+k+2s)(m_{i+1}-k+s)\implicit{k}{m_{i+1}}&+(m_i+2s)(m_{i+1}+s)\implicit{k}{m_{i+1}}\\
 & =k(k+m_i-m_{i+1}+s)\implicit{k}{m_{i+1}}\\
& = \chi(m_{i+1},m_i,k-1) 
 \end{split}
\end{equation} 
where we used the definition of $\chi$ in \eqref{eq:chi}.
We thus arrive to the expression
  \begin{equation}
 \begin{split}
  \left[S_+^{[i]}S_0^{[i+1]} ,\mathcal{H}_{i,i+1}\right] |m_i,m_{i+1}\rangle 
  &=-(m_{i+1}+s)|m_i+1\rangle\otimes|m_{i+1}\rangle\\
 &\qquad+\sum_{k=1}^{m_i+1}\chi(m_i,m_{i+1},k)\,
 |m_i-k+1\rangle\otimes|m_{i+1}+k\rangle
 \\&\qquad+\sum_{k=1}^{m_{i+1}}
 \chi(m_{i+1},m_{i},k+1)\,
 |m_i+k+1\rangle\otimes|m_{i+1}-k\rangle\,.
 \end{split}
 \end{equation} 
 Finally,  isolating the $k=1$ term in the first sum and subsequently shifting $k$ by one yields
 \begin{equation}\label{eq:termm1}
  \begin{split}
    \left[S_+^{[i]}S_0^{[i+1]} ,\mathcal{H}_{i,i+1}\right] |m_i,m_{i+1}\rangle 
   &=-(m_{i+1}+s)|m_i+1\rangle\otimes|m_{i+1}\rangle\\
 &\qquad+\sum_{k=1}^{m_i}\chi(m_i,m_{i+1},k+1)
 |m_i-k\rangle\otimes|m_{i+1}+k+1\rangle\\
 &\qquad+\chi(m_i,m_{i+1},1)
 |m_i\rangle\otimes|m_{i+1}+1\rangle
 \\&\qquad+\sum_{k=1}^{m_{i+1}}
  \chi(m_{i+1},m_{i},k+1)
 |m_i+k+1\rangle\otimes|m_{i+1}-k\rangle \,.
  \end{split}
 \end{equation}

\medskip
\paragraph{The commutator $[S_+^{[k+1]}S_0^{[k]},\mathcal{H}_{k,k+1}]$.}
For the other commutator we have
  \begin{equation}
 \begin{split}
  &\left[S_+^{[i+1]}S_0^{[i]} ,\mathcal{H}_{i,i+1}\right] |m_i,m_{i+1}\rangle \\
  &=(m_{i}+s)(m_{i+1}+2s)\left(h_s(m_i)+h_s(m_{i+1})\right)|m_i\rangle\otimes|m_{i+1}+1\rangle\\
 &\qquad-\sum_{k=1}^{m_i}(m_{i}-k+s)(m_{i+1}+k+2s)
 \implicit{k}{m_i}
 |m_i-k\rangle\otimes|m_{i+1}+k+1\rangle
 \\&\qquad-\sum_{k=1}^{m_{i+1}}(m_{i}+k+s)(m_{i+1}-k+2s)
  \implicit{k}{m_{i+1}}
 |m_i+k\rangle\otimes|m_{i+1}-k+1\rangle\\
 &\qquad -(m_{i+1}+2s)(m_{i}+s)\mathcal{H}_{i,i+1}|m_i,m_{i+1}+1\rangle
 \end{split}
 \end{equation} 
 where
 \begin{equation}
\begin{split}
 \mathcal{H}_{i,i+1}|m_i\rangle\otimes|m_{i+1}+1\rangle&=\left(h_s(m_i)+h_s(m_{i+1}+1)\right)|m_i\rangle\otimes|m_{i+1}+1\rangle\\
 &\qquad-\sum_{k=1}^{m_i}
 \implicit{k}{m_i}
 |m_i-k\rangle\otimes|m_{i+1}+k+1\rangle
 \\&\qquad-\sum_{k=1}^{m_{i+1}}
  \implicit{k}{m_{i+1}+1}
 |m_i+k\rangle\otimes|m_{i+1}-k+1\rangle\\
 &\qquad-
 \implicit{m_{i+1}+1}{m_{i+1}+1}
 |m_{i+1}+m_i+1\rangle\otimes|0\rangle\,.
 \end{split}
\end{equation}

Then, proceeding as above, we end up with 
   \begin{equation}\label{eq:termm2}
 \begin{split}
  \left[S_+^{[i+1]}S_0^{[i]} ,\mathcal{H}_{i,i+1}\right] |m_i,m_{i+1}\rangle 
 &=-(m_{i}+s)|m_i\rangle\otimes|m_{i+1}+1\rangle\\
 &\qquad+\sum_{k=1}^{m_{i}}
  \chi(m_{i},m_{i+1},k+1)
 |m_i-k\rangle\otimes|m_{i+1}+k+1\rangle
 \\
 &\qquad+\sum_{k=1}^{m_{i+1}}
 \chi(m_{i+1},m_{i},k+1)
 |m_i+k+1\rangle\otimes|m_{i+1}-k\rangle
 \\
 &\qquad+
 \chi(m_{i+1},m_{i},1)
 |m_i+1\rangle\otimes|m_{i+1}\rangle\,.
 \end{split}
 \end{equation}

\medskip
\paragraph{Proof of Eq.\eqref{eq:together}.}
Taking the two terms \eqref{eq:termm1} and \eqref{eq:termm2} together the sums cancel. Further using $\chi(m_i,m_{i+1},1)=s+m_{i+1}-m_i$, we are left with
  \begin{equation}
 \begin{split}
 &\left( \left[S_+^{[i]}S_0^{[i+1]} ,\mathcal{H}_{i,i+1}\right]-  \left[S_+^{[i+1]}S_0^{[i]} ,\mathcal{H}_{i,i+1}\right] \right) |m_i,m_{i+1}\rangle \\
 &=-(m_{i}+2s)|m_i+1\rangle\otimes|m_{i+1}\rangle+(m_{i+1}+2s)
 |m_i\rangle\otimes|m_{i+1}+1\rangle\\
 &=(S_+^{[i+1]}-S_+^{[i]}) |m_i\rangle\otimes|m_{i+1}\rangle\,.
 \end{split}
 \end{equation} 
This completes the proof of Proposition~\ref{prop:symm}.

\end{proof}

\subsection{Derivation of the non-local transformation $\mathcal{W}$} 
\label{app:nonlocal}
Thanks to symmetry $Q''$ derived in the previous section we can now exhibit the transformation $\mathcal{W}$ that diagonalizes both boundaries of the Hamiltonian. 
We recall the definition (cf. \eqref{eq:H0-def})  
\begin{equation}
{H}^\circ=\mathcal{B}_1^\circ+\sum_{i=1}^{N-1}\mathcal{H}_{i,i+1}+\mathcal{B}_N^\circ
\end{equation} 
with 
\begin{eqnarray}
  \mathcal{B}_i^{\circ} &=&
  \psi(S_0^{[i]}+s)-\psi(2s), \qquad\qquad i\in\{1,N\}.
\nonumber\\
 \end{eqnarray}
 Then we have the following result.
\begin{proposition}[Non-local transformation $\mathcal{W}$]
\label{prop:non-local}
The matrix $\mathcal{W}$ given by
\begin{equation}
\label{eq:Wtrans}
\mathcal{W}=\sum_{k=0}^\infty (\rho_{L}-\rho_R)^k
\frac{Q_+^{k}}{k!}\frac{\Gamma(2 (S_0^{\mathrm{tot}}+s))}{\Gamma(k+2 (S_0^{\mathrm{tot}}+s))}\,,
\end{equation} 
where $Q_+$ is defined in \eqref{eq:Qplus} is such that
\begin{equation}
\label{eq:diagonalW}
{H}^\circ=\mathcal{W}^{-1}\,{H}''\,\mathcal{W}\;.
\end{equation} 
\end{proposition}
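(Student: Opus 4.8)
The plan is to prove the equivalent intertwining relation $H''\mathcal{W}=\mathcal{W}H^\circ$ by exploiting the grading of operators under the adjoint action of the total Cartan generator $S_0^{\mathrm{tot}}$, together with the non-local symmetry $Q''$ from Proposition~\ref{prop:symm}. Throughout I abbreviate $\Delta=\rho_L-\rho_R$ and write $g_k(S_0^{\mathrm{tot}})=\Gamma(2(S_0^{\mathrm{tot}}+s))/\Gamma(k+2(S_0^{\mathrm{tot}}+s))$, so that $\mathcal{W}=\sum_{k\ge0}\Delta^k\,(Q_+^k/k!)\,g_k(S_0^{\mathrm{tot}})$. I call an operator $X$ \emph{homogeneous of grade $g$} if $[S_0^{\mathrm{tot}},X]=gX$; since $S_+^{[i]}$ raises and $S_-^{[i]}$ lowers the eigenvalue of $S_0^{\mathrm{tot}}$ by one, every operator decomposes into integer-graded pieces through its matrix elements. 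Because $Q_+$ has grade $+1$ (one checks $[S_0^{\mathrm{tot}},Q_+]=Q_+$ directly from \eqref{eq:Qplus}) while $g_k(S_0^{\mathrm{tot}})$ has grade $0$, the transformation $\mathcal{W}$ has only grades $\ge0$, with grade-$0$ part equal to the identity (the $k=0$ term, using $g_0\equiv1$); hence $\mathcal{W}^{-1}$ exists as a grade-$\ge0$ operator with grade-$0$ part $I$, the triangular Neumann series $\sum_{j\ge0}(I-\mathcal{W})^j$ converging grade by grade. Using the decomposition $H''=H^\circ+\sum_{k\ge1}\Delta^k B_k^{[1]}$ of \eqref{eq:Hppdec}, where $H^\circ$ has grade $0$ and each $B_k^{[1]}$ has grade $+k$, it follows that $H''$ has grades $\ge0$ with grade-$0$ part $H^\circ$.

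First I would establish the charge intertwining $\mathcal{W}^{-1}Q''\mathcal{W}=Q^\circ$, i.e. $Q''\mathcal{W}=\mathcal{W}Q^\circ$. Inserting $Q''=Q^\circ-\Delta Q_+$ and using the grade-shift rule $f(S_0^{\mathrm{tot}})Q_+^k=Q_+^k f(S_0^{\mathrm{tot}}+k)$ for $Q^\circ=f(S_0^{\mathrm{tot}})$ with $f(x)=x(x+2s-1)$, the identity reduces, order by order in $\Delta^k Q_+^k/k!$, to the scalar relation $g_k(\sigma)\big(f(\sigma+k)-f(\sigma)\big)=k\,g_{k-1}(\sigma)$. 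This holds because $f(\sigma+k)-f(\sigma)=k\,(2\sigma+k+2s-1)$ while $g_k(\sigma)/g_{k-1}(\sigma)=(k-1+2(\sigma+s))^{-1}=(2\sigma+k+2s-1)^{-1}$, so the two factors cancel exactly. This is precisely the step that fixes the Gamma-function coefficients appearing in the definition of $\mathcal{W}$.

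Next I would prove a vanishing lemma: any homogeneous operator $X$ of grade $g\ge1$ commuting with $Q^\circ$ is zero. Indeed $0=[Q^\circ,X]=X\big(f(S_0^{\mathrm{tot}}+g)-f(S_0^{\mathrm{tot}})\big)$, and the diagonal operator $f(S_0^{\mathrm{tot}}+g)-f(S_0^{\mathrm{tot}})$ has eigenvalue $g\,(2\sigma+g+2s-1)$ on the sector $S_0^{\mathrm{tot}}=\sigma$. Since the eigenvalues of $S_0^{\mathrm{tot}}$ are $\sigma=|m|+Ns>0$ and $g\ge1$, $s>0$, these are all strictly positive, so the operator is invertible and $X=0$.

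With these ingredients the conclusion is immediate. Set $\tilde H:=\mathcal{W}^{-1}H''\mathcal{W}$. By the charge intertwining and $[H'',Q'']=0$ from Proposition~\ref{prop:symm}, $[\tilde H,Q^\circ]=\mathcal{W}^{-1}[H'',Q'']\mathcal{W}=0$; together with $[H^\circ,Q^\circ]=0$ (particle conservation) this gives $[\tilde H-H^\circ,Q^\circ]=0$. Moreover $\tilde H$ is a product of grade-$\ge0$ operators with grade-$0$ parts $I$, $H^\circ$, $I$, so its grade-$0$ part is $H^\circ$ and $\tilde H-H^\circ$ carries only grades $\ge1$. Decomposing $\tilde H-H^\circ$ into homogeneous components and using that $Q^\circ$ has grade $0$, each grade-$g\ge1$ component commutes with $Q^\circ$ and hence vanishes by the lemma; therefore $\tilde H=H^\circ$, which is \eqref{eq:diagonalW}. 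The main obstacle I anticipate is not the algebra but making the grading framework rigorous: one must justify the graded decomposition and the invertibility of $\mathcal{W}$, which I would control by working on the dense subspace of finitely supported configurations (equivalently, as formal power series in $\Delta$), where $\mathcal{W}$ is triangular with unit diagonal in the $S_0^{\mathrm{tot}}$-grading and each graded component of every product receives only finitely many contributions. This conceptual route replaces the order-by-order perturbative verification and isolates the single nontrivial computation, the coefficient identity in the charge intertwining.
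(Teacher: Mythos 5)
Your proof is correct, and it differs from the paper's own argument in an instructive way, so a comparison is worthwhile. The one genuinely computational ingredient is identical in both: your scalar identity $g_k(\sigma)\bigl(f(\sigma+k)-f(\sigma)\bigr)=k\,g_{k-1}(\sigma)$ with $f(x)=x(x+2s-1)$ is exactly the paper's recursion $[Q^\circ,g_k]=Q_+g_{k-1}$, brought there to the form $g_k\,k(k+2s+2S_0^{\mathrm{tot}}-1)=Q_+g_{k-1}$ and solved to produce \eqref{eq:Wtrans}. The difference is the direction of the logic. The paper treats the charge relation \eqref{eq:WQ} as a consequence of \eqref{eq:diagonalW}, appealing to ``simultaneous diagonalizability'' of $H''$ and $Q''$, and then uses \eqref{eq:WQ} only to \emph{derive} the candidate formula for $\mathcal{W}$; read strictly, this leaves implicit why the solution of the charge equation also intertwines the Hamiltonians, which is the actual content of the proposition. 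You run the argument the other way and close precisely that gap: you \emph{verify} $Q''\mathcal{W}=\mathcal{W}Q^\circ$ for the given $\mathcal{W}$, and then deduce \eqref{eq:diagonalW} from $[H'',Q'']=0$ (Proposition~\ref{prop:symm}) through your vanishing lemma --- a homogeneous operator of grade $g\ge 1$ commuting with $Q^\circ$ must vanish, because the diagonal operator $f(S_0^{\mathrm{tot}}+g)-f(S_0^{\mathrm{tot}})$ has eigenvalues $g(2\sigma+g+2s-1)>0$ on every sector $\sigma=|m|+Ns$. That lemma, combined with the observation that the grade-zero part of $\mathcal{W}^{-1}H''\mathcal{W}$ equals $H^\circ$ (since $\mathcal{W}^{\pm 1}=I+(\text{grades}\ge 1)$ and $H''=H^\circ+(\text{grades}\ge 1)$ by \eqref{eq:Hppdec}), has no counterpart in the paper and is what turns the heuristic step into a proof. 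What the paper's route buys is brevity and a genuine derivation --- it explains how the Gamma-function coefficients are found rather than checked; what your route buys is completeness, since a verification is what the stated proposition requires, and your closing remarks on well-posedness (triangularity of $\mathcal{W}$ in the $S_0^{\mathrm{tot}}$-grading, with each fixed-particle-number sector finite-dimensional, and the gradewise Neumann series for $\mathcal{W}^{-1}$) are the right way to make the formal manipulations rigorous.
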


\begin{proof}
In view of Proposition \ref{prop:symm}, the operators
$H''$ and $Q''$ are simultaneously diagonalizable. Thus the  similarity transformation $\mathcal{W}$ in \eqref{eq:diagonalW}
that diagonalizes $H''$,  will also give
\begin{equation}\label{eq:WQ}
 Q^\circ = \mathcal{W}^{-1}Q''\mathcal{W}\;.
\end{equation} 
We make  the ansatz
\begin{equation}
 \mathcal{W}=1+\sum_{k=1}^\infty (\rho_{L}- \rho_{R})^kg_k
\end{equation} 
for the transformation $\mathcal{W}$. Substituting this ansatz into \eqref{eq:WQ} and using $Q'' = Q^{\circ} - (\rho_L-\rho_R) Q_+$, we obtain the commutation relation for the coefficients of the expansion
\begin{equation}
 [Q^\circ,g_k]=Q_+ g_{k-1}\,.
\end{equation} 
It follows that $g_k$ creates $k$ particles and the commutation relations can be brought to the form
\begin{equation}
g_k k (k+2 s+2S_0^{\mathrm{tot}}-1)=Q_+ g_{k-1}\,.
\end{equation} 
This difference equation can be solved and we find  \eqref{eq:Wtrans}.

\end{proof}

\subsection{Mapping non-equilibrium onto equilibrium}
\label{sec:equil}

With the transformation $\mathcal{W}$ in our hands we can now exhibit
the  transformation relating the non-equilibrium Hamiltonian
$H$ with boundary densities $\rho_L$ and $\rho_R$ to the equilibrium Hamiltonian $H^{\mathrm{eq}}$ with density $\rho$ at both sides:
\begin{equation}
{H}^{\mathrm{eq}}=\mathcal{B}_1^{\mathrm{eq}}+\sum_{i=1}^{N-1}\mathcal{H}_{i,i+1}+\mathcal{B}_N^{\mathrm{eq}}
\end{equation} 
with 
\begin{equation}
  \mathcal{B}_{i}^{\mathrm{eq}}=e^{-S_-^{[i]}}
  e^{\rho S_+^{[i]}}\Big(\psi(S_0^{[i]}+s)-\psi(2s)\Big)e^{-\rho S_+^{[i]}}e^{ S_-^{[i]}} \qquad\qquad i\in\{1,N\}\,.
\end{equation} 
\begin{proposition}[Mapping non-equilibrium onto equilibrium] We have
\begin{equation}
 {H}^{\mathrm{eq}}=P^{-1} \,{H} \,P
\end{equation} 
with
\begin{equation}
 P=e^{-S_-^{\mathrm{tot}}}e^{\rho_R S_+^{\mathrm{tot}}}\mathcal{W} e^{-\rho S_+^{\mathrm{tot}}}e^{S_-^{\mathrm{tot}}}.
\end{equation} 
\end{proposition}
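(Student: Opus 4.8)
The plan is to read the claimed conjugation directly off the two chains of similarity transformations \eqref{eq:seq-trans} and \eqref{eq:seq-trans2}, all of whose individual links have already been established in Propositions~\ref{prop:local} and \ref{prop:non-local}. The only genuine content is careful bookkeeping of the (noncommuting) factors and their inverses, together with the observation that the nonlocal transformation $\mathcal{W}$ degenerates to the identity in the equilibrium case.

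First I would assemble the non-equilibrium chain. Proposition~\ref{prop:local} gives $H'=e^{S_-^{\mathrm{tot}}}H\,e^{-S_-^{\mathrm{tot}}}$ and $H''=e^{-\rho_R S_+^{\mathrm{tot}}}H'e^{\rho_R S_+^{\mathrm{tot}}}$, while Proposition~\ref{prop:non-local} gives $H^\circ=\mathcal{W}^{-1}H''\mathcal{W}$. Composing these and abbreviating $A:=e^{-S_-^{\mathrm{tot}}}e^{\rho_R S_+^{\mathrm{tot}}}\mathcal{W}$, so that $A^{-1}=\mathcal{W}^{-1}e^{-\rho_R S_+^{\mathrm{tot}}}e^{S_-^{\mathrm{tot}}}$, I obtain
\begin{equation}
H^\circ=\mathcal{W}^{-1}\,e^{-\rho_R S_+^{\mathrm{tot}}}\,e^{S_-^{\mathrm{tot}}}\,H\,e^{-S_-^{\mathrm{tot}}}\,e^{\rho_R S_+^{\mathrm{tot}}}\,\mathcal{W}=A^{-1}H A\,.
\end{equation}
Next I would apply the \emph{same} first two transformations to the equilibrium Hamiltonian $H^{\mathrm{eq}}$. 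Since now $\rho_L=\rho_R=\rho$, the left boundary term produced by Proposition~\ref{prop:local}(ii) carries the parameter $\rho_L-\rho_R=0$ and is therefore already diagonal; hence the intermediate Hamiltonian coincides with $H^\circ$ and no further step is needed. Equivalently, inspecting \eqref{eq:Wtrans} one sees that $\mathcal{W}|_{\rho_L=\rho_R}=\mathrm{Id}$, because only the $k=0$ term of the series survives. Writing $B:=e^{-S_-^{\mathrm{tot}}}e^{\rho S_+^{\mathrm{tot}}}$ this yields
\begin{equation}
H^\circ=e^{-\rho S_+^{\mathrm{tot}}}\,e^{S_-^{\mathrm{tot}}}\,H^{\mathrm{eq}}\,e^{-S_-^{\mathrm{tot}}}\,e^{\rho S_+^{\mathrm{tot}}}=B^{-1}H^{\mathrm{eq}}B\,.
\end{equation}

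Finally I would equate the two expressions for $H^\circ$, namely $A^{-1}H A=B^{-1}H^{\mathrm{eq}}B$, and solve for $H^{\mathrm{eq}}$. Rearranging gives $H^{\mathrm{eq}}=\bigl(BA^{-1}\bigr)H\bigl(AB^{-1}\bigr)$, so setting $P:=AB^{-1}$ one gets $P^{-1}H P=BA^{-1}\,H\,AB^{-1}=B\,(A^{-1}HA)\,B^{-1}=B\,(B^{-1}H^{\mathrm{eq}}B)\,B^{-1}=H^{\mathrm{eq}}$. Computing $P$ explicitly,
\begin{equation}
P=AB^{-1}=e^{-S_-^{\mathrm{tot}}}e^{\rho_R S_+^{\mathrm{tot}}}\mathcal{W}\,e^{-\rho S_+^{\mathrm{tot}}}e^{S_-^{\mathrm{tot}}}\,,
\end{equation}
which is precisely the claimed operator. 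The only point requiring attention is the ordering and inversion of the noncommuting exponential factors in forming $P=AB^{-1}$, and the verification that $\mathcal{W}$ collapses to the identity at equal boundary densities; both are routine given the preceding propositions, so there is no substantial obstacle beyond this bookkeeping.
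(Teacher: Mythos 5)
Your proof is correct and is essentially the paper's own argument: the paper likewise combines the two conjugations $H^{\circ}=\mathcal{W}^{-1}e^{-\rho_R S_+^{\mathrm{tot}}}e^{S_-^{\mathrm{tot}}}\,H\, e^{-S_-^{\mathrm{tot}}}e^{\rho_R S_+^{\mathrm{tot}}}\mathcal{W}$ (from Propositions~\ref{prop:local} and \ref{prop:non-local}) and $H^{\circ}=e^{-\rho S_+^{\mathrm{tot}}}e^{S_-^{\mathrm{tot}}}\,H^{\mathrm{eq}}\, e^{-S_-^{\mathrm{tot}}}e^{\rho S_+^{\mathrm{tot}}}$ (the equilibrium case, where the non-local step trivializes), and then solves for $H^{\mathrm{eq}}$. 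Your bookkeeping with $A$, $B$ and $P=AB^{-1}$ merely spells out the algebra the paper leaves implicit, so there is nothing to add.
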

\begin{proof}
Combining together Proposition \ref{prop:local}
and Proposition \ref{prop:non-local}
we have
\be
\label{eq:comb1}
{H}^{\circ}=\mathcal{W}^{-1}e^{-\rho_R S_+^{\mathrm{tot}}}e^{S_-^{\mathrm{tot}}}{H} e^{- S_-^{\mathrm{tot}}}e^{\rho_R S_+^{\mathrm{tot}}} \mathcal{W}
\ee 
Similarly, we can  relate the Hamiltonian with diagonal boundaries ${H}^{\circ}$ to the process ${H}^{\mathrm{eq}}$ by
\begin{equation}
\label{eq:comb2}
{H}^{\circ}=e^{-\rho S_+^{\mathrm{tot}}}e^{S_-^{\mathrm{tot}}}{H}^{\mathrm{eq}} e^{- S_-^{\mathrm{tot}}}e^{\rho S_+^{\mathrm{tot}}} \,.
\end{equation} 
The proof of the proposition follows from \eqref{eq:comb1} and \eqref{eq:comb2}.
\end{proof}

\section{Proof of the results}
\label{sec:proof-main-th}

In this section we prove our main results.
The non-equilibrium steady state is obtained by a sequence of transformations from the reference state 
\begin{equation}\label{eq:reference}
 |\Omega\rangle =|0\rangle\otimes|0\rangle\otimes\ldots\otimes|0\rangle
\end{equation} 
which is obviously a ground state of the Hamiltonian $H^{\circ}$, i.e. $H^\circ|\Omega\rangle=0$.  The non-equilibrium steady state is related to the reference state via the transformations introduced 
in Section~\ref{sec:transform}. In particular, denoting by $|\mu\rangle$ the column vector whose components furnish
the steady state probabilities, we have
\begin{equation}
 |\mu\rangle =e^{-S_-^{\mathrm{tot}}}e^{\rho_R S_+^{\mathrm{tot}}}{\mathcal{W}}|\Omega\rangle\;. 
\end{equation} 

\paragraph{Section organization.}
We prove Theorem~\ref{theo:factorial} in Section~\ref{sec:proof-theo-factorial} via a series of Lemmata. We first compute the action of $\mathcal{W}$ on the reference state in Lemma~\ref{lem:actW}. This yields the state $|\mu''\rangle=\mathcal{W}|\Omega\rangle$ that is an eigenvector of ${H}''$ given in \eqref{eq:Hpp} with vanishing eigenvalue. The result can then be used to obtain the eigenstate $|\mu'\rangle=\exp[{\rho_R S_+^{\mathrm{tot}}}]|\mu''\rangle$ for ${H}'$ in \eqref{eq:Hp} with zero eigenvalue, see Lemma~\ref{lem:gsHprime}. This eigenvector is directly connected to the absorption probabilities of the dual process
and the factorial moments, as explained in Lemma ~\ref{lem:factorial-mom-proof},
thus allowing to complete the proof of Theorem~\ref{theo:factorial}.  

We then prove the corollaries of Theorem~\ref{theo:factorial} in the remaining paragraphs. 
In Section \ref{sec:proof-coordinate} we prove Corollary~\ref{cor:fac-mom}
yielding the factorial moments in coordinate form. This is obtained 
introducing Young diagrams and using a symmetry property that allows
to exchange rows and columns.
Finally Corollary \ref{theo:steady} and Corollary \ref{theo:loc}
are proved, respectively, in Section~\ref{sec:closed} and in Section~\ref{sec:localeq}.

\subsection{Proof of Theorem \ref{theo:factorial}: factorial moments.}
\label{sec:proof-theo-factorial}

\begin{lemma}[Ground state $\mu''$ of $H''$]
\label{lem:actW}
For all $m\in{\mathscr C}_N$ we have
\be
\label{eq:mupp}
\mu''(m) = 
(\rho_{L}-\rho_{R})^{|m|}\frac{\Gamma(2s (N+1))}{\Gamma(|m|+2s(N+1))} 
 \prod_{i=1}^N\frac{\Gamma(2s+m_i)}{\Gamma(2s)\Gamma(1+m_i)}\frac{\Gamma\left(2s(N+1-i)+\sum_{k=i}^{N}m_k\right)}{\Gamma\left(2s(N+1-i)+\sum_{k=i+1}^{N}m_k\right)}
\ee
where we recall the notation $|m|= \sum_{i=1}^N m_i$.
\end{lemma}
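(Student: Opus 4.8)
The plan is to evaluate $\mu''(m)=\langle m|\mathcal{W}|\Omega\rangle$ directly from the series \eqref{eq:Wtrans}. Since the reference state \eqref{eq:reference} satisfies $S_0^{\mathrm{tot}}|\Omega\rangle=Ns\,|\Omega\rangle$, and each application of $Q_+$ raises the total particle number by exactly one, the operator $Q_+^{k}$ produces only states with $|m|=k$; moreover the Cartan factor $\Gamma(2(S_0^{\mathrm{tot}}+s))/\Gamma(k+2(S_0^{\mathrm{tot}}+s))$ in \eqref{eq:Wtrans} stands to the \emph{right} of $Q_+^{k}$ and hence acts first on $|\Omega\rangle$, where it reduces to the scalar $\Gamma(2s(N+1))/\Gamma(k+2s(N+1))$. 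Thus only the term $k=|m|$ contributes to $\langle m|\mathcal{W}|\Omega\rangle$, the prefactor $\Gamma(2s(N+1))/\Gamma(|m|+2s(N+1))$ of \eqref{eq:mupp} is produced automatically, and the whole problem reduces to computing the matrix element $\langle m|Q_+^{|m|}|\Omega\rangle$.

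To compute it I would expand $Q_+^{|m|}$, with $Q_+$ written as in \eqref{eq:Qplus} so that its $i$-th summand is $S_+^{[i]}\big(s+S_0^{[i]}+2\sum_{j>i}S_0^{[j]}\big)$, into a sum over words recording the site at which each creation operator acts. Using the action \eqref{eq:sl2ac}, the factor from $S_+^{[i]}$ as site $i$ passes from $n$ to $n+1$ particles is $(n+2s)$; collecting these over the $m_i$ creations at site $i$ gives $\prod_{i}\Gamma(m_i+2s)/\Gamma(2s)$, which is order-independent and reproduces the $\Gamma(2s+m_i)/\Gamma(2s)$ factors of \eqref{eq:mupp}. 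What remains is the order-dependent product of the scalar factors $s+S_0^{[i]}+2\sum_{j>i}S_0^{[j]}$ evaluated on the intermediate states, summed over all orderings.

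The structural observation I would base an induction on $N$ upon is that site $1$ decouples: for $i\ge2$ the summand of $Q_+$ involves only sites $\ge2$ and these terms assemble into the generator $Q_+^{(N-1)}$ of the chain on $\{2,\dots,N\}$, while the only trace of that block in the site-$1$ term is $\sum_{j\ge2}S_0^{[j]}$. Because $Q_+^{(N-1)}$ raises the block particle number by exactly one per application, the eigenvalue of $\sum_{j\ge2}S_0^{[j]}$ after $b$ applications is deterministically $(N-1)s+b$. Writing $Q_+^{(N)}=T_1+Q_+^{(N-1)}$ with $T_1=S_+^{[1]}\big(s+S_0^{[1]}+2\sum_{j\ge2}S_0^{[j]}\big)$ and expanding $(T_1+Q_+^{(N-1)})^{|m|}$, every word splits into $m_1$ insertions of $T_1$ interleaved with $M_2:=\sum_{j\ge2}m_j$ applications of $Q_+^{(N-1)}$; the block amplitude factorises as $\langle m_2,\dots,m_N|(Q_+^{(N-1)})^{M_2}|\mathbf 0\rangle$ independently of the interleaving and is supplied by the induction hypothesis. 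If $b_p$ denotes the number of block creations preceding the $p$-th insertion of $T_1$, the site-$1$ contribution of a given interleaving is $\prod_{p=1}^{m_1}\big((p-1)+2sN+2b_p\big)$ times the already-extracted factor $\Gamma(m_1+2s)/\Gamma(2s)$.

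Collecting everything, the induction step reduces to the combinatorial identity
\begin{equation}
\sum_{0\le b_1\le\cdots\le b_{m_1}\le M_2}\ \prod_{p=1}^{m_1}\big((p-1)+2sN+2b_p\big)
=\binom{m_1+M_2}{m_1}\,\frac{\Gamma(2sN+m_1+M_2)}{\Gamma(2sN+M_2)}\,,
\end{equation}
which I would prove by induction on $m_1$: summing first over $b_{m_1}$ turns the left-hand side into $\sum_{b=0}^{M_2}(m_1-1+2sN+2b)\,\Sigma_{m_1-1}(b)$, where $\Sigma_{m_1-1}(b)$ is the same sum with $m_1-1$ variables and upper limit $b$; the inductive hypothesis renders it a ratio of Gamma functions, after which a short telescoping collapses the sum, the summand being exactly $F(b)-F(b-1)$ with $F(b)=\binom{m_1+b}{m_1}\Gamma(2sN+m_1+b)/\Gamma(2sN+b)$, as one checks by a single use of Pascal's rule. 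Together with the induction hypothesis on the block $\{2,\dots,N\}$ and the trivial base case $N=1$ (no ordering to sum over, and $\langle m_1|Q_+^{m_1}|\Omega\rangle=(\Gamma(m_1+2s)/\Gamma(2s))^2$ read off directly), this yields \eqref{eq:mupp}. The main obstacle is precisely this bookkeeping: controlling the non-commuting factor $2\sum_{j\ge2}S_0^{[j]}$ across the interleaving, and matching the multinomial weight hidden in the $1/k!$ of \eqref{eq:Wtrans} against the per-site $1/m_i!$ of \eqref{eq:mupp} — both reconciled by the binomial coefficient emerging from the identity above.
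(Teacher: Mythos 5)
Your proposal is correct, and it reaches \eqref{eq:mupp} by a genuinely different route than the paper. Both proofs share the first reduction: expanding \eqref{eq:Wtrans}, noting that the Cartan factor acts first on $|\Omega\rangle$ to give the scalar $\Gamma(2s(N+1))/\Gamma(|m|+2s(N+1))$, and that only $k=|m|$ survives, so everything hinges on $\langle m|Q_+^{|m|}|\Omega\rangle$. From there the paper proceeds by a recursion in the \emph{total particle number}: it peels off the last application of $Q_+$ via a resolution of the identity (see \eqref{eq:recurs}), computes the one-step matrix element $\langle m|Q_+|m-\delta_p\rangle$, and then \emph{verifies} that the guessed closed form \eqref{eq:qpm} solves this recursion — a verification that collapses to the nontrivial algebraic identity \eqref{eq:toshow}, handled by the change of variables \eqref{eq:minw} and an iteration/telescoping argument. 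You instead run an induction on the \emph{chain length} $N$: writing $Q_+=T_1+Q_+^{(N-1)}$ and exploiting that $T_1=S_+^{[1]}\bigl(s+S_0^{[1]}+2\sum_{j\ge2}S_0^{[j]}\bigr)$ acts as a scalar on each fixed block-particle-number sector (since $Q_+^{(N-1)}$ raises the block number by exactly one, intermediate block states are $\sum_{j\ge2}S_0^{[j]}$-eigenstates with eigenvalue $(N-1)s+b$), you factor every interleaving word into an order-independent block amplitude times the site-$1$ weight $\prod_p\bigl((p-1)+2sN+2b_p\bigr)$; the induction step then reduces to your sum over weakly increasing sequences $0\le b_1\le\cdots\le b_{m_1}\le M_2$, whose telescoping proof via $F(b)=\binom{m_1+b}{m_1}\Gamma(2sN+m_1+b)/\Gamma(2sN+b)$ is correct (I checked that the summand equals $F(b)-F(b-1)$ and that $F(-1)=0$), and whose output, combined with the inductive hypothesis for sites $\{2,\dots,N\}$, reassembles exactly into \eqref{eq:qpm} because $\binom{M_1}{m_1}M_2!=|m|!/m_1!$ and the $i=1$ Gamma ratio is precisely $\Gamma(2sN+M_1)/\Gamma(2sN+M_2)$. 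What each approach buys: the paper's recursion needs the full $N$-site ansatz up front and its verification identity \eqref{eq:toshow} is opaque, requiring a clever substitution; your site-peeling induction is more structural — it shows \emph{why} the answer has the telescoping product form over $i$, the per-site factors $\Gamma(2s+m_i)/\Gamma(2s)$ and the binomial bookkeeping emerge automatically, and the key identity is elementary (Pascal's rule plus telescoping). The price is the operator-word combinatorics, which you control correctly through the two structural facts (block-number sectors are preserved, and $T_1$ leaves the block state untouched).
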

\begin{proof}
We have $\mu''(m) = \langle m |\mu''\rangle= \langle  m |\mathcal{W}|\Omega\rangle$, so that we need to compute  
the action of the non-local transformation $\mathcal{W}$ 
on the reference state \eqref{eq:reference}.
We employ the explicit expression of $\mathcal{W}$ obtained in Proposition \ref{prop:non-local} as a power series in the operator $Q_+$.
Using the fact that the reference state $|\Omega\rangle$  contains no particles, and that  the action of $Q_+$ amounts to the creation of a single particle (see \eqref{eq:Qplus}), { we get that the components of the state $|\mu''\rangle$ 
are proportional to the matrix elements  $\langle m| Q_+^{|m|}|\Omega\rangle$. More precisely we have}
\begin{equation}
\begin{split}
\mu''(m)&=\sum_{k=0}^\infty 
\frac{\Delta ^k}{k!}\frac{\Gamma(2s (N+1))}{\Gamma(k+2 s(N+1))}\langle  m|Q_+^{k}|\Omega\rangle=
\frac{\Delta ^{|m|}}{|m|!}\frac{\Gamma(2s (N+1))}{\Gamma(|m|+2 s(N+1))}\langle  m|Q_+^{|m|}|\Omega\rangle\,.
\end{split}
\end{equation} 
We shall prove below that such matrix elements are given by
\begin{equation}\label{eq:qpm}
\langle  m|Q_+^{|m|}|\Omega\rangle =|m|!
 \prod_{i=1}^N\frac{\Gamma(2s+m_i)}{\Gamma(2s)\Gamma(1+m_i)}\frac{\Gamma\left(2s(N-i+1)+\sum_{k=i}^{N}m_k\right)}{\Gamma\left(2s(N-i+1)+\sum_{k=i+1}^{N}m_k\right)}\,.
\end{equation} 
First we remark that the matrix elements can be  defined recursively (with
$\langle m|Q_+^{|m|}|\Omega\rangle = 1$ when $|m|=0$) via
\begin{equation}\label{eq:recurs}
\begin{split}
 \langle m|Q_+^{|m|}|\Omega\rangle =\sum_{p=1}^N\langle  m|Q_+| m-\delta_p\rangle \langle  m-\delta_p|Q_+^{|m|-1}|\Omega\rangle\,,
 \end{split}
\end{equation}
where $| m-\delta_p\rangle$ denotes the state that is obtained
from $| m\rangle$ by removing a particle at site $p$.
Here we employed a resolution of the identity 
and we used again the fact that the action of $Q_+$ is the creation of a single particle.
Thus proving \eqref{eq:qpm} is equivalent to showing that \eqref{eq:qpm} solves the recursive equation \eqref{eq:recurs}. To this aim 
we first compute the left factor in the right hand side of \eqref{eq:recurs}. We get
\begin{equation}
 \langle  m|Q_+| m-\delta_p\rangle =\left(2s+m_p-1\right)\left(m_p-1+2s(N-p+1)+2\sum_{j=p+1}^Nm_j\right)\,,
\end{equation} 
which follows from the explicit expression of the operator $Q_+$ in \eqref{eq:Qplus}. 
Next, noting that 
\begin{equation}
\begin{split}
 \frac{\langle  m-\delta_p|Q_+^{|m|-1}|\Omega\rangle }{\langle  m|Q_+^{|m|}|\Omega\rangle}&=\frac{1}{|m|}\frac{m_p}{2s+m_p-1}\frac{\prod_{i=1}^{p-1}\left(2s(N-i+1)-1+\sum_{j=i+1}^N m_j\right)}{\prod_{i=1}^p\left(2s(N-i+1)-1+\sum_{j=i}^N m_j\right)}\\
 &=\frac{1}{|m|(2sN-1+|m|)}\frac{m_p}{2s+m_p-1}\prod_{i=1}^{p-1}\frac{\left(2s(N-i+1)-1+\sum_{j=i+1}^N m_j\right)}{ \left(2s(N-i)-1+\sum_{j=i+1}^N m_j\right)}\,,
 \end{split}
\end{equation} 
which is a direct consequence of \eqref{eq:qpm},   the recursive
relation \eqref{eq:recurs} reduces to
\begin{equation}\label{eq:toshow}
\begin{split}
 |m|(2sN-1+|m|)=\sum_{p=1}^N m_p&\left(m_p-1+2s(N-p+1)+2\sum_{j=p+1}^Nm_j\right)\\ &\qquad\times\prod_{i=1}^{p-1}\frac{1-2s(N-i+1)-\sum_{j=i+1}^N m_j}{1-2s(N-i)-\sum_{j=i+1}^N m_j}\,.
 \end{split}
\end{equation} 
Thus, to show \eqref{eq:qpm}, we need to verify this relation for all $m\in{\mathscr C}_N$.
To prove this formula it is convenient to introduce the variables 
\begin{equation}\label{eq:minw}
 w_i=1-2s(N-i+1)-\sum_{j=i}^N
m_j \qquad\qquad \text{for}\quad i=1,2,\ldots,N\,,
\end{equation} 
with inverse map  given by
\begin{align}
 m_i&=w_{i+1}-w_i-2s \qquad\qquad \text{for}\quad i=1,2,\ldots,N-1\,,\\
 m_N&=1-w_N-2s\,.
\end{align} 
Under this change of variables \eqref{eq:toshow} above becomes
\begin{equation}
\begin{split}
(  w_1+2sN-1) w_1&=(  w_{1}-  w_{2}+2s)(-1+  w_1+  w_{2}+2s(N-1))\\
 &\quad\,+\sum_{p=2}^{N-1} (  w_{p}-  w_{p+1}+2s)(-1+  w_p+  w_{p+1}+2s(N-p))\prod_{i=1}^{p-1}\frac{  w_{i+1}-2s}{  w_{i+1}}\\
 &\quad\, +  w_N(  w_N+2s-1)\prod_{i=1}^{N-1}\frac{  w_{i+1}-2s}{  w_{i+1}}\,.
 \end{split}
\end{equation} 
Subtracting $(w_1+2sN-1) w_1$ on both sides the $w_1$ dependence drops and we remain with
\begin{equation}
\label{eq:temp}
 \begin{split}
0&=(2s-  w_{2})(-1+  w_{2}+2s(N-1))\\
 &\quad\,+\sum_{p=2}^{N-1}(  w_p+2 s) (2s(N-p) +w_p-1)\prod_{i=1}^{p-1}\frac{  w_{i+1}-2s}{  w_{i+1}}\\
 &\quad\,-\sum_{p=2}^{N-1}  w_{p+1}(  w_{p+1}-1+2 s (N-p-1))\prod_{i=1}^{p-1}\frac{  w_{i+1}-2s}{  w_{i+1}}\\
 &\quad\,+(  w_N(  w_N -1)-2s(2s-1))\prod_{i=1}^{N-2}\frac{  w_{i+1}-2s}{  w_{i+1}}\,.
 \end{split}
\end{equation} 
Here we also separated the $w_p$ and $w_{p+1}$ dependencies in the sum over $p$.
We remark now that in \eqref{eq:temp} the terms containing $w_2$, as well as those containing $w_N$, do cancel. After shifting by one the first sum over $p$
we finally get
\begin{equation}\label{eq:eq1}
\begin{split}
 2 s (2s( N -2)-1)=
 &\sum_{p=2}^{N-2}\frac{4 s^2 (w_{p+1}+2s( N- p -1)-1)}{w_{p+1}}\prod_{i=2}^{p-1}\frac{  w_{i+1}-2s}{  w_{i+1}}+2s(2s-1)\prod_{i=2}^{N-2}\frac{  w_{i+1}-2s}{  w_{i+1}}\,.
 \end{split}
\end{equation} 
This equation can be shown to be true introducing the function
\begin{equation}
 f_k= 2 s (2 s (N-k+1)-1)\,,\qquad k\in\N_0\,,
\end{equation} 
and noticing that it satisfies the recursion equation 
\begin{equation}\label{eq:iter}
f_k=\frac{4 s^2 (w_{k}+2 s(N-k)-1)}{w_{k}}+f_{k+1} \frac{w_{k}-2 s}{w_{k}}\,.
\end{equation} 
In fact this relation is true for $w_k\in \C$ and  not only for the specific sequence  $w_k$ defined in \eqref{eq:minw}.
Iterating \eqref{eq:iter} from $k=3$ up to $k=N$ we obtain 
\begin{equation}\label{eq:id18}
\begin{split}
f_3
 &=\sum_{k=3}^{N-1}\frac{4 s^2 (w_{k}+2 s(N-k)-1)}{w_{k}}\prod_{j=3}^{k-1}\frac{w_{j}-2s}{w_{j}}+f_{N}\prod_{j=3}^{N-1}\frac{w_{j}-2 s}{w_{j}}\,.
 \end{split}
\end{equation} 
After shifting the index of the $w$'s the equation  \eqref{eq:id18} reduces to \eqref{eq:eq1}. This concludes the proof of \eqref{eq:toshow} and thus the proof of 
Lemma \ref{lem:actW} is obtained.
\end{proof}
It is easy to compute now the ground state of $H'$.
\begin{lemma}[Ground state $\mu'$ of $H'$]
\label{lem:gsHprime}
For all $m\in{\mathscr C}_N$ we have
\begin{eqnarray}
\label{eq:muprime}
\mu'(m) 
&=& 
\sum_{\eta\in\mathscr{C}_N}\rho_R^{|m|-|\eta|}(\rho_L-\rho_R)^{|\eta|}\frac{\Gamma(2s (N+1))}{\Gamma(|\eta|+2s(N+1))} \times\\
& &\qquad\qquad
\prod_{i=1}^N\frac{1}{\eta_i! (m_i-\eta_i)!} \frac{\Gamma(m_i+2s)}{\Gamma(2s)}\frac{\Gamma\left(2s(N+1-i)+\sum_{k=i}^{N}\eta_k\right)}{\Gamma\left(2s(N+1-i)+\sum_{k=i+1}^{N}\eta_k\right)}\,.
\nonumber\end{eqnarray} 

\end{lemma}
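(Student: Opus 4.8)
The plan is to exploit the relation $|\mu'\rangle = e^{\rho_R S_+^{\mathrm{tot}}}|\mu''\rangle$ announced just before Lemma~\ref{lem:actW}, pushing the known closed form \eqref{eq:mupp} of $\mu''$ through the exponential of the total raising operator. First I would justify that this vector is indeed the ground state of $H'$: by part~ii) of Proposition~\ref{prop:local} one has $H'' = e^{-\rho_R S_+^{\mathrm{tot}}}H'\,e^{\rho_R S_+^{\mathrm{tot}}}$, hence $H'\,e^{\rho_R S_+^{\mathrm{tot}}}|\mu''\rangle = e^{\rho_R S_+^{\mathrm{tot}}}H''|\mu''\rangle = 0$, so that $\mu'(m) = \langle m|e^{\rho_R S_+^{\mathrm{tot}}}|\mu''\rangle$ and the task reduces to a matrix-element computation.

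Next, since $S_+^{\mathrm{tot}} = \sum_{i=1}^N S_+^{[i]}$ and generators at different sites commute, the exponential factorizes site by site, $e^{\rho_R S_+^{\mathrm{tot}}} = \prod_{i=1}^N e^{\rho_R S_+^{[i]}}$. The single-site matrix element I would read off from the expansion already used in the proof of Lemma~\ref{lemma:rotation with S+}, namely $e^{\alpha S_+}|n\rangle = \sum_{j\ge 0}\frac{\alpha^j}{j!}\frac{\Gamma(n+2s+j)}{\Gamma(n+2s)}|n+j\rangle$, which gives
\begin{equation}
\langle m_i|e^{\rho_R S_+^{[i]}}|\eta_i\rangle = \frac{\rho_R^{\,m_i-\eta_i}}{(m_i-\eta_i)!}\frac{\Gamma(m_i+2s)}{\Gamma(\eta_i+2s)}\,\mathds{1}_{\{\eta_i\le m_i\}}\,.
\end{equation}
Inserting a resolution of the identity $\sum_{\eta\in\mathscr{C}_N}|\eta\rangle\langle\eta|$ then yields $\mu'(m) = \sum_{\eta}\big(\prod_{i=1}^N\langle m_i|e^{\rho_R S_+^{[i]}}|\eta_i\rangle\big)\mu''(\eta)$, and collecting the powers $\prod_i\rho_R^{\,m_i-\eta_i} = \rho_R^{\,|m|-|\eta|}$ together with the factor $(\rho_L-\rho_R)^{|\eta|}$ from \eqref{eq:mupp} (recall $|\eta|$ there plays the role of the total particle number) produces precisely the prefactor $\rho_R^{\,|m|-|\eta|}(\rho_L-\rho_R)^{|\eta|}$ of \eqref{eq:muprime}.

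The remaining work is pure bookkeeping of Gamma functions. The product of $\frac{\Gamma(m_i+2s)}{\Gamma(\eta_i+2s)}$ from the matrix element with the factor $\frac{\Gamma(2s+\eta_i)}{\Gamma(2s)\Gamma(1+\eta_i)}$ appearing in $\mu''(\eta)$ simplifies to $\frac{\Gamma(m_i+2s)}{\Gamma(2s)\,\eta_i!}$, while the factorials combine as $\frac{1}{(m_i-\eta_i)!}\cdot\frac{1}{\eta_i!}$, reproducing the $\frac{1}{\eta_i!(m_i-\eta_i)!}\frac{\Gamma(m_i+2s)}{\Gamma(2s)}$ of \eqref{eq:muprime}; the site-dependent ratio $\frac{\Gamma(2s(N+1-i)+\sum_{k=i}^N\eta_k)}{\Gamma(2s(N+1-i)+\sum_{k=i+1}^N\eta_k)}$ and the global factor $\frac{\Gamma(2s(N+1))}{\Gamma(|\eta|+2s(N+1))}$ carry over unchanged. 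Matching these against \eqref{eq:muprime} concludes the argument. I expect no genuine obstacle here: the only point requiring a little care is that the indicator $\mathds{1}_{\{\eta_i\le m_i\}}$ is what renders the nominally infinite sum over $\eta\in\mathscr{C}_N$ effectively finite and makes it consistent with the factor $1/(m_i-\eta_i)!$ in the stated formula, understood as vanishing when $\eta_i>m_i$.
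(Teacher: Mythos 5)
Your proposal is correct and follows essentially the same route as the paper: expand $|\mu'\rangle=e^{\rho_R S_+^{\mathrm{tot}}}|\mu''\rangle$ over a resolution of the identity, compute the (site-factorized) matrix elements $\langle m|e^{\rho_R S_+^{\mathrm{tot}}}|\eta\rangle = \rho_R^{|m|-|\eta|}\prod_{i=1}^N \frac{1}{(m_i-\eta_i)!}\frac{\Gamma(m_i+2s)}{\Gamma(\eta_i+2s)}$, and combine with \eqref{eq:mupp}. Your extra opening step, verifying via Proposition~\ref{prop:local}~ii) that $e^{\rho_R S_+^{\mathrm{tot}}}|\mu''\rangle$ is indeed annihilated by $H'$, is a point the paper leaves implicit (stated only in the section preamble), but it does not alter the substance of the argument.
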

\begin{proof}
Recalling that $|\mu'\rangle=\exp[{\rho_R S_+^{\mathrm{tot}}}]|\mu''\rangle$ we have
\be
\mu'(m) = \sum_{\eta\in {\mathscr C}_N}  \langle m |\exp[{\rho_R S_+^{\mathrm{tot}}}]|\eta\rangle \, \mu''(\eta)\,.
\ee
Inserting the expression \eqref{eq:mupp} for $\mu''$ and using the relation
\begin{equation}
\langle  m|e^{\rho_R S_+^{\mathrm{tot}}} | \eta\rangle = \rho_R^{|m|-|\eta|}\prod_{i=1}^N \frac{1}{(m_i-\eta_i)!} \frac{\Gamma(m_i+2s)}{\Gamma(\eta_i+2s)}
\end{equation}  
we immediately find \eqref{eq:muprime}.

\end{proof}

\begin{lemma}[Factorial moments from the ground state $\mu'$]
\label{lem:factorial-mom-proof}
For a multi-index $\xi = (\xi_1,\ldots,\xi_N)\in{\mathscr C}_N$, the scaled factorial moments of order $|\xi| = \sum_{i=1}^N \xi_i$ of the non-equilibrium steady state are given by
\begin{eqnarray}
\label{eq:facmom-mup}
G(\xi)
&=& 
\sum_{n=0}^{|\xi|}\rho_R^{|\xi|-n}(\rho_L-\rho_R)^{n}\frac{\Gamma(2s (N+1))}{\Gamma(n+2s(N+1))} \times\\
& &\qquad\qquad
\sum_{\underset{\eta_1+\ldots+\eta_{\scalebox{0.5} N}=n}{\eta\in\mathscr{C}_N}} \prod_{i=1}^N{\xi_i \choose \eta_i}\frac{\Gamma\left(2s(N+1-i)+\sum_{k=i}^{N}\eta_k\right)}{\Gamma\left(2s(N+1-i)+\sum_{k=i+1}^{N}\eta_k\right)}
\nonumber\end{eqnarray} 
\end{lemma}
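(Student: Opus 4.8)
The plan is to recognize that the scaled factorial moments are the expectation with respect to $\mu$ of a falling-factorial observable, and that this observable can be written purely in terms of the lowering generators $S_-^{[i]}$, so that it pulls back cleanly through the transformation $|\mu\rangle = e^{-S_-^{\mathrm{tot}}}|\mu'\rangle$ relating the steady state to the ground state $\mu'$ of $H'$. Concretely, from Definition \ref{def:fac} one has $G(\xi) = \langle F_\xi | \mu\rangle$ with the covector
\[
\langle F_\xi| = \prod_{i=1}^N \frac{\Gamma(2s)}{\Gamma(2s+\xi_i)} \sum_{m\in\mathscr{C}_N}\left(\prod_{i=1}^N\frac{m_i!}{(m_i-\xi_i)!}\right)\langle m|.
\]
First I would observe, using the left action $\langle n|S_-=(n+1)\langle n+1|$ recorded in the proof of Lemma \ref{lemma:inter}, that at a single site $\langle\mathbf 1|(S_-)^{\xi_i} = \sum_{n\ge 0}\tfrac{n!}{(n-\xi_i)!}\langle n|$, where $\langle\mathbf 1|=\sum_{n}\langle n|$. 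Hence $\langle F_\xi| = \prod_i \tfrac{\Gamma(2s)}{\Gamma(2s+\xi_i)}\,\langle\mathbf 1|\prod_i (S_-^{[i]})^{\xi_i}$.

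The decisive step is the single-site identity
\[
\langle\mathbf 1|(S_-)^{\xi}\,e^{-S_-} = \xi!\,\langle \xi|.
\]
I would prove it by expanding $\langle n|e^{-S_-}=\sum_{m\ge n}(-1)^{m-n}\binom{m}{n}\langle m|$ and collapsing the resulting alternating sum $\sum_{n=\xi}^{m}\tfrac{(-1)^{m-n}}{(n-\xi)!(m-n)!}= \tfrac{1}{(m-\xi)!}(1-1)^{m-\xi}$ to the Kronecker delta $\delta_{m,\xi}$. Since the $S_-^{[i]}$ commute across sites and $|\mu\rangle = e^{-S_-^{\mathrm{tot}}}|\mu'\rangle = \prod_i e^{-S_-^{[i]}}|\mu'\rangle$, applying this identity at each site yields the clean pointwise relation
\[
G(\xi) = \left(\prod_{i=1}^N\frac{\xi_i!\,\Gamma(2s)}{\Gamma(2s+\xi_i)}\right)\mu'(\xi).
\]

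To finish, I would substitute the explicit expression for $\mu'(\xi)$ from Lemma \ref{lem:gsHprime}: the factor $\Gamma(\xi_i+2s)/\Gamma(2s)$ appearing there cancels against $\Gamma(2s)/\Gamma(2s+\xi_i)$, while $\xi_i!/(\eta_i!(\xi_i-\eta_i)!)=\binom{\xi_i}{\eta_i}$, so the summand becomes exactly that of \eqref{eq:facmom-mup}; regrouping the sum over $\eta\in\mathscr{C}_N$ according to $n=|\eta|$ then produces the stated formula.

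The main obstacle is precisely the single-site collapse $\langle\mathbf 1|(S_-)^{\xi}e^{-S_-}=\xi!\langle\xi|$, i.e. the observation that the falling-factorial observable, when transported back through $e^{-S_-^{\mathrm{tot}}}$, reduces the full average over $\mu$ to a single evaluation of $\mu'$ at the point $\xi$; everything after that is bookkeeping of Gamma factors. A minor point to address is convergence of the infinite sums defining the action of $\langle\mathbf 1|$ on $|\mu\rangle$, which is guaranteed since $\mu$ has negative-binomial-type tails and hence finite factorial moments of every order.
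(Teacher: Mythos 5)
Your proof is correct and takes essentially the same route as the paper: both reduce the claim to the pointwise relation $G(\xi) = \bigl(\prod_{i=1}^N \xi_i!\,\Gamma(2s)/\Gamma(2s+\xi_i)\bigr)\,\mu'(\xi)$ (the paper's \eqref{eq:gdmu}, with $d^{\mathrm{tot}}$ evaluated at $(\xi,\xi)$) and then substitute Lemma~\ref{lem:gsHprime} and regroup the sum over $\eta$ by $n=|\eta|$. The only difference is how that relation is justified --- the paper cancels the $e^{S_+^{\mathrm{tot}}}$ in the algebraic form of the observable against the transpose of $e^{-S_-^{\mathrm{tot}}}$ via the conjugation \eqref{eq:trans}, whereas you verify the equivalent single-site collapse $\langle\mathbf 1|(S_-)^{\xi}e^{-S_-}=\xi!\,\langle\xi|$ by a direct binomial computation --- a cosmetic rather than structural distinction.
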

\begin{proof}
We start by ``opening'' the expression of $\mu'$ in \eqref{eq:muprime}.
We slice the sum $\sum_{\eta\in\mathscr{C}_N}$ over all configurations $\eta=(\eta_1,\ldots,\eta_N)$ with a fixed number of particles
$|\eta|={\eta_1+\ldots+\eta_N}=n$, where $n$ can range from $0$ to $|m|$ as a consequence of the factors $(m_i-\eta_i)!$ in the denominator of \eqref{eq:muprime}.
We thus get
\begin{eqnarray}
\label{eq:mup}
\mu'(m) 
&=& 
\sum_{n=0}^{|m|}\rho_R^{|m|-n}(\rho_L-\rho_R)^{n}\frac{\Gamma(2s (N+1))}{\Gamma(n+2s(N+1))} \times\\
& &\qquad\qquad
\sum_{\underset{\eta_1+\ldots+\eta_{\scalebox{0.5} N}=n}{\eta\in\mathscr{C}_N}} \prod_{i=1}^N\frac{1}{\eta_i! (m_i-\eta_i)!} \frac{\Gamma(m_i+2s)}{\Gamma(2s)}\frac{\Gamma\left(2s(N+1-i)+\sum_{k=i}^{N}\eta_k\right)}{\Gamma\left(2s(N+1-i)+\sum_{k=i+1}^{N}\eta_k\right)}
\nonumber\end{eqnarray} 
Next we observe that
\be
\label{eq:gdmu}
G(\xi) = d^{\rm tot} (\xi,\xi) \mu'(\xi)
\ee
where 
\begin{equation}
 d^{\,\rm tot}=\prod_{i=1}^{N}d^{[i]}\,,
\end{equation} 
with $d^{[i]}$ given in \eqref{eq:r} and satisfying \eqref{eq:trans}.
This follows from the  algebraic rewriting of the definition \eqref{eq:fac-moments} of the scaled factorial moments, i.e
\begin{equation}
G(\xi) =\langle \mu | \prod_{i=1}^N d^{[i]}e^{S_+^{[i]}} | \xi\rangle
\end{equation}
and the relation $|\mu \rangle = e^{-S_-^{\mathrm tot}} |\mu'\rangle$
which is implied by \eqref{eq:h prime transform}.
The proof of \eqref{eq:facmom-mup} is obtained by combining \eqref{eq:mup} and \eqref{eq:gdmu}.

\end{proof}

\begin{proof}[{\bf Proof of Theorem \ref{theo:factorial}}]
The proof of \eqref{ed2} and \eqref{eq:gfunc} follows now by simplifying the Gamma functions in \eqref{eq:facmom-mup}. 
Indeed, for integers $\ell_1,\ell_2\in\N$ a ratio of Gamma functions 
can be written as a  product
\begin{equation}
\label{eq:gammmaratio1}
\frac{\Gamma(\ell_1+\ell_2)}{\Gamma(\ell_1)} = \prod_{j=1}^{\ell_2} (\ell_1-j+\ell_2)\;.   
\end{equation}
Therefore the fractions of Gamma functions inside the sum in \eqref{eq:facmom-mup} can be written as 
\begin{equation}\label{eq:finprod} \frac{\Gamma\left(2s(N-i+1)+\sum_{k=i}^{N}\eta_k\right)}{\Gamma\left(2s(N-i+1)+\sum_{k=i+1}^{N}\eta_k\right)}= \prod_{j=1}^{\eta_i}\left(2s(N-i+1)-j+\sum_{k=i}^{N}\eta_k\right)\,.
\end{equation}
Further noting that $n=|\eta|$, the fraction of Gamma functions outside the sum in \eqref{eq:facmom-mup} can be written as a telescopic product so that 
\begin{equation}
 \frac{\Gamma(2s(N+1)}{\Gamma(2s(N+1)+|\eta|)}= \prod_{i=1}^N \frac{\Gamma(2s(N+1)+\sum_{k=i+1}^{N}\eta_k)}{\Gamma(2s(N+1)+ \sum_{k=i}^{N}\eta_k)}=\prod_{i=1}^N\prod_{j=1}^{\eta_i}\frac{1}{2s(N+1)-j+\sum_{k=i}^{N}\eta_k}\,.
\end{equation} 
Inserting the last two expressions into \eqref{eq:facmom-mup} we arrive at \eqref{ed2} and \eqref{eq:gfunc}.

If the spin takes an half-integer value  $s\in \N/2$ then the expression for $g_{\xi}$ in \eqref{eq:gfunc} 
can be simplified in the same spirit. We now write
\begin{eqnarray}
&& \frac{\Gamma(2s(N+1))}{\Gamma(2s(N+1)+|\eta|)}\prod_{i=1}^N  \frac{\Gamma\left(2s(N+1-i)+\sum_{k=i}^{N}\eta_k\right)}{\Gamma\left(2s(N+1-i)+\sum_{k=i+1}^{N}\eta_k\right)}
 \nonumber \\
&& \qquad\qquad = \frac
{\Gamma(2s(N+1))}
{\Gamma(2s)  } 
 \frac
{\prod_{i=1}^N\Gamma\left(2s(N+1-i)+\sum_{k=i}^{N}\eta_k\right)}
{ \prod_{i=1}^{N }\Gamma\left(2s(N+2-i)+\sum_{k=i}^{N}\eta_k\right) }\,.
\end{eqnarray}
Applying again \eqref{eq:gammmaratio1} this produces 
\begin{equation}
\label{eq: express-onexxx}
 g_{\xi}(n)=
\sum_{\underset{\eta_1+\ldots+\eta_{\scalebox{0.5} N}=n}{(\eta_1,\ldots,\eta_{\scalebox{0.5} N}) \in \N_0^{\scalebox{0.5} N}}}\prod_{i=1}^N \binom{\xi_i}{\eta_i} \prod_{j=1}^{2s}\frac{2s(N+2-i) - j}{2s(N+2-i) - j  +\sum_{k=i}^{N}\eta_k}\,,
\end{equation}
which justifies the expression 
\eqref{eq: express-one} for $s=1/2$.
\end{proof}

\subsection{Proof of Corollary \ref{cor:fac-mom}: factorial moments in coordinate form.}
\label{sec:proof-coordinate}

The key to the proof of Corollary~\ref{cor:fac-mom}
is the following lemma, which introduces Young diagrams to describe the change of coordinates between occupations and ordered positions.
\begin{lemma}[Functions on Young diagrams]
\label{lemma:fct-young}
Let $m\in\mathscr{C}_N$ be a configuration of particle numbers at each site
and let $x=(x_1,\ldots,x_{|m|})$ with $1\le x_1\le\ldots\le x_{|m|}\le N$
be the corresponding vector of particle positions. Then
\begin{equation}
\label{eq:occ-to-pos}
 \prod_{k=1}^N\prod_{i=1}^{m_k}\frac{2s(N+1-k)-i+\sum_{l=k}^{N}m_l}{2s(N+1)-i+\sum_{l=k}^{N}m_l}=\prod_{i=1}^{|m|}\frac{|m|-i+2s(N+1-x_{i})}{|m|-i+2s(N+1)}\,.
\end{equation} 
\end{lemma}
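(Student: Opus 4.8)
The plan is to prove the identity by reorganising both products so that they factorise over lattice sites in exactly the same way. First I would introduce the tail sums
\[
M_k = \sum_{l=k}^{N} m_l, \qquad k=1,\ldots,N+1,
\]
so that $M_1=|m|$, $M_{N+1}=0$ and $M_k-M_{k+1}=m_k$. On the left-hand side, for each fixed $k$ the inner product over $i=1,\ldots,m_k$ can be reindexed by setting $j=M_k-i$; since both the numerator and the denominator of the $i$-th factor contain $M_k-i=j$, the factor becomes $\frac{2s(N+1-k)+j}{2s(N+1)+j}$, and as $i$ runs over $1,\ldots,m_k$ the index $j$ runs over $M_{k+1},\ldots,M_k-1$. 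Hence the left-hand side equals
\[
\prod_{k=1}^{N}\ \prod_{j=M_{k+1}}^{M_k-1}\frac{2s(N+1-k)+j}{2s(N+1)+j}.
\]

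The main step is to bring the right-hand side to this same form, and this is precisely the occupation-to-position change of variables that the Young-diagram picture is meant to encode. The key combinatorial observation is that in the ordered list $x_1\le\cdots\le x_{|m|}$ one has $x_i=k$ exactly for the block of indices $|m|-M_k< i\le |m|-M_{k+1}$: indeed there are $|m|-M_k=\sum_{l<k}m_l$ particles sitting at sites strictly below $k$, followed by the $m_k$ particles at site $k$. Substituting $j=|m|-i$ turns this block into $\{\,j : M_{k+1}\le j\le M_k-1\,\}$, and on it $x_i=k$, so the corresponding right-hand factor reads $\frac{j+2s(N+1-k)}{j+2s(N+1)}$. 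Collecting the factors site by site therefore reproduces the displayed product above.

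Since the two products now agree term by term, the lemma follows. The only point that requires care is to check that this change of variables partitions the index set $\{0,1,\ldots,|m|-1\}$ into the consecutive blocks $[M_{k+1},M_k-1]$ without overlaps or gaps, so that each factor is counted exactly once and empty blocks (corresponding to sites with $m_k=0$) contribute trivial factors. I expect this bijective bookkeeping to be the only genuine obstacle, and once the partition is made explicit the matching is automatic, with no analytic difficulty beyond it.
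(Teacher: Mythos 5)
Your proof is correct, and it takes a genuinely different (and more elementary) route than the paper. The paper proves the identity by \emph{expanding} each ratio into a telescoping product of ``box'' factors
\begin{equation*}
f_s(i,j)=\frac{2s(N+1-j)-i+|m|}{2s(N+2-j)-i+|m|}\,,\qquad
\frac{|m|-i+2s(N+1-x_i)}{|m|-i+2s(N+1)}=\prod_{j=1}^{x_i}f_s(i,j)\,,
\end{equation*}
so that both sides become products over the boxes $(i,j)$ of a Young diagram; the matching is then a bijection between boxes, using the same block structure of the ordered positions that you use. You instead keep each factor intact and perform a one-dimensional change of index on both sides: on the left $j=M_k-i$ turns the $k$-th inner product into $\prod_{j=M_{k+1}}^{M_k-1}\frac{2s(N+1-k)+j}{2s(N+1)+j}$, and on the right $j=|m|-i$ together with the observation that $x_i=k$ precisely when $M_{k+1}\le |m|-i\le M_k-1$ yields the identical expression, the blocks $[M_{k+1},M_k-1]$ partitioning $\{0,\ldots,|m|-1\}$ exactly as you note (with empty blocks at vacant sites). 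Your bookkeeping is sound: the block boundaries, the direction of the reindexing, and the handling of $m_k=0$ all check out, so the two products agree factor by factor. What your shortcut gives up is the machinery the paper's proof is designed to produce: the box factors $f_s(i,j)$ and the diagram decomposition are reused immediately afterwards in the proof of Corollary~\ref{cor:fac-mom}, where one must sum over Young \emph{subdiagrams} with a fixed number of columns and count coincident shapes via binomial factors; your direct reindexing proves the lemma but would have to be supplemented by something like the paper's box formalism (or an equivalent argument) to carry out that later step.
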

\begin{proof}
We start by noticing that we can rewrite the right hand side as a telescopicing product 
\begin{equation}
\prod_{i=1}^{|m|}\frac{|m|-i+2s(N+1-x_{i})}{|m|-i+2s(N+1)}=\prod_{i=1}^{|m|}\prod_{j=1}^{x_i}f_s(i,j)
\end{equation} 
with
\begin{equation}
 f_s(i,j)=\frac{2s(N+1-j)-i+|m|}{2s(N+2-j)-i+|m|}.
\end{equation} 
Similarly for the left hand side we write
\begin{equation}
\begin{split}
 \prod_{k=1}^N\prod_{i=1}^{m_k}\frac{2s(N+1-k)-i+\sum_{l=k}^{N}m_l}{2s(N+1)-i+\sum_{l=k}^{N}m_l}&=\prod_{k=1}^N\prod_{i=1}^{m_k}\prod_{j=1}^k\frac{2s(N+1-j)-i+\sum_{l=k}^{N}m_l}{2s(N+2-j)-i+\sum_{l=k}^{N}m_l}\\
 &=\prod_{k=1}^N\prod_{i=1}^{m_k}\prod_{j=1}^k\frac{2s(N+1-j)-i+|m|-\sum_{l=1}^{k-1}m_l}{2s(N+2-j)-i+|m|-\sum_{l=1}^{k-1}m_l}\\
 &=\prod_{k=1}^N\prod_{i=1}^{m_k}\prod_{j=1}^{k}f_s(i+\sum_{l=1}^{k-1}m_l,j)\;.
 \end{split}
\end{equation} 
Recall that  $x=(x_1,\ldots,x_{|m|})$ with 
$1\le x_1\le\ldots\le x_{|m|}\le N$
denotes the set of ordered positions of the particles. 
So we can write
\begin{equation}
x= (\underbrace{1,\ldots,1}_{m_{1}},\underbrace{2,\ldots,2}_{m_{2}},\ldots,\underbrace{N,\ldots,N}_{m_{N}})\,,
\end{equation} 
see Figure~\ref{fig:tab} where a Young diagram corresponding to certain positions  and occupations is depicted.
As a consequence  we find that for any $f_s(i,j)$ assigned to a box $(i,j)$ in the $i$th column and $j$th row of the Young diagram we get
\begin{equation}
\begin{split}
 \prod_{i=1}^{|m|}\prod_{j=1}^{x_i}f_s(i,j)&=\prod_{i=1}^{m_1}\prod_{j=1}^{1}f_s(i,j)\prod_{i=m_1+1}^{m_1+m_2}\prod_{j=1}^{2}f_s(i,j)\cdots\prod_{i=m_1+\ldots+m_{N-1}+1}^{m_1+\ldots+m_{N}}\prod_{j=1}^{N}f_s(i,j)\\&=\prod_{k=1}^N\prod_{i=1+\sum_{l=1}^{k-1} m_l}^{\sum_{l=1}^km_l}\prod_{j=1}^{k}f_s(i,j)\\
 &=\prod_{k=1}^N\prod_{i=1}^{m_k}\prod_{j=1}^{k}f_s(i+\sum_{l=1}^{k-1}m_l,j)\,.
 \end{split}
 \end{equation} 
This proves \eqref{eq:occ-to-pos}.
We remark that, up to a factor $(\rho_{L}-\rho_{R})^{|m|}$, \eqref{eq:occ-to-pos} is exactly the value of the ground state $\mu''$. 
\end{proof}
\begin{figure}
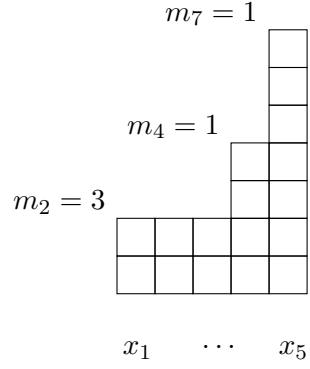

\begin{center}
  \parti
\end{center}
\caption{Example of a Young diagram corresponding to the positions $x=(2,2,2,4,7)$ and occupations $m=(0,3,0,1,0,0,1)$.}
\label{fig:tab}
\end{figure}

\begin{proof}[{\bf Proof of Corollary~\ref{cor:fac-mom}}]
The proof is obtained by going to the coordinate description of the dual process, i.e.
the multi-index $(\xi_1,\ldots,\xi_N)\in\N_0^N$ is replaced by the position of the dual particles
$x_1\le x_2 \le \ldots \le x_{|\xi|}$ .

\begin{figure}
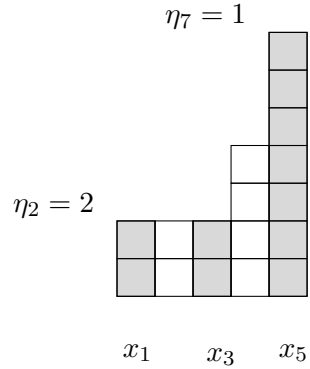

\begin{center}
  \subparti
\end{center}
\caption{Example of a Young subdiagram corresponding to the positions $\hat x=(2,2,7)$ and occupations $\eta=(0,2,0,0,0,0,1)$ }
\label{fig:tab2}
\end{figure}

Consider a Young subdiagram  $\hat x_{n}=(x_{\alpha_1},\ldots,x_{\alpha_n})$ with $n$ columns and  $1\le \alpha_1<\ldots<\alpha_n\le | \xi|$ of a Young diagram labelled by  $x$.   Let $\eta$ denote the occupation numbers of the subdiagram with $\eta_i\le \xi_i$. The number of particles corresponds to the number of colums in the diagram and we can write  $n=|\eta|$, see Figure~\ref{fig:tab2} for an example of a Young subdiagram. With these definitions we obtain as above
\begin{equation}
\prod_{i=1}^{|\eta|}\prod_{j=1}^{x_{\alpha_i}}f_s(i,j)=\prod_{i=1}^{\eta_k}\prod_{j=1}^k f_s(i+\sum_{l=1}^{k-1}\eta_l,j)\,.
\end{equation} 
Now we   sum over all Young subdiagrams with $n$ columns on both sides of the equation.    We remark that in position variables subdiagrams of same shape give same contribution, i.e. for different $i_\alpha$ but same $x_{i_\alpha}$. This is obvious for the occupation variables but leads to a combinatorial prefactor.  We find
\begin{equation}
 \sum_{1\leq \alpha_1< \ldots<\alpha_{n}\leq |  \xi|}\prod_{i=1}^{|\eta|}\prod_{j=1}^{x_{\alpha_i}}f(i,j)=\sum_{{\underset{\eta_1+\ldots+\eta_{\scalebox{0.5} N}=n}{(\eta_1,\ldots,\eta_{\scalebox{0.5} N}) \in \N_0^{\scalebox{0.5} N}}}}\prod_{k=1}^N\binom{\xi_k}{\eta_k}  \prod_{i=1}^{\eta_k}\prod_{j=1}^k f(i+\sum_{l=1}^{k-1}\eta_l,j)\,.
\end{equation} 
This concludes the proof.

\end{proof}

\subsection{Proof of Corollary \ref{theo:steady}:  stationary state}
\label{sec:closed}

\begin{proof}[{\bf Proof of Corollary \ref{theo:steady}}]
The expression of the stationary measure in \eqref{eq:stead1} - \eqref{eq:stead2} is obtained by applying the inversion formula \eqref{eq:measure} to Theorem \ref{theo:factorial}.
Equivalently, from the algebraic perspective, the stationary state is given by 
\be
\mu(m) = \sum_{\xi\in {\mathscr C}_N}  \langle m |\exp[{-S_-^{\mathrm{tot}}}]|\xi\rangle \, \mu'(\xi)\,.
\ee
Inserting the expression \eqref{eq:muprime} for $\mu'$ and using the relation
\begin{equation}
\langle  m|e^{- S_-^{\mathrm{tot}}} | \xi\rangle = (-1)^{|\xi|-|m|}\prod_{i=1}^N \frac{1}{(\xi_i-m_i)!} \frac{\xi_i!}{m_i!}
\end{equation}  
we  find \eqref{eq:stead1} and \eqref{eq:stead2}.
\end{proof}

\subsection{Proof of Corollary \ref{theo:loc}: local equilibrium}
\label{sec:localeq}

\begin{proof}[{\bf Proof of Corollary \ref{theo:loc}}]
To prove item i) it is enough to study the convergence of the scaled factorial moments.
For all configurations $(\xi_1,\ldots,\xi_N)\in\N^N$ made
of $|\xi|$ dual particles located at positions $1\le x_1 \le x_2 \le \ldots \le x_{|{ \xi}|}\le N$ 
and for $u\in(0,1)$, we define the translated configuration $(\xi_1^u,\ldots,\xi_N^u)\in\N^N$ with
$|\xi|$ dual particles located at positions $x_1 + [uN]\le x_2+ [uN] \le \ldots \le x_{|{ \xi}|}+ [uN]$.
Then we need to prove that for all configurations
\be
\lim_{N\to\infty} G(\xi_1^u,\ldots,\xi_N^u) = [\rho(u)]^{|\xi|}
\ee
with $\rho(u)$ given in \eqref{eq:linear}.
This follows from Corollary \ref{cor:fac-mom}, for the explicit formula for $g_{\xi}(n)$ in \eqref{eq: express-two} gives
\begin{equation}
\lim_{N\to\infty} g_{ \xi^u}(n)= {|\xi| \choose n} (1-u)^n
\end{equation} 
and therefore \eqref{ed} yields
\begin{eqnarray}
\lim_{N\to\infty} G(\xi_1^u,\ldots,\xi_N^u) 
& = &
\sum_{n=0}^{|\xi|}\rho_R^{|\boldsymbol  \xi|-n}(\rho_L-\rho_R)^n {|\xi| \choose n} (1-u)^n \nonumber \\
& = & 
 [\rho_R + (\rho_L-\rho_R)(1-u)]^{|\xi|} \\
& = & 
[\rho(u)]^{|\xi|}
\end{eqnarray}
The proof of item ii) follows from the definition
\eqref{eq:currbond} of the current. Indeed, Theorem \ref{theo:factorial},
specialized to one dual particle yields that the average current is the same for all bonds and is given by
 \be
 J_{i,i+1} = - 2s \frac{\rho_R-\rho_L}{L+1}\,, \qquad\qquad \forall i,i+1\,.
 \ee
Fick's law \eqref{eq:fick} then obviously follows.
\end{proof}

\paragraph{Acknowledgements} 
We thank Jorge Kurchan for several enlightening discussions.  RF  thanks  Istv\'an M. Sz\'ecs\'enyi for very useful comments.
RF is supported by the German research foundation (Deutsche Forschungsgemeinschaft DFG) Research Fellowships Programme 416527151.

{
\small
\bibliographystyle{utphys2}
\bibliography{refs}
}

\end{document}